\documentclass[acmsmall,screen]{acmart}
\settopmatter{printfolios=true,printccs=true}
\pdfoutput=1 
\newif\ifappendix\appendixtrue
\usepackage{pervasives}

\setcopyright{rightsretained}
\acmPrice{}
\acmDOI{10.1145/3360575}
\acmYear{2019}
\copyrightyear{2019}
\acmJournal{PACMPL}
\acmVolume{3}
\acmNumber{OOPSLA}
\acmArticle{149}
\acmMonth{10}

\keywords{serverless computing, distributed computing, formal language semantics}

\bibliographystyle{ACM-Reference-Format}
\citestyle{acmauthoryear}   

\begin{document}

\title{Formal Foundations of Serverless Computing}

\author{Abhinav Jangda}
\affiliation{\institution{University of Massachusetts Amherst}\country{United States}}
\author{Donald Pinckney}
\affiliation{\institution{University of Massachusetts Amherst}\country{United States}}
\author{Yuriy Brun}
\affiliation{\institution{University of Massachusetts Amherst}\country{United States}}
\author{Arjun Guha}
\affiliation{\institution{University of Massachusetts Amherst}\country{United States}}

\begin{abstract}
\emph{Serverless computing} (also known as \emph{functions as a service}) is a
new cloud computing abstraction that makes it easier to write robust,
large-scale web services. In serverless computing, programmers write what are
called \emph{serverless functions}, which are programs that respond to external events.
When demand for the serverless function spikes, the platform automatically
allocates additional hardware and manages load-balancing; when
demand falls, the platform silently deallocates idle resources; and when the
platform detects a failure, it transparently retries affected requests.
In
2014, Amazon Web Services introduced the first serverless platform, \emph{AWS
Lambda}, and similar abstractions are now available on all major cloud computing
platforms.

Unfortunately, the serverless computing abstraction exposes several low-level
operational details that make it hard for programmers to write and reason about
their code. This paper sheds light on this problem by presenting \pfname,
an operational semantics of the essence of serverless computing. Despite
being a small (half a page) core calculus, \pfname models all the
low-level details that serverless functions can observe. To show that
\pfname is useful, we present three applications. First, to ease
reasoning about code, we present a simplified \emph{\pnname semantics} of
serverless execution and precisely characterize when the \pnname semantics and
\pfname coincide. Second, we augment \pfname with a key-value store to
allow reasoning about stateful serverless functions.
Third, since a handful of serverless platforms support serverless
function composition, we show how to extend \pfname with a composition language
and show that our implementation can outperform prior work.
\end{abstract}

\begin{CCSXML}
<ccs2012>
<concept>
<concept_id>10011007.10011006.10011008.10011009.10010177</concept_id>
<concept_desc>Software and its engineering~Distributed programming languages</concept_desc>
<concept_significance>500</concept_significance>
</concept>
</ccs2012>
\end{CCSXML}

\ccsdesc[500]{Software and its engineering~Distributed programming languages}

\maketitle

\section{Introduction}

\emph{Serverless computing}, also known as \emph{functions as a service}, is a new
approach to cloud computing that allows programmers to run event-driven
functions in the cloud without the need to manage resource allocation or
configure the runtime environment. Instead, when a programmer deploys a
\emph{serverless function}, the cloud platform automatically manages
dependencies, compiles code, configures the operating system, and manages
resource allocation.
Unlike virtual machines or containers, which require low-level system and
resource management, serverless computing allows programmers to focus entirely
on application code. If demand for the function suddenly increases, the cloud
platform may transparently load another instance of the function on a new
machine and manage load-balancing without programmer intervention.
Conversely, if demand for the function falls, the platform will transparently
terminate underutilized instances. In fact, the cloud provider may
shutdown all running instances of the function if there is no demand for an
extended period of time. Since the platform completely manages the operating
system and resource allocation in this manner, serverless computing is a
language-level abstraction for programming the cloud.

An economic advantage of serverless functions is that they only incur costs for
the time spent processing events. Therefore, a function that is never invoked
incurs no cost. By contrast, virtual machines incur costs when
they are idle, and they need idle capacity to smoothly handle unexpected
increases in demand. Serverless computing allows cloud platforms to
more easily optimize resource allocation across several customers, which increases hardware
utilization and lowers costs, \eg, by lowering energy consumption.

Amazon Web Services introduced the first serverless computing platform,
\emph{AWS Lambda}, in 2014, and similar abstractions are now available from all major
cloud providers~\cite{istemi:sand, hendrickson:openlambda, google-cloud-functions,
azure-functions, openwhisk, openfaas}.
Serverless computing has seen rapid adoption~\cite{conway:cncf-2017-survey},
and programmers now often use serverless computing to write short, event-driven
computations, such as web services, backends for mobile apps, and aggregators
for IoT devices.

In the research community, there is burgeoning interest in developing
new programming abstractions for serverless computing, including
abstractions for big data processing~\cite{jonas:pywren, fouladi:excamera,
ao:sprocket},
modular programming~\cite{baldini:trilemma}, information flow
control~\cite{alpernas:trapeze}, chatbot design~\cite{baudart:chatbots-protection},
and virtual network functions~\cite{singhvi:nf}.
However, serverless computing has several peculiar
characteristics that prior work has not addressed.

\paragraph{Shortcomings of serverless computing.}

The serverless computing abstraction, despite its many advantages, exposes
several low-level operational details that make it hard for programmers to
write and reason about their code. For example, to reduce latency, serverless
platforms try to reuse the same function instance to process multiple
requests. However, this behavior is not transparent, and it is easy to write
a serverless function that produces incorrect results or leaks confidential
data when reused. A related problem is that serverless platforms abruptly
terminate function instances when they are idle, which can lead to data loss
if the programmer is not careful. To tolerate network and system failures,
serverless platforms automatically re-execute functions on different
machines. However, the responsibility falls on the programmer to ensure that their functions
perform correctly when they are re-executed, which may include concurrent
re-execution when a transient failure occurs. These problems are exacerbated
when an application is composed of several functions.

In summary, programmers face three key challenges when writing serverless
functions:

\begin{enumerate}

\item Reasoning about the correctness of serverless functions is hard, because
of the exposed low-level behavior of the underlying serverless platform, which
affects the behavior of programmer-written functions.

\item Interacting with external services, such as cloud-hosted databases, is also
challenging, because failures and retries can be visible to external services,
which increases the likelihood of data errors.

\item Composing and orchestration of serverless functions makes
reasoning even harder because most serverless platforms do not natively
support serverless function composition.

\end{enumerate}

\paragraph{Our contributions.}

This paper presents a formal foundation for serverless computing that makes
progress towards addressing the above three challenges. Based on our
experience with several major serverless computing platforms (Google Cloud
Functions, Apache OpenWhisk, and AWS Lambda), we have developed a detailed,
operational semantics, called \pfname, which
manifests the essential low-level behaviors of these serverless platforms,
including failures, concurrency, function restarts, and instance reuse. The
details of \pfname matter because they are observable by programs, but
programmers find it hard to write code that correctly addresses all of these
behaviors. By elucidating these behaviors, \pfname can guide the development
of programming tools and extensions to serverless platforms.

We design \pfname to simplify reasoning about serverless programs and to be
extensible to aid programmers using the serverless abstraction in more complex
ways. We evaluate \pfname and demonstrate its utility in three ways:

First, reasoning about serverless programs and the complex, low-level
behavior of serverless platforms is hard. We derive a simplified
\emph{\pnname semantics}
of serverless computing, which elides these complex behaviors and helps
programmers reason about their programs.
We precisely characterize when it is safe for a programmer to use the
\pnname semantics instead of \pfname and prove that if a serverless
function satisfies a simple safety property, then there exists a weak
bisimulation between \pfname and the \pnname semantics. This result helps
programmers confidently abstract away the low-level details of serverless
computing. We provide canonical examples of these safety properties and
examples of ill-behaved serverless functions that are unsafe, which programmers
could easily write by mistake without the \pnname semantics.
Our theorem can serve as the foundation for future work on building
serverless functions that are verifiably safe.

Second, we demonstrate that \pfname can compose with models of external
services. External services matter, because the serverless abstraction is quite
limited by itself. However, the interaction between serverless functions and
external services can be quite complicated, due to the low-level behaviors that
serverless functions exhibit. Specifically, we compose \pfname with a model of
a cloud-hosted key-value store with shared state. Using this extension, we
precisely characterize what it means for a serverless function to be
idempotent, which is necessary to avoid corrupting data in the key-value store.
The recipe that we follow to add a key-value store to \pfname could also be
used to augment \pfname with models of other cloud computing services.

Finally, we demonstrate that \pfname can be extended to model richer
serverless programming abstractions, such as  serverless function orchestration. We
extend \pfname to support a \emph{serverless programming language} (\spl),
which has a suite of I/O primitives, data processing operations, and
composition operators that can run safely and efficiently without
the need for operating system isolation mechanisms. To perform operations that
are beyond the scope of \spl, we allow \spl programs to invoke existing
serverless functions as black boxes. We implement \spl, evaluate its
performance, and find that it can outperform a popular alternative in certain
common cases. Using case studies, we show that \spl is expressive and easy to
extend with new features.

We hope that \pfname will be a foundation for further research on
language-based abstractions for serverless computing. The
case studies we present are detailed examples that show how
to design new abstractions and study existing abstractions for serverless
computing, using \pfname.

The rest of this paper is organized as follows.
\Cref{sec:overview} presents an overview of serverless computing, and
a variety of issues that arise when writing serverless code.
\Cref{sec:Semantics} presents \pfname, our formal semantics of serverless
computing. \Cref{safe-semantics} presents a simplified semantics of
serverless computing and proves exactly when it coincides with \pfname.
\cref{sec:Serverless Functions and Cloud Storage} augments \pfname with
a key-value store.
\Cref{spl} and \Cref{sec:eval} extend \pfname with a language
for serverless orchestration. Finally, \Cref{sec:Related Work} discusses related
work and \cref{sec:Contributions} concludes. Our implementation is
available at \url{plasma.cs.umass.edu/lambda-lambda}.

\section{Overview of Serverless Computing}
\label{sec:overview}

\begin{figure}
\lstset{language=JavaScript}
\begin{lstlisting}
let accounts = new Map();
exports.bank = function(req, res) {
  if (req.body.type === 'deposit') {
    accounts.set(req.body.name, req.body.value);
    res.send(true);
  } else if (req.body.type === 'transfer') {
    let { from, to, amnt } = req.body;
    if (accounts.get(from) >= amnt) {
      accounts.set(to, accounts.get(to) + amnt);
      accounts.set(from, accounts.get(from) - amnt);
      res.send(true);
    } else {
      res.send(false); }}};
\end{lstlisting}
\caption{A serverless function for banking that does not address low-level
details of serverless execution. Therefore, it will exhibit several kinds of
faults. The correct implementation is in \cref{working-bank-example}.}
\label{naive-bank-example}
\end{figure}

\begin{figure}

\lstset{language=JavaScript, escapechar=!}
\begin{lstlisting}
let Datastore = require('@google-cloud/datastore');
exports.bank = function(req, res) {
  let ds = new Datastore({ projectId: 'bank-app' });
  let dst = ds.transaction();
  dst.run(function() {
    let tId = ds.key(['Transaction', req.body.transId]); !\label{line:trans-exist-begin}!
    dst.get(tId, function(err, trans) {
      if (err || trans) {
        dst.rollback(function() { res.send(err || trans); }); !\label{line:trans-exist-end}!
      } else if (req.body.type === 'deposit') {
        let to = ds.key(['Account', req.body.to]);
        dst.get(to, function(err, acct) {
          acct.balance += req.body.amount;
          dst.save({ key: to, data: acct });  !\label{line:trans-commit-1-start}!
          dst.save({ key: tId, data: true });
          dst.commit(function() { res.send(true); }); !\label{line:trans-commit-1}!
        });
      } else if (req.body.type === 'transfer') {
        let amnt = req.body.amount;
        let from = ds.key(['Account', req.body.from]);
        let to = ds.key(['Account', req.body.to]);
        dst.get([from, to], function(err, accts) {
          if (accts[0].balance >= amnt) {
            accts[0].balance -= amnt;
            accts[1].balance += amnt;
            dst.save([{ key: from, data: accts[0] }, !\label{line:trans-commit-2-start}!
              { key: to, data: accts[1] }]);
            dst.save({ key: tId, data: true });
            dst.commit(function() { res.send(true); }); !\label{line:trans-commit-2}!
          } else {
            dst.rollback(function() { res.send(false);
          });}});}}});});};
\end{lstlisting}

\caption{A serverless function for banking that addresses instance termination,
concurrency, and idempotence.}
\label{working-bank-example}
\end{figure}

To motivate the need for a formal foundation of serverless computing, consider
the serverless banking function in \cref{naive-bank-example}.\footnote{The examples in
this paper are in JavaScript\,---\,the language that is most widely supported
by serverless platforms\,---\,and are written for Google Cloud Functions.
However, it is easy to port our examples to other languages and serverless
platforms.} This function processes two types of requests: (1)~a request to
deposit new funds into an account and (2)~a request to transform funds from
one account to another, which will fail if the source account has insufficient
funds.

During deployment, the programmer can specify the types of events
that will \emph{trigger} the function, \eg, messages on a message bus, updates
to a database, or web requests to a \textsc{url}. The function receives two
arguments: (1)~the event as a \textsc{json}-formatted object, and (2)~a callback,
which it must call when event processing is complete. The callback allows the
function to run asynchronously, although our example is presently synchronous.
For brevity, we have elided authentication, authorization, error handling, and
most input validation. However, this function suffers several other problems
because it is not properly designed for serverless execution.

\paragraph{Ephemeral state.}

\lstset{language=JavaScript}
The first problem arises after a few minutes of inactivity: all updates to the
\lstinline{accounts} global variable are lost. This problem occurs because the
serverless platform runs the function in an ephemeral container, and silently
shuts down the container when it is idle. The exact timeout depends on overall
load on the cloud provider's infrastructure, and is not known to the programmer.
Moreover, the function does not receive a notification before a shut down
occurs. Similarly, the platform automatically starts a new container when an
event eventually arrives, but all state is lost. Therefore, the function must
serialize all state updates to a persistent store. In serverless environments,
the local disk is also ephemeral, therefore our example function must use a
network-attached database or storage system.

\paragraph{Implicit parallel execution.}

The second problem arises when the function receives multiple events over a
short period of time. When a function is under high load, the serverless
platform transparently starts new \emph{instances} of the function and manages
load-balancing to reduce latency. Each instance runs in isolation (in a
container), there is no guarantee that two events from the same source will
be processed by the same instance, and instances may process events in
parallel. Therefore, a correct implementation must use transactions to
correctly manage this implicit parallelism.

\paragraph{At-least-once execution.}
\looseness-1
The third problem arises when failures occur in the serverless computing
infrastructure. Serverless platforms are distributed systems that are designed
to re-invoke functions when a failure is detected.\footnote{Even if the serverless
platform does not re-invoke functions itself, there are situations where
the external caller will re-invoke it to workaround a transient failure.
For example, this arises when functions are triggered by \textsc{http} requests.}
However, if the failure is
transient, a single event may be processed to completion multiple times.
Most platforms run functions \emph{at least once} in response to a single
event, which can cause problems when functions have side-effects~\cite{azure-messaging,openwhisk-actions,google-cf-exec,amazon-lambda-invoke}.
In our example function, this would duplicate deposits and transfers. We can fix this
problem in several ways. A common approach is to require each request to have a
unique identifier, maintain a consistent log of processed identifiers in the
database, and ignore requests that are already in the log.

\medskip \Cref{working-bank-example} fixes the three problems mentioned above.
This implementation uses a cloud-hosted key-value store for persistence, uses
transactions to address concurrency, and requires each request to have a unique
identifier to support safe re-invocation. This version of the function is
more than twice as long as the original, and requires the programmer to
have a deep understanding of the serverless execution model. The next section
presents an operational semantics of serverless platforms (\pfname) that
succinctly describes its peculiarities. Using \pfname, the rest of the paper
precisely characterizes the kinds of properties that are needed for serverless
functions, such as our example, to be correct.

\section[Lambda Lambda]{Semantics of Serverless Computing}
\label{sec:Semantics}

\begin{figure}[t]
\footnotesize
\begin{subfigure}{0.49\columnwidth}
\(
\begin{array}{@{}l@{}r@{\,}c@{\,}l@{}l}
\multicolumn{5}{@{}l}{\textbf{Serverless Functions}\quad  \langle f, \Sigma, \precvfn, \pstepfn_f, \pinitfn \rangle} \\
\textrm{Functions}        & F        & \pdef & \cdots \\
\textrm{Function name}   	& f        & \in & F \\ 
\textrm{Internal states} 	& \Sigma   & \pdef & \cdots \\
\textrm{Initial state}   	& \pinitfn & \in   & F \rightarrow \Sigma \\
\textrm{Receive event}   	& \precvfn_f & \in   & v \times \Sigma \rightarrow \Sigma \\
\textrm{Internal step}   	& \pstepfn_f & \in   & F \times \Sigma \rightarrow \Sigma \times t & \textrm{With effect $t$}\\
\textrm{Values}          	& v        & \pdef & \cdots                      & \textrm{\textsc{json}, \textsc{http}, etc.} \\
\textrm{Commands}	        & t		   & \pdef & \varepsilon \\
                            &		   & \mid  & \ptreturn{v} & \textrm{Return value} \\[5em]
\multicolumn{5}{@{}l}{\textbf{Serverless Platform}} \\
\textrm{Request \textsc{id}} & x  & \pdef & \cdots \\
\textrm{Instance \textsc{id}}& \pinstid & \pdef & \cdots \\
\textrm{Execution mode}        & m  & \pdef & \pidle                      & \textrm{Idle} \\
                                &    & \mid  & \pbusy(x)                  & \textrm{Processing $x$} \\
\textrm{Transition labels} & \ell & \pdef   &                             & \textrm{Internal} \\
                            &      & \mid    & \kw{start}(f,x,v)            & \textrm{Receive $v$}    \\
                            &      & \mid    & \kw{stop}(x,v)             & \textrm{Respond $v$} \\
\textrm{Components}  & \mathbb{C}      & \pdef & \pinst{f}{m}{\sigma}    & \textrm{Function instance} \\
                     &        & \mid  & \preq{f}{x}{v}           & \textrm{Apply $f$ to $v$} \\
	                 &        & \mid  & \presp{x}{v}           & \textrm{Respond with  $v$} \\

\textrm{Component set} & \pfstatetyp & \pdef & \{ \mathcal{C}_1, \cdots,  \mathcal{C}_n\} \\
\end{array}
\)
\end{subfigure}
\vrule
\quad\,
\begin{subfigure}{0.47\columnwidth}
\(
\begin{array}{@{}lr@{\,}c@{\,}ll}
\multicolumn{5}{@{}l}{\fbox{$\pfstep{\pfstatetyp}{\ell}{\pfstatetyp}$}} \\
\multicolumn{5}{@{\quad\quad\quad}l}{
  \inferrule*[Left=Req]{\textrm{$x$ is fresh}}{\pfstep{\pfstatetyp}{\kw{start}(f,x,v)}{\mathcal{C}\preq{f}{x}{v}}}} \\[1.5em]
\multicolumn{5}{@{\quad\quad\quad}l}{\inferrule*[Left=Cold]
  {\textrm{$\pinstid$ is fresh} \\ \precvfn_f(v, \pinitfn(f)) = \sigma}
  {{\begin{array}{ll}&\pfstatetyp \preq{f}{x}{v}\\
        \pfsteparr{} &\pfstatetyp \preq{f}{x}{v}  \pinst{f}{\pbusy(x)}{\sigma}
\end{array}} }} \\[1.5em]
\multicolumn{5}{@{\quad\quad\quad}l}{\inferrule*[Left=Warm]
  {\precvfn_f(v, \sigma) = \sigma'}
  {{\begin{array}{ll} &\pfstatetyp \preq{f}{x}{v} \pinst{f}{\pidle}{\sigma}\\
         \pfsteparr{} & \pfstatetyp \preq{f}{x}{v} \pinst{f}{\pbusy(x)}{\sigma'}
    \end{array}} }} \\[1.5em]
\multicolumn{5}{@{\quad\quad\quad}l}{\inferrule*[Left=Hidden]
  {\pstepfn_f(\sigma) = (\sigma', \varepsilon)}
  {{\begin{array}{ll}&  \pfstatetyp \pinst{f}{\pbusy(x)}{\sigma} \\
         \pfsteparr{} & \pfstatetyp \pinst{f}{\pbusy(x)}{\sigma'}
    \end{array}} }} \\[1.5em]
\multicolumn{5}{@{\quad\quad}l}{\inferrule*[Left=Resp]
	{\pstepfn_f(\sigma) = (\sigma', \ptreturn{v'})}
  {{\begin{array}{@{}l@{\hspace{3pt}}l}&\pfstatetyp \preq{f}{x}{v} \pinst{f}{\pbusy(x)}{\sigma}\\
    \pfsteparr{\kw{stop}(x,v')}&
    \pfstatetyp \pinst{f}{\pidle}{\sigma'}\presp{x}{v'}
  \end{array}} }} \\[1.5em]
\multicolumn{5}{@{\quad\quad\quad}l}{\inferrule*[Left=Die]
  {\phantom{.}}
  {\pfstep{\pfstatetyp \pinst{f}{m}{\sigma}}{}{\pfstatetyp}}} \\[1.5em]
\end{array}
\)
\end{subfigure}
\caption[]{\pfname: An operational model of serverless platforms.}
\label{function-semantics}
\end{figure}

We now present our operational semantics of serverless platforms (\pfname) that
captures the essential details of the serverless execution model, including
concurrency, failures, execution retries, and function instance reuse.
\Cref{function-semantics} presents \pfname
in full, and is divided into two parts: (1)~a model of
serverless functions, and (2)~a model of a serverless platform
that receives requests (\ie, events) and produces responses by running serverless
functions. The peculiar features of serverless computing are captured by the
latter part.

\paragraph{Serverless functions.}

Serverless platforms allow programmers to write serverless functions in a variety of
source languages, but platforms themselves are source-language agnostic. Most
platforms only require that serverless functions operate asynchronously,
process a single request at a time, and usually consume and produce
\textsc{json} values. These are the features that our platform model includes.
In our model, the operation of a serverless function ($f$) is defined by three
functions:
\begin{enumerate}

  \item A function that produces the initial state of a function (\pinitfn).
   The state of a serverless function ($\sigma$) is abstract to the
   serverless platform, and in practice the state is source language dependent.
   For example, if the serverless function $f$ were written in JavaScript,
   then each state would be the state of the JavaScript VM and
   $\pinitfn(f)$ would be the initial JavaScript heap.

  \item A function that receives a request ($\precvfn_f$) from the platform.

  \item A function that takes an internal step ($\pstepfn_f$) that may produce
  a command for the serverless platform ($t$).
  For now, the only valid command is $\ptreturn{v}$, which
  indicates that the response to the last request is the value $v$.
  \Cref{sec:Serverless Functions and Cloud Storage} extends our model with new
  commands.

\end{enumerate}

\paragraph{Serverless platform.}

\pfname is an operational semantics of a serverless platform that
processes several concurrent requests. \pfname is written in a process-calculus
style, where the state of the platform consists of a collection of running or
idle functions (known as \emph{function instances}), pending requests, and
responses. A new request may arrive at any time for a serverless
function $f$, and each request is given a globally unique identifier $x$ (the
\textsc{Req} rule). However, the platform does not process requests
immediately. Instead, at some later step, the platform may either
\emph{cold-start} a new instance of $f$ (the \textsc{Cold} rule), or may
\emph{warm-start} by processing the request on an existing, idle instance of
$f$ (the \textsc{Warm} rule). The internal steps of a function instance are
unobservable (the \textsc{Hidden} rule). The only observable command that an
instance can produce is to respond to a pending request (the \textsc{Resp}
rule). When an instance responds to a request, \pfname produces a response
object, an observable response event, and  marks the function instance as idle, which allows it to be
reused. Finally, a function instance may die at any time without notification
(the \textsc{Die} rule). These rules are sufficient to capture several
subtleties of serverless execution, as discussed below.

\paragraph{Instance launches are not observable}

\pfname produces an observable event ($\kw{start}(f,x,v)$) when it
receives a request (\textsc{Req}), and not when it starts to process the
request. This is necessary because the platform may start several instances for
a single event $x$, for example, if the platform detects a potential
failure.

\paragraph{State reuse during warm starts}

When a function instance responds to a request (the \textsc{Resp} rule), the
instance becomes idle, but its state is not reinitialized and may be reused to
process another request (the \textsc{Warm} rule). In the following example, the
function instance receives the second request ($x_2$) when its state is
$\sigma_1$, which may not be identical to the initial state of the function.
\[
\begin{array}{lll}
                                & \preq{f}{x_1}{v_1} \pinst{f}{\pbusy(x_1)}{\sigma_0} \\
\pfsteparr{\kw{stop}(x_1,v_1')} & \pinst{f}{\pidle}{\sigma_1}                         & \textrm{By \textsc{Resp}} \\
\pfsteparr{\kw{start}(f,x_2,v_2)} & \preq{f}{x_2}{v_2} \pinst{f}{\pidle}{\sigma_1}      & \textrm{By \textsc{Req}} \\
\pfsteparr{\phantom{XXXXXX}} & \preq{f}{x_2}{v_2} \pinst{f}{\pbusy(x_2)}{\sigma_2} & \textrm{By \textsc{Warm}}
\end{array}
\]

\paragraph{Function instance termination is not observable}

A function instance may terminate at any time. Moreover,
 termination is not an observable event. In practice, there are several reasons
 why termination may occur. (1)~An instance may terminate if there is a
 software or hardware failure on its machine. (2)~The platform may
 deliberately terminate the instance to reclaim idle resources. (3)~The
 platform may deliberately terminate an instance if it takes too long to
 respond to a request. In \pfname, we model all kinds of termination
 with the \textsc{Die} rule.

\paragraph{Function instances may start at any time}

The platform is free to cold-start or warm-start a function instance for any
pending request at any time, even if an existing function instance is processing
the request. Therefore, several function instances may be processing a single
request at once. This occurs in practice when a transient fault makes an
instance temporarily unreachable. However, cold-starts and warm-starts are
not observable events, thus programmers cannot directly observe the number
of instances that are processing a single request. In the example below,
a single request cold-starts two function instances.
\[
\begin{array}{lll}
\pfsteparr{\kw{start}(f,x,v)} & \preq{f}{x}{v}                                                                     & \textrm{By \textsc{Req}} \\
\pfsteparr{\phantom{XXXXX}} & \preq{f}{x}{v} \pinsty{f}{\pbusy(x)}{\pinitfn(f)}{y_1}                                   & \textrm{By \textsc{Cold-Start}} \\
\pfsteparr{\phantom{XXXXX}} & \preq{f}{x}{v} \pinsty{f}{\pbusy(x)}{\pinitfn(f)}{y_1} \pinsty{f}{\pbusy(x)}{\pinitfn(f)}{y_2} & \textrm{By \textsc{Cold-Start}}
\end{array}
\]
This example also shows why the request $\preq{f}{x}{v}$ is not consumed after
the first instance is launched. We need may need the request to launch additional
instances in the future, particular when a failure occurs.

\paragraph{Single response per request}

Although a single request may spawn several function instances, each request
receives one response from a single function instance. Other instances
processing the same request will eventually get stuck because they cannot
respond. However, stuck instances will eventually terminate. In the following
example, two instances start by processing the same request, the first instance
then responds and becomes idle, and finally, the second instance terminates
because it is stuck.
\[
\begin{array}{lll}
                           & \preq{f}{x}{v} \pinsty{f}{\pbusy(x)}{\sigma_1}{y_1} \pinsty{f}{\pbusy(x)}{\sigma_1}{y_2} \\
\pfsteparr{\kw{stop}(x,v')} & \pinsty{f}{\pidle}{\sigma_1'}{y_1} \pinsty{f}{\pbusy(x)}{\sigma_1}{y_2}  & \textrm{By \textsc{Resp}} \\
\pfsteparr{\phantom{XXXXX}} & \pinsty{f}{\pidle}{\sigma_1'}{y_1}  & \textrm{By \textsc{Die}} \\
\end{array}
\]

\paragraph{Summary}

In summary, \pfname succinctly and faithfully models the low-level details of
serverless platforms, and makes manifest the subtleties that make serverless
programming hard. The rest of this paper demonstrates that \pfname is useful in
a variety of ways. The next section shows how to use \pfname to rigorously
define a simpler semantics of serverless programming that is easier for
programmers to understand, \cref{sec:Serverless Functions and Cloud Storage}
shows that \pfname is easy to extend with a model of another cloud service,
and  \Cref{spl} and \Cref{sec:eval} shows how to extend \pfname to model
new serverless programming abstractions.

\begin{figure}
\footnotesize
\(
\begin{array}{@{}lr@{\,}c@{\,}ll}
\textrm{\pnName function state} & \pnstateid & \pdef & \langle f,m,\pnstates,\pnstopbuff\rangle \\
\textrm{Response buffer} & \pnstopbuff & \subseteq & 2^{(x,v)} \\
\multicolumn{5}{@{}l}{\fbox{$\pnstep{\pnstateid}{\ell}{\pnstateid}$}} \\[.5em]
\multicolumn{5}{@{\quad\quad\quad}l}{
    \inferrule*[Left=\pnNameRule-Start]{\textrm{$x$ is fresh} \\ \sigma_0 = \pinitfn (f) \\ \precvfn_f(v, \sigma_0) = \sigma'}
    {\pnstep{\langle f,\pidle,\pnstates,\pnstopbuff\rangle}{\kw{start}(f,x,v)}{\langle f,\pbusy(x),[\sigma_0,\sigma'],\pnstopbuff\rangle}}} \\[.5em]
\multicolumn{5}{@{\quad\quad\quad}l}{
    \inferrule*[Left=\pnNameRule-Step]{\pstepfn_f(\sigma) = (\sigma', \varepsilon)}
    {\pnstep{\langle f,\pbusy(x),\psnoc{\pnstates}{\sigma}, \pnstopbuff \rangle}{}{\langle f,\pbusy(x), \pnstates\pappend[\sigma,\sigma'], \pnstopbuff \rangle}}} \\[.5em]
\multicolumn{5}{@{\quad\quad\quad}l}{
    \inferrule*[left=\pnNameRule-Buffer-Stop]{\pstepfn_f(\sigma) = (\sigma', \ptreturn{v})}
    {\pnstep{\langle f, \pbusy(x), \psnoc{\pnstates}{\sigma}, \pnstopbuff \rangle}{}{\langle f,\pidle, [\sigma'], \pnstopbuff} \cup \{(x,v)\} \rangle}} \\ [.5em]
\multicolumn{5}{@{\quad\quad\quad}l}{
    \inferrule*[left=\pnNameRule-Emit-Stop]{}
    {\pnstep{\langle f, \pidle, \pnstates, \pnstopbuff \cup \{(x,v)\} \rangle }{\kw{stop}(x,v)}{\langle f,\pidle, \pnstates, \pnstopbuff \rangle}}} \\
    [1em]
\end{array}
\)
\caption{A \pnname semantics of serverless functions.}
\label{safe-semantics-figure}
\end{figure}

\section{A Simpler Serverless Semantics}
\label{safe-semantics}

A natural way to make serverless programming easier is to implement a simpler
execution model than \pfname. For example, we could execute functions exactly
once, or eliminate warm starts to avoid reusing state. Unfortunately,
implementing these changes is likely to be expensive (and, in many situations,
beyond our control). Therefore, this section gives programmers a simpler
programming model in the following way. First, we define a simpler \emph{\pnname
semantics} of serverless computing that eliminates most unintuitive behaviors of
\pfname. Second, using a weak bisimulation theorem, we precisely characterize
when the \pnname semantics and \pfname coincide. This theorem addresses the
low-level details of serverless execution once and for all, thus allowing
programmers to reason using the \pnname semantics, even when their code is
running on a full-fledged serverless platform.

In the \pnname serverless semantics (\cref{safe-semantics-figure}), serverless
functions ($f$) are the same as the serverless functions in \pfname. However,
the operational semantics of the \pnname platform is much simpler: the platform
runs a single function $f$ on one request at a time.
At each step of execution, the \pnname semantics either (1)~starts
processing a new event if the platform is idle (\textsc{N-Start}),
(2)~takes an internal step if
the platform is busy (\textsc{N-Step}),
(3)~buffers a completed response (\textsc{N-Buffer-Stop}), or
(4)~responds to a past request (\textsc{N-Emit-Stop}). This buffering behavior is
essential, thus a programmer cannot rely on a platform to
process concurrent messages in-order. However, the naive semantics abstracts away
the details of concurrent execution and warm starts.
The state of a
\pnname platform consists of (1)~the function's name ($f$); (2)~its execution
mode ($m$); (3)~a trace of function states ($\pnstates$), where the last
element of the trace is the current state, and the first element was the
initial state of the function (we write $\pappend$ to append two traces); and (4)~a buffer of responses that
have yet to be returned ($\pnstopbuff$).
The trace is a convenience that helps us relate the \pnname semantics to \pfname,
but has no effect on execution because $\pstepfn_f{}$ only works on the latest state
in the semantics.

\paragraph{\pnName semantics safety.}

Note that the \pnname semantics is an idealized model and is \emph{not correct}
for arbitrary serverless functions. However, we can precisely characterize the
exact conditions when it is safe for a programmer to reason with the \pnname
semantics, even if their code is running on a full-fledged serverless platform
(\ie, using \pfname). We require the programmer to define a \emph{safety
relation} over the state of the serverless function. At a high-level, the
safety relation is an equivalence relation on program states, which ensures
that the (1)~serverless function produces the same observable command (if any)
on equivalent states and that (2)~all final states are equivalent to the initial
state. Intuitively, the latter condition ensures that warm starts and cold
starts are indistinguishable from each other, and the former condition ensures
that interactions between the serverless function and the external world are
identical in equivalent states. The safety relation is formally defined below.

\begin{definition}[Safety Relation]
For a serverless function $\langle f, \Sigma, \precvfn_f, \pstepfn_f, \pinitfn \rangle$,
the relation $\prel \subseteq \Sigma \times \Sigma$ is a \emph{safety relation} if:

\begin{enumerate}

\item $\prel$ is an equivalence relation,

\item for all $(\sigma_1,\sigma_2) \in \prel$ and $v$, $(\precvfn_f(v,\sigma_1), \precvfn_f(v,\sigma_2)) \in \prel$,

\item for all $(\sigma_1,\sigma_2) \in \prel$, if
$(\sigma_1', t_1) = \pstepfn_f(\sigma_1)$ and
$(\sigma_2', t_2) = \pstepfn_f(\sigma_2)$ then
$(\sigma_1', \sigma_2') \in \prel$ and $t_1 = t_2$, and

\item for all $\sigma$, if $\pstepfn_f (\sigma) = (\sigma', \ptreturn{v})$ then
$(\sigma', \pinitfn(f)) \in \prel$.

\end{enumerate}

\label{def:safety-relation}
\end{definition}

\paragraph{Bisimulation relation}

We now define the bisimulation relation, which is a relation between \pnname
states ($\pnstateid$) and \pfname states ($\pfstatetyp$). The bisimulation
relation formally captures several key ideas that are necessary to reason about
serverless execution. (1)~A single \pnname state may be equivalent to multiple
distinct \pfname states. This may occur due to failures and restarts.
(2)~Conversely, a single \pfname state may be equivalent to several \pnname
states. This occurs when a serverless platform is processing several requests.
In fact, we require all \pfname states to be equivalent to all \emph{idle}
\pnname states, which is necessary for \pfname to receive requests at any time.
(3)~The \pfname state may have several function instances evaluating the same
request. (4)~Due to warm starts, the state of a function may not be identical
in the two semantics; however, they will be equivalent (per $\prel$). (5)~Due
to failures, the \pfname semantics can ``fall behind'' the \pnname semantics
during evaluation, but the state of any function instance in \pfname will be
equivalent to some state in the execution history of the \pnname semantics. The
proof of the weak bisimulation theorem accounts for failures, by specifying the
series of \pfname steps needed to then catch up with the \pnname semantics before
an observable event occurs. The bisimulation relation is formally defined below.

\begin{definition}[Bisimulation Relation]
\label{def:bisimulation-relation}
$\pnstate{f}{m}{\pnstates}{\pnstopbuff} \pequiv \pfstatetyp$ is defined as:
\begin{enumerate}

\item
For all $\pinst{f}{\pidle}{\sigma} \in \pfstatetyp$, $(\sigma,\pinitfn(f)) \in \mathcal{R}$;

\item
For all $\pinst{f}{\pbusy(x)}{\sigma_n} \in \pfstatetyp$, there exists
$\preq{f}{x}{v} \in \pfstatetyp$, and a sequence of states $\sigma_0 \cdots \sigma_n$,
such that $(\sigma_0,\pinitfn(f)) \in \mathcal{R}$,
$\precvfn_f(v,\sigma_0) = \sigma_1$, and
$\pstepfn_f(\sigma_i) = (\sigma_{i+1}, \varepsilon)$; and

\item For all $(x,w) \in \pnstopbuff$ there exists $\preq{f}{x}{v}$ $\in \pfstatetyp$,
and  a sequence of states $\sigma_0 \cdots \sigma_n$, such that
$\precvfn_f(v, \pinitfn(f)) = \sigma_0$, 
$\pstepfn_f(\sigma_i) = (\sigma_{i+1}, \epsilon)$, and
$\pstepfn_f(\sigma_{n-1}) = (\sigma_n, \ptreturn{w})$.

\end{enumerate}

\end{definition}

\paragraph{Weak Bisimulation Theorem}

We are now ready to prove a weak bisimulation between the \pnname semantics and
\pfname, conditioned on the serverless functions satisfying the safety relation
defined above. We prove a weak (rather than a strong) bisimulation because
\pfname models serverless execution in more detail. Therefore, a single step in
the \pnname semantics may correspond to several steps in \pfname. The theorem
below states that the \pnname semantics and \pfname are indistinguishable to the
programmer, modulo unobservable steps. The first part of the theorem states
that every step in the \pnname semantics corresponds to some sequence of steps in
\pfname. We can interpret this as the sequence of steps that a serverless
platform needs to execute to faithfully implement the \pnname semantics. On the
other hand, the second part of the theorem states that any arbitrary step in
\pfname---\,including failures, retries, and warm starts\,---\,corresponds to a
(possibly empty) sequence of steps in the \pnname semantics.

An important simplification in the \pnname semantics is that it executes a single
request at a time. Therefore, to relate a \pnname trace to a \pfname
trace, we need to filter out events that are generated by other requests. To do
so, we define $x(\tightoverharp{\ell})$ as the sub-sequence of $\tightoverharp{\ell}$ that only
consists of events labeled $x$. In addition, we write $\pfsteps{}$ and $\pnsteps{}$
for the reflexive-transitive closure of $\Rightarrow$ and $\mapsto$ respectively.
With these definitions, we can state the weak
bisimulation theorem.

\begin{theorem}[Weak Bisimulation]
\label{thm:Weak Bisimulation}
For a serverless function $f$ with a safety relation $\prel$, for all
 $\pnstateid$, $\pfstatetyp$, $\ell$:
 \begin{enumerate}

\item For all $\pnstateid'$, if $\pnstep{\pnstateid}{\ell}{\pnstateid'}$ and
$\pnstateid \pequiv \pfstatetyp$ then there exists $\tightoverharp{\ell_1}$,
$\tightoverharp{\ell_2}$, $\pfstatetyp'$, $\pfstatetyp_i$ and
$\pfstatetyp_{i+1}$ such that
$\pfstatetyp \pfsteps{\tightoverharp{\ell_1}} \pfstep{\pfstatetyp_i}{\ell}{\pfstatetyp_{i+1}} \pfsteps{\tightoverharp{\ell_2}} \pfstatetyp'$,
$x(\tightoverharp{\ell_1}) = \varepsilon$, $x(\tightoverharp{\ell_2}) = \varepsilon$,
and $\pnstateid' \pequiv \pfstatetyp'$

\item For all $\pfstatetyp'$, if $\pfstep{\pfstatetyp}{\ell}{\pfstatetyp'}$ and
$\pnstateid \pequiv \pfstatetyp$ then there exists $\pnstateid'$
such that $\pnstateid' \pequiv \pfstatetyp'$ and
$\pnstateid \pnsteps{\ell} \pnstateid'$.

\end{enumerate}

\end{theorem}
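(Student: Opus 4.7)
The plan is to prove each direction by case analysis on the transition rule that fires, maintaining throughout the following structural invariant: for every \pfname instance $\pinst{f}{\pbusy(x)}{\sigma_n}$ that is working on the request $x$ currently tracked by the \pnname state, the witness sequence $\sigma_0,\dots,\sigma_n$ demanded by clause~(2) of $\pequiv$ is pointwise $\prel$-related to a prefix of the naive trace $\pnstates$ for $x$. Safety conditions~(2) and~(3) make the invariant preservable under $\precvfn_f$ and $\pstepfn_f$, while condition~(4) makes it hold across request boundaries, because every idle state reachable after a return is $\prel$-related to $\pinitfn(f)$, exactly where a fresh naive trace starts via \textsc{N-Start}.

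For part~(1), I case-split on the \pnname rule. \textsc{N-Start} (label $\kw{start}(f,x,v)$) is matched by one \textsc{Req} step in \pfname with empty prefix and suffix. \textsc{N-Step} and \textsc{N-Buffer-Stop} are silent, so the empty \pfname trace suffices: clauses~(2) and~(3) of $\pequiv$ only demand that \pfname has a pending request object and a computable state sequence from it, which is unaffected by either rule. \textsc{N-Emit-Stop} (label $\kw{stop}(x,w)$) is the substantive case: clause~(3) of the starting $\pequiv$ gives us a request $\preq{f}{x}{v} \in \pfstatetyp$ whose trace from $\pinitfn(f)$ terminates in $\ptreturn{w}$, so the witness is a prefix of \textsc{Cold} followed by the requisite number of \textsc{Hidden} steps (all silent with respect to $x$), and then a \textsc{Resp} whose emitted label is exactly $\kw{stop}(x,w)$. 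For part~(2), I case-split on the \pfname rule. \textsc{Req} is matched by \textsc{N-Start} when the fresh event concerns the request tracked by $\pnstateid$ and by zero naive steps otherwise; \textsc{Cold}, \textsc{Warm}, \textsc{Hidden}, and \textsc{Die} are silent, so zero naive steps always suffice and $\pequiv$ re-holds directly by the invariant (\textsc{Cold} introduces a fresh length-$2$ witness trace, \textsc{Warm} does the same using clause~(1) and safety~(2), \textsc{Hidden} extends an existing witness by safety~(3), and \textsc{Die} merely discards obligations); finally, \textsc{Resp} is matched by \textsc{N-Buffer-Stop} followed by \textsc{N-Emit-Stop}, or by just \textsc{N-Emit-Stop} alone if the response has already been buffered.

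The hard part is the \textsc{Resp}/\textsc{N-Emit-Stop} case: one must argue that the value $w$ emitted by \pfname equals the value stored in $\pnstopbuff$, even though the responding \pfname instance may have cold-started from $\pinitfn(f)$, been warm-started from an arbitrary earlier idle state $\sigma \prel \pinitfn(f)$, or replaced an instance killed by \textsc{Die}. The safety relation is precisely what makes this go through: conditions~(2) and~(3) lift $\prel$ across $\precvfn_f$ and $\pstepfn_f$ and force the produced commands to be identical, while condition~(4) ties every post-response idle state back to $\pinitfn(f)$, closing the loop across request boundaries. Packaging these into the per-instance pointwise invariant above is the technical heart of the proof; once that invariant is in place, the remaining cases collapse to routine bookkeeping over the component collection $\pfstatetyp$.
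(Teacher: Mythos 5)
Your proposal is essentially correct, but it takes a genuinely different route from the paper's. You prove the theorem monolithically, by a single case analysis on the transition rules of both semantics, carrying a strengthened invariant: every busy \pfname instance's witness sequence is pointwise $\prel$-related to a prefix of the \pnname trace, with safety conditions (2)--(3) propagating $\prel$ across $\precvfn_f$ and $\pstepfn_f$ and condition (4) closing the loop at request boundaries. The paper instead factors the proof into a tower of intermediate semantics and composes (weak) bisimulations: it first augments \pfname with explicit per-instance execution histories (a strong bisimulation, since the added state is inert), then successively eliminates \textsc{Die}, collapses to one buffered instance per function, restricts to a single request with label filtering, and finally matches the resulting machine syntactically against the \pnname semantics. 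The paper's history-augmented layer materializes as state exactly the witness sequences that your invariant must reconstruct by hand, and each layer isolates one source of nondeterminism (failure, concurrency, multiple requests), which keeps every individual case small; the cost is five relations and five theorems. Your direct approach is shorter and avoids the intermediate machinery, but it concentrates all the difficulty in two places that your sketch passes over quickly: in the \textsc{Resp} case of part (2) the \pnname state may be strictly behind the responding instance, so you need intervening silent \textsc{N-Step}s (justified by your invariant plus safety condition (3)) before \textsc{N-Buffer-Stop} can fire with the matching value; and the \textsc{N-Buffer-Stop} case of part (1) is not quite ``unaffected''---it adds a fresh clause-(3) obligation on the response buffer, which you discharge by taking the \pnname trace itself as the witness derivation. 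Both proofs also rely on the same tacit convention that events belonging to requests other than the one tracked by $\pnstateid$ are filtered out; the paper makes this explicit only in its single-request layer, and you should make it explicit in your \textsc{Req} and \textsc{Resp} cases of part (2).
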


\begin{proof}
See \cref{appendix1}.
\end{proof}

In summary, this theorem allows programmers to justifiably ignore the low-level
details of \pfname, and simply use the \pnname semantics, if their code satisfies
\Cref{def:safety-relation}. There are now several tools that are working toward
verifying these kinds of properties in scripting languages, such as
JavaScript~\cite{javert,park:kjs}, which is the most widely supported
language for writing serverless functions. Our work, which is source
language-neutral, complements this work by establishing the verification
conditions necessary for correct serverless execution. The rest of these
section gives examples that illustrate the kind of reasoning needed to verify
serverless function safety.

\begin{figure}
\begin{subfigure}{0.4\textwidth}
\lstset{language=JavaScript}
\begin{lstlisting}
var cache = new Map();
function auth(req, res) {
  let {user, pass} = req.body;
  if (cache.contains(user, pass)) {
    res.write(true);
  } else if (db.get(user) === pass) {
      cache.insert(user, pass);
      res.write(true);
  } else {
    res.write(false);
  }
}
\end{lstlisting}
\end{subfigure}
\,\vrule\,
\begin{subfigure}{0.56\textwidth}
\footnotesize
\(
\begin{array}{@{}l@{\;}r@{\,}c@{\,}l@{}l}
\textrm{Username} & U & \pdef & \cdots \\
\textrm{Password} & P & \pdef & \cdots \\
\textrm{Cache and Database} &\; C, D & \in & U \rightharpoonup P \\
\textrm{Program state} & \Sigma & \pdef & \textsf{Option}~(U \times P) \times C \times D \\
\end{array}
\)
\(
\begin{array}{@{}l}
\pinitfn(f)  =  (\textsf{None}, \cdot, D) \\
\precvfn_f((u,p), (\textsf{None}, c, D)) =  (\textsf{Some}~(u, p), c, D) \\
\pstepfn (\texttt{Some} (u, p), c, D) = \begin{cases}
  ((\texttt{None}, c[u \mapsto p], D), \ptreturn{\texttt{\lstinline|true|}}) \\
  \hskip 2em \textrm{if }u \notin \mathsf{dom}(C) \wedge D(u) = p \\
  ((\texttt{None}, c, D), \ptreturn{\texttt{\lstinline|false|}}) \\
  \hskip 2em \textrm{if }u \notin \mathsf{dom}(C) \wedge D(u) \ne p \\
  ((\texttt{None}, c, D), \ptreturn{\texttt{\lstinline|true|}}) \\
  \hskip 2em \textrm{if } C(u) = p \\
\end{cases}

\end{array}
\)
\end{subfigure}
\caption{An authentication example that caches the recent authentications to decrease
number of authentication server calls.}
\label{fig-auth-ex}
\end{figure}

\subsection{Examples of Safe and Unsafe Serverless Functions}
\label{naive-examples}

We now give two examples of serverless functions and show that they are safe
and unsafe respectively using only the definition of the safety relation.

\paragraph{In-Memory Cache}
\Cref{fig-auth-ex} is a serverless function that receives a username and
password combination, and returns true if the combination is correct. The
function queries an external database for the password. Since database requests
take time, the function locally caches correct passwords to improve
performance. The cache will be empty on cold starts and may be non-empty on
warm starts. For simplicity, we assume that passwords do not change. (A more
sophisticated example would invalidate the cache after a period of time.)

Ignoring JavaScript-specific details, this program operates in two kinds of
states: (1)~in the initial state, the program is idle and waiting to
receive a request and (2)~while processing a request, the program has a
username ($U$) and password ($P$) in memory. We model the two states with the
type $\textsf{Option} (U \times P)$. In both states, the program has an
in-memory cache ($C$) and access to the database ($D$). Although we assume the
database is read-only, the program may update the cache. Therefore, the complete
type of program state is a product of these three components ($\Sigma$ in
\cref{fig-auth-ex}).

When the program receives a request carrying a username and password, it
records them in program state and leaves the cache unmodified (\precvfn{} in
\cref{fig-auth-ex}). After receiving a request, the JavaScript program performs
a series of internal steps to check for a cached result, query the database (if
needed), and update the cache. For brevity, our model of the program condenses
these operations into a single step (\pstepfn{} in \cref{fig-auth-ex}).

Given this model of the program, we define the safety relation ($\mathcal{R}$) as follows:
\begin{eqnarray*}
((\texttt{Some} (u, p), c, D), (\texttt{Some} (u, p), c', D)) \in \prel &\textrm{ if } c \subseteq D \wedge c' \subseteq D\\
((\texttt{None}, c, D), (\texttt{None}, c', D)) \in \prel &\textrm{ if } c \subseteq D \wedge c' \subseteq D
\end{eqnarray*}
This relation specifies that two program states are equivalent only if they are both
idle states or both processing the same request (same username and password combination).
However, the relation allows the caches ($c$ and $c'$) in either state to be different,
as long as both  are consistent with the database. The latter condition is the key to
ensuring that warm starts are safe.

Finally, we need to prove that the safety relation above satisfies the four criteria of
\cref{def:safety-relation}:

\begin{enumerate}

  \item It is straightforward to show that $\mathcal{R}$ is an equivalence relation.

  \item To show that \precvfn{} maps equivalent states to equivalent states,
  note that \precvfn{} is only defined when the program state does not contain
  a query (\ie, the first component is \textsf{None}).
   Therefore, the two equivalent input states
  may only be of the form $(\mathsf{None}, c, D)$ and $(\mathsf{None}, c', D)$,
  where $c,c' \subseteq D$.   \precvfn{} records the query in
  program state and leaves the cache unmodified, therefore the input states
  are related.

  \item To show that \pstepfn{} maps equivalent states to equivalent states,
  note that \pstepfn{} is only defined when the program state contains
  a query ($\mathsf{Some}~(U, P)$). Moreover, for two states with queries to be
  equivalent, their queries must be identical.
  We have to consider the six combinations of \pstepfn{} and ensure that
  it is never the case that one state produces $\ptreturn{\texttt{true}}$ while
  the other state produces $\ptreturn{\texttt{false}}$. This does not occur because
  the two caches are consistent with the database. We have to also ensure that
  the resulting states are equivalent, which is straightforward because the
  cache updates preserve consistency.

  \item Finally, to show that final states are related to $\pinitfn(f)$, note
  that \pstepfn{} produces a state with \textsf{None} for the query, and all
  these states are related by $\mathcal{R}$, as long as their caches are
  consistent with the database.

\end{enumerate}

Therefore, since $\prel$ is a safety relation, by \cref{thm:Weak Bisimulation},
the function operates the same way using \pfname and the \pnname semantics.

\begin{figure}
\lstset{language=JavaScript}
\begin{lstlisting}
var process = require('process');
exports.main = function (req, res) {
  pid = parseInt(process.pid);
  if (pid > 10000) {
    res.write ({"output": "High process id"});
  } else {
    res.write ({"output": "Low process id"};});}
\end{lstlisting}
\caption{It is not possible to define a safety relation for this function,  because it depends on the process ID.}
\label{fig-process-id}
\end{figure}

\paragraph{Unsafe Serverless Functions}

Not all serverless functions are safe, thus it isn't always possible to define
a safety relation. For example, the function in \Cref{fig-process-id} responds
with the UNIX process ID, which will vary across function instances.
A function can observe other low-level detail of the instance, such as its
IP address or Ethernet address.

\section{Serverless Functions and Cloud Storage}
\label{sec:Serverless Functions and Cloud Storage}

\newcommand{\dbnaivespace}{}

\begin{figure}
\footnotesize
\(
\begin{array}{@{}lr@{\,}c@{\,}ll}
\multicolumn{5}{@{}l}{\textbf{Serverless Functions}} \\
\textrm{Key set}            & k        & \pdef &  & \textrm{strings} \\
\textrm{Key-value map}      & \pkv     & \in   & k \rightharpoonup v \\
\textrm{Commands}	& t		   & \pdef & \cdots \\
&		   & \mid  & \ptbeginTx 	& \textrm{Lock data store} \\
&		   & \mid  & \ptread{k} & \textrm{Read value} \\
&		   & \mid  & \ptwrite{k}{v} & \textrm{Write value} \\
&		   & \mid  & \ptendTx 	& \textrm{Unlock data store} \\
\textrm{Component set} & \pfstatetyp & \pdef & \{ \mathcal{C}_1, \cdots,  \mathcal{C}_n, \pds{\pkv}{L}\} \\
\textrm{Lock state}		& L	& \pdef & \plfree &	\textrm{Data store is free} \\
						&	& \mid	& \plowned{y}{\pkv} & \textrm{Owned by $y$} \\[1.5em]
\multicolumn{5}{@{\quad\quad\quad}l}{\inferrule*[Left=Read]
  	{\pstepfn_f{(\sigma)} = (\sigma', \ptread{k}) \\ \precvfn_f(\pkv'(k), \sigma') = \sigma''}
  	{{\begin{array}{ll}&\mathcal{C} \pinst{f}{\pbusy{(x)}}{\sigma} \pds{\pkv}{\plowned{y}{\pkv'}} \\
         \pfsteparr{}
         &\mathcal{C} \pinst{f}{\pbusy{(x)}}{\sigma''} \pds{\pkv}{\plowned{y}{\pkv'}}
	\end{array}} }} \\[1.5em]
\multicolumn{5}{@{\quad\quad\quad}l}{\inferrule*[Left=Write]
  {\pstepfn_f{(\sigma)} = (\sigma', \ptwrite{k}{v}) \\ \pkv'' = \pkv'[k \mapsto v]}
  {{\begin{array}{ll} &\mathcal{C} \pinst{f}{\pbusy{(x)}}{\sigma} \pds{\pkv}{\plowned{y}{\pkv'}} \\
         \pfsteparr{}
         &\mathcal{C} \pinst{f}{\pbusy{(x)}}{\sigma'} \pds{\pkv}{\plowned{y}{\pkv''}} \end{array}} }} \\[1.5em]
\multicolumn{5}{@{\quad\quad\quad}l}{\inferrule*[Left=BeginTx]
  {\pstepfn_f{(\sigma)} = (\sigma', \ptbeginTx)}
  {
  	{\begin{array}{ll}
  	&\mathcal{C} \pinst{f}{\pbusy{(x)}}{\sigma} \pds{\pkv}{\plfree} \\
         \pfsteparr{}
         &\mathcal{C} \pinst{f}{\pbusy{(x)}}{\sigma'} \pds{\pkv}{\plowned{y}{\pkv}}
    \end{array}}
  }} \\[1.5em]
\multicolumn{5}{@{\quad\quad\quad}l}{\inferrule*[Left=EndTx]
  {\pstepfn_f{(\sigma)} = (\sigma', \ptendTx)}
  {{\begin{array}{ll} &\mathcal{C} \pinst{f}{\pbusy{(x)}}{\sigma} \pds{\pkv}{\plowned{y}{\pkv'}} \\
         \pfsteparr{}
         &\mathcal{C} \pinst{f}{\pbusy{(x)}}{\sigma'} \pds{\pkv'}{\plfree} \end{array}} }} \\[1.5em]
\multicolumn{5}{@{\quad\quad\quad}l}{\inferrule*[Left=DropTx]
  {\forall f \sigma x . \; \pinst{f}{\pbusy{(x)}}{\sigma} \not \in \mathcal{C}}
  {\pfstep{\mathcal{C} \pds{\pkv}{\plowned{y}{\pkv'}}}
         {}
         {\mathcal{C} \pds{\pkv}{\plfree}}}} \\[1.5em]
\end{array}
\)
\caption[Lambda Lambda augmented with a key-value store.]{\pfname augmented with a key-value store.}
\label{shared-state-semantics}
\end{figure}

It is common for serverless functions to use an external database for
persistent storage because their local state is ephemeral. But, serverless
platforms warn programmers that stateful serverless functions must be
\emph{idempotent}~\cite{google-cf-exec, openwhisk-actions, azure-messaging}. In
other words, they should be able to tolerate re-execution. Unfortunately, it is
completely up to programmers to ensure that their code is idempotent, and
platforms do not provide a clear explanation of what idempotence means, given
that serverless functions perform warm-starts, execute concurrently, and may
fail at any time. We now address these problems by adding a key-value store to
both \pfname and the \pnname semantics, and present an extended weak
bisimulation. In particular, the \pnname semantics still processes a single request
at a time, which is a convenient mental model for programmers.

\Cref{shared-state-semantics} augments \pfname with a key-value store that
supports transactions. To the set of components, we add exactly one key-value
store ($\pds{\pkv}{L}$), which has a map from keys to values ($\pkv$) and a
lock ($L$), which is either unlocked ($\plfree$) or contains uncommitted updates
from the function instance that holds the lock ($\plowned{y}{M'}$). An important
detail here is that the lock is held by a function instance and not a request, since there may be
several running instances processing the same request.
We allow serverless functions to produce
four new commands: $\ptbeginTx$ starts a transaction,
$\ptendTx$ commits a transaction, $\ptread{k}$ reads the value associated with
key $k$, and $\ptwrite{k}{v}$ sets the key $k$ to value $v$.
We add four new rules to \pfname that execute these commands in the natural
way: \textsc{BeginTx} blocks until it can acquire a lock,
\textsc{EndTx} commits changes and releases a lock, and for simplicity,
the \textsc{Read} and \textsc{Write} rules require the running instance to
have a lock. Finally, we need a fifth rule (\textsc{DropTx}) that releases a lock
and discards its
uncommitted changes if the function instance that held the lock no longer
exists. This may occur if the function instance dies before committing
its changes.

\begin{figure}
\footnotesize
\(
\begin{array}{@{}lr@{\,}c@{\,}ll}
\multicolumn{5}{@{}l}{\textbf{Serverless Functions}} \\
\textrm{Optional key-value map} & \pnlockid & \pdef & \pnlock{\pkv}{\pnstates} \mid \pncommit{v} \mid \cdot \\
\multicolumn{5}{@{}l}{\fbox{$\pnstep{\pnlockid,\pnstateid}{\ell}{\pnlockid,\pnstateid}$}} \\[.5em]
\multicolumn{5}{@{\dbnaivespace}l}{
    \inferrule*{\pnstep{\pnstateid}{}{\pnstateid'}}
    {\pnstep{\pnlockid,\pnstateid}{}{\pnlockid,\pnstateid'}}

    \inferrule*{\pnstep{\pnstateid}{\kw{start}(f,x,v)}{\pnstateid'}}
    {\pnstep{\pnunlocked,\pnstateid}{\kw{start}(f,x,v)}{\pnunlocked,\pnstateid'}}

    \inferrule*{\pnstep{\pnstateid}{\kw{stop}(x,v)}{\pnstateid'}}
    {\pnstep{\pncommit{v},\pnstateid}{\kw{stop}(x,v)}{\pnunlocked,\pnstateid'}}

} \\[.5em]
\multicolumn{5}{@{\dbnaivespace}l}{
    \inferrule*[left=\pnNameRule-Read]
        {\pstepfn_f{(\sigma)} = (\sigma', \ptread{k}) \\ \precvfn(\pkv(k), \sigma') = \sigma''}
        {{
            \begin{array}{l@{\,}l}\pnstep{&\pnlock{\pkv}{\pnstates'},\langle f,\pbusy(x),\psnoc{\pnstates}{\sigma},\pnstopbuff \rangle \\}
                {}
                {&\pnlock{\pkv}{\pnstates'},\langle f,\pbusy(x),\psnoc{\pnstates}{\sigma,\sigma',\sigma''},\pnstopbuff \rangle}\end{array}
        }}} \\[1.5em]
\multicolumn{5}{@{\dbnaivespace}l}{
    \inferrule*[left=\pnNameRule-Write]
        {\pstepfn_f{(\sigma)} = (\sigma', \ptwrite{k}{v}) \\ \pkv' = \pkv[ k \mapsto v] }
        {{\begin{array}{l@{\,}l}
            \pnstep{&\pnlock{\pkv}{\pnstates'},\langle f,\pbusy(x),\psnoc{\pnstates}{\sigma},\pnstopbuff \rangle \\}
                {}
                {&\pnlock{\pkv'}{\pnstates'},\langle f,\pbusy(x),\psnoc{\pnstates}{\sigma,\sigma'},\pnstopbuff \rangle}
        \end{array}}}} \\[1.5em]
\multicolumn{5}{@{\dbnaivespace}l}{
    \inferrule*[left=\pnNameRule-BeginTx]
        {\pstepfn_f{(\sigma)} = (\sigma', \ptbeginTx) }
        {{\begin{array}{l@{\,}l}\pnstep{&\pnunlocked,\langle f,\pbusy(x),\psnoc{\pnstates}{\sigma},\pnstopbuff \rangle \\}
                {}
                {&\pnlock{\pkv}{\psnoc{\pnstates}{\sigma}},\langle f,\pbusy(x),\psnoc{\pnstates}{\sigma,\sigma'},\pnstopbuff \rangle}
        \end{array}}}} \\[1.5em]
\multicolumn{5}{@{\dbnaivespace}l}{
    \inferrule*[left=\pnNameRule-EndTx]
        {\pstepfn_f{(\sigma)} = (\sigma', \ptendTx) }
        {{\begin{array}{l@{\,}l}
            \pnstep{&\pnlock{\pkv}{\pnstates'},\langle f,\pbusy(x),\psnoc{\pnstates}{\sigma},\pnstopbuff \rangle \\}
                {}
                {&\pncommit{M(x)} ,\langle f,\pbusy(x),\psnoc{\pnstates}{\sigma,\sigma'},\pnstopbuff \rangle}
        \end{array}}}} \\[1.5em]
\multicolumn{5}{@{\dbnaivespace}l}{
    \inferrule*[left=\pnNameRule-Rollback]
        {\phantom{.}}
        {\pnstep{\pnlock{\pkv}{\pnstates'},\langle f,\pbusy(x),\pnstates,\pnstopbuff \rangle}
                {}
                {\cdot,\langle f,\pbusy(x),\pnstates',\pnstopbuff \rangle}}}
\end{array}
\)
\caption{The \pnname semantics with a key-value store.}
\label{safe-state-semantics}
\end{figure}

\paragraph{Idempotence in the \pnname semantics.}

There are several ways to ensure that a serverless
function is idempotent. A common protocol is to save each
output value, keyed by the unique request \textsc{id}, to the key-value store, within a transactional
update. Therefore, if the request is re-tried, the function can lookup and return
the saved output value. We now formally characterize this protocol, and
use it to prove a weak bisimulation theorem
between \pfname and the \pnname semantics, where each is extended with a
key-value store. This will allow programmers to reason about serverless
execution using the \pnname semantics, which processes exactly one request
at a time, without concurrency.

The challenge we face is to extend the bisimulation relation
(\cref{def:bisimulation-relation}) to account for the key-value store. In that
definition, when the \pfname state and the \pnname state are equivalent, it is
possible for all function instances in \pfname to fail. When this occurs,
\pfname ``falls behind'' the \pnname semantics. Nevertheless, we still treat the
states as equivalent, and let the weak bisimulation proof re-invoke function
instances until \pfname catches up with the \pnname semantics. Unfortunately,
this approach does not always work with the key-value
store, since the key-value store may have changed.
 To address this,
we need to ensure that functions that use the key-value store follow an
appropriate protocol to ensure idempotence. A more subtle problem arises when
the \pnname state is within a transaction, and the equivalent
\pfname state takes several steps that
result in a failure, followed by other updates to the key-value store. When this
occurs, the \pnname semantics must rollback to the start of the transaction and
re-execute with the updated key-value store.

\Cref{safe-state-semantics} shows the extended \pnname semantics, which addresses
these
issues. In this semantics, the \pnname key-value store ($\pnlockid$) goes through
three states: (1)~at the start of execution, it is not present; (2)~when a
transaction begins, the semantics selects a new mapping nondeterministically;
and (3)~when the transaction completes, the mapping moves to a committed state,
where it only contains the final result. For simplicity, we assume that reads
and writes only occur within transactions. The semantics also includes an
\textsc{\pnNameRule-Rollback} rule, which allows execution to rollback to the start of
transaction. However, once a transaction is complete
(\textsc{\pnNameRule-EndTx}), a rollback is not possible.

The extended bisimulation relation, shown below, uses the bisimulation
relation from the previous section (\cref{def:bisimulation-relation}).
When the \pnname semantics is within a transaction, the relation requires
some instance in \pfname to be operating in lock-step with the \pnname semantics.
However, other instances in \pfname that are not using the key-value store can
make progress. Therefore, a transaction is not globally atomic in \pfname,
and other requests can be received and processed while some instance is
in a transaction.

\begin{definition}[Extended Bisimulation Relation]
$\pnlockid,\pnstateid \pequiv \pds{\pkv}{L},\pfstatetyp$ is defined as:
\begin{enumerate}

    \item If $\pnstateid \pequiv \pfstatetyp$ then
    $\pnunlocked,\pnstateid \pequiv \pds{\pkv}{\plfree}\pfstatetyp$, or

    \item If $\pnstateid = \pnstate{f}{\pbusy(x)}{\psnoc{\pnstates}{\sigma}}{\pnstopbuff}$,
    $L = \plowned{y}{\pkv'}$, $\pnlockid = \pnlock{\pkv'}{\pnstates}$, and there
    exists $\pinst{f}{\pbusy{(x)}}{\psigmaequiv} \in \pfstatetyp$ such that
    $(\sigma, \psigmaequiv) \in \prel$ then
    $\pnlockid,\pnstateid \pequiv \pds{\pkv}{L},\pfstatetyp$, or

    \item If $\pnstateid = \pnstate{f}{\pbusy(x)}{\pnstates}{\pnstopbuff}$,
    $\pnlockid = \pncommit{v}$, and $\pkv(x)=v$ then
   	$\pnlockid,\pnstateid \pequiv \pds{\pkv}{\plfree}\pfstatetyp$.

\end{enumerate}
\label{def:extended-bisimulation-rel}
\end{definition}

With this definition in place, we can prove an extended weak bisimulation
relation.

\begin{theorem}[Extended Weak Bisimulation]
\label{thm:ext-weak-bisim}
For a serverless function $f$, if $\prel$ is its safety relation and
for all requests $\preq{f}{x}{v}$ the following conditions hold:
\begin{enumerate}
    \item $f$ produces the value stored at key $x$, if it exists,
    \item When $f$ completes a transaction, it stores a value $v'$ at key $x$, and
    \item When $f$ stops, it produces $v'$,
\end{enumerate}
then, for all $\pnlockid$, $\pnstateid$, $\pfstatetyp$, $\ell$ :
\begin{enumerate}

\item For all $\pnlockid',\pnstateid'$, if
$\pnstep{\pnlockid,\pnstateid}{\ell}{\pnlockid',\pnstateid'}$ and
$\pnstateid \pequiv \pfstatetyp$ then there exists
$\tightoverharp{\ell_1}$, $\tightoverharp{\ell_2}$,
$\pfstatetyp'$, $\pfstatetyp_i$ and $\pfstatetyp_{i+1}$ such that
$\pfstatetyp \pfsteps{\tightoverharp{\ell_1}} \pfstep{\pfstatetyp_i}{\ell}{\pfstatetyp_{i+1}} \pfsteps{\tightoverharp{\ell_1}} \pfstatetyp'$,
$x(\tightoverharp{\ell_1}) = \varepsilon$,
$x(\tightoverharp{\ell_2}) = \varepsilon$,
and $\pnstateid' \pequiv \pfstatetyp'$

\item For all $\pfstatetyp'$, if $\pfstep{\pfstatetyp}{\ell}{\pfstatetyp'}$ and
$\pnlockid,\pnstateid \pequiv \pfstatetyp$ then there exists $\pnstateid'$
such that $\pnstateid' \pequiv \pfstatetyp'$ and
$\pnstateid \pnsteps{\ell} \pnstateid'$.

\end{enumerate}

\end{theorem}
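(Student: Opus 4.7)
The proof follows the same template as Theorem \ref{thm:Weak Bisimulation}, proceeding by case analysis on the transition rule applied in each direction. For steps whose rules do not touch the key-value store, the original arguments transfer almost verbatim: case (1) of Definition \ref{def:extended-bisimulation-rel} is just the base bisimulation relation paired with trivial lock components, and none of the non-transactional rules inspect the store. The genuinely new work concerns the five rules added to \pfname in \cref{shared-state-semantics}, the five corresponding rules of the \pnname semantics in \cref{safe-state-semantics}, and one subtle interaction in which \textsc{Die} kills the distinguished instance that witnesses case (2) of the relation.

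For direction (1), I would dispatch each new \pnname rule in turn. \textsc{\pnNameRule-BeginTx} begins in case (1) of the relation; I simulate it by firing \textsc{BeginTx} in \pfname on the instance currently processing $x$, using \textsc{Cold} first if no such instance yet exists, so that case (2) holds afterwards. \textsc{\pnNameRule-Read} and \textsc{\pnNameRule-Write} are matched by the \pfname \textsc{Read} and \textsc{Write} rules on the distinguished instance, with condition (2) of Definition \ref{def:safety-relation} preserving $\prel$ across the step. \textsc{\pnNameRule-EndTx} is matched by \textsc{EndTx}; the theorem's idempotence hypotheses ensure that the committed map stores the eventual response at key $x$, so the resulting pair lands in case (3). \textsc{\pnNameRule-Rollback} needs no \pfname step at all: we weaken from case (2) back to case (1), forgetting the distinguished instance.

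For direction (2), the new \pfname rules fired on the distinguished instance are matched one-for-one by their \pnname counterparts, and rules fired on other instances are matched by empty \pnname traces, since those instances are not mentioned in case (2) and, being outside a transaction under case (1), cannot touch the store. The delicate cases are \textsc{DropTx} and \textsc{Die} applied to the distinguished instance. \textsc{DropTx} is matched by \textsc{\pnNameRule-Rollback}, which restores exactly the pre-transaction trace that was saved at \textsc{BeginTx}. \textsc{Die} on the distinguished instance is matched by \textsc{\pnNameRule-Rollback} followed by recovery via a fresh \textsc{Cold} instance that, by idempotence condition (1), short-circuits on the stored response if the transaction had already committed and otherwise replays to a $\prel$-equivalent state.

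The main obstacle is closing the loop after a rollback. Once \pnname takes \textsc{\pnNameRule-Rollback}, the subsequent \pfname trace that again touches $x$ must still yield a state inside the extended relation, and this requires interleaving a fresh \textsc{Cold}/\textsc{Warm}/\textsc{Hidden}/\textsc{Read} sequence for the restarted instance with possible concurrent commits by other instances. The three idempotence conditions of the theorem are precisely what makes this work: condition (2) ensures that a committed transaction's response is recorded atomically at key $x$, condition (1) forces any restart to return that recorded response rather than replay the transaction, and condition (3) ties the eventual stop value to the stored value. Formalizing these invariants across arbitrary concurrent interleavings, rather than grinding through a single inductive case, is where the bulk of the proof effort sits.
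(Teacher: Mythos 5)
Your overall shape---rule-by-rule case analysis in both directions, reusing the base bisimulation machinery for rules that do not touch the store---matches the paper's proof (\cref{ext-bisim-fwd,ext-bisim-back}). But two things go wrong. First, the paper does not work from the three informal conditions directly; it first hardens them into a rigid protocol (\cref{def:idempotence}: receive, begin a transaction, read key $x$, then either short-circuit on a stored value or compute, write key $x$, end the transaction, and return that value) and then proves a trace-classification lemma (\cref{lemma:forward-classification}) showing that a conforming function admits exactly two possible \pnname traces. That lemma is the device that discharges the hard forward case---emitting a buffered response when every \pfname instance serving $x$ may already have died---by letting the proof reconstruct, from a fresh \textsc{Cold} start, the entire \textsc{BeginTx}/\textsc{Read}/\textsc{EndTx}/\textsc{Resp} sequence and know that it short-circuits on the value already committed at key $x$. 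Your proposal gestures at exactly this difficulty (``closing the loop after a rollback'') but supplies no concrete invariant or lemma in its place; without something like the classification lemma, the induction does not close.

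Second, your backward-direction treatment of \textsc{Die} on the distinguished instance would not type-check: you propose that the \pnname side answers with \textsc{\pnNameRule-Rollback} ``followed by recovery via a fresh \textsc{Cold} instance,'' but \textsc{Cold} is a \pfname rule, and in the backward direction the \pnname semantics may only respond with its own (possibly empty) step sequence. Moreover, \textsc{Die} alone does not free the lock---that is \textsc{DropTx}'s job---so pairing \textsc{Die} with \textsc{\pnNameRule-Rollback} leaves the two lock states out of sync: the \pnname store becomes $\pnunlocked$ while the \pfname store is still owned by the dead instance. The paper instead matches \textsc{Die} with an empty \pnname trace and matches \textsc{DropTx} with \textsc{\pnNameRule-Rollback}; the recovery you describe happens later, in the forward direction, where the classification lemma rebuilds the computation.
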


\begin{proof}
\ifappendix
By \cref{ext-bisim-fwd,ext-bisim-back} in the appendix.
\else
By Theorems~A.9~and~A.10 in \citet{jangda:lambda-lambda-arxiv}.
\fi
\end{proof}
The theorem statements in
\ifappendix the appendix
\else \citet{jangda:lambda-lambda-arxiv} \fi
formalizes the conditions of the
bisimulation, but the less formal conditions are useful for
programmers, since they are simple requirements that are easy to ensure.
Therefore, this theorem gives the assurance that the reasoning with the
\pnname semantics is adequate, even though the serverless platform operates
using \pfname.

\paragraph{Example}

Consider the banking serverless function (\cref{working-bank-example}), which
uses transactions and an external key-value store. We can show that this
function satisfies the safety relation using the approach presented in
\cref{naive-examples}. We now argue that this function satisfies the three
conditions of \cref{thm:ext-weak-bisim}, which will allow us to reason about
its execution using the \pnname semantics. After the function receives a request,
it extracts the request's ID and checks to see if the database has a value with
that ID (\ref{line:trans-exist-begin} --
\ref{line:trans-exist-end}) and if so immediately returns the saved value. This satisfies first
condition. The function runs a transaction on lines
\ref{line:trans-commit-1-start} -- \ref{line:trans-commit-1} and
\ref{line:trans-commit-2-start} -- \ref{line:trans-commit-2}, and in each case
it stores the value in the database and returns the same value. This satisfies the
second and third conditions.

\section{Serverless Compositions}
\label{spl}

Thus far, we have used \pfname to reason about serverless functions in
isolation, and also extended \pfname to reason about serverless functions that
use a key-value store. In this section, we consider a different kind of extension
to \pfname, which involves a significant change to the serverless computing
platform itself. The change that we make is different from, but inspired by the
work of \citet{baldini:trilemma}. This section first motivates why we would
want to change the serverless platform, and then formalizes the modified
platform. The key takeaway from this section, is that the model of the
modified platform extends \pfname{} as-is, and doesn't involve changing the
definitions and reduction rules in \Cref{function-semantics}. We have also
implemented this modified platform, which we discuss in \cref{sec:eval}.

\begin{figure}
\lstset{language=JavaScript}
\begin{lstlisting}
let request = require('request-promise-native');
exports.postStatus = function(req, res) {
  let {state, sha, url, repo} = req.body;
  request.post({
    url: postStatusToGithub,
    json: { state: state, sha: sha, url: url, repo: repo }})
  .then(function (response, body) {
    if (response.state === "failure") {
      request.post({
        url: postToSlack,
        json: { channel: "<id>", text: "<msg>" }});}});}
\end{lstlisting}
\caption{A serverless function to receive build state from Google Cloud Build,
set status on GitHub using \texttt{postStatusToGitHub}, and report failures to Slack using \texttt{postToSlack}.}
\label{post-build-status}
\end{figure}
\subsection{The Need for Serverless Composition Languages}
\label{sec:motSPL}

\lstset{language=spl_native}

Serverless platforms encourage programmers to decompose large applications into
several little functions (or, ``microservices''). This approach has obvious
benefits: smaller functions are easier to understand, and can be reusable.
Amazon Web Services has a \emph{Serverless Application Repository} that
encourages programmers to reuse and share
microservices.\footnote{https://aws.amazon.com/serverless/serverlessrepo/} For
example, a serverless function to post messages on Slack\,--\,which is
available on the AWS Serverless Application Repository\,--\,could be used to
implement notifications for many different applications.

Many development teams use an application that connects a Slack
channel, a GitHub repository, and a continuous integration (CI) service (e.g.,
TravisCI or Google Cloud Build). The CI service tests every commit to the
GitHub repository. After testing completes, (1)~the CI service invokes the
application with the test results, (2)~the application updates the build status
on GitHub, and (3) the application posts a message to Slack only if testing fails. This application
is a good fit for serverless computing and is easy to write by wiring together
existing serverless functions that post to GitHub and Slack, as sketched in
\cref{post-build-status}. This example is a serverless function that acts as a
coordinator (\lstinline|postStatus|) that invokes two other auxiliary serverless
functions (\lstinline|postStatusToGitHub| and \lstinline|postToSlack|).

However, as \citet{baldini:trilemma} point out, a major problem with this
approach is that the programmer gets ``double-billed'' and has to pay for the time spent
running the coordinator function, which is mostly idle, and for the time spent doing
actual work in the auxiliary functions. An alternative approach is to merge
several functions into a single function. Unfortunately, this approach hinders
code-reuse. In particular, it does not work when source code is unavailable or
when the serverless functions are written in different languages. A third
approach is to write serverless functions that each pass their output as input
to another function, instead of returning to the caller (\ie,
continuation-passing style). However, this approach requires rewriting code.
Moreover, some clients, such as web browsers, cannot produce a continuation
\textsc{url} to receive the final result. The only way to resolve this problem
is to modify the serverless computing platform with function composition
primitives.

\begin{figure}
\newcommand{\splsemspace}{\quad\quad\quad\quad\quad}
\footnotesize
\(
\begin{array}{@{}lr@{\,}c@{\,}ll}
\textrm{Values}          	& v
  & \pdef & \cdots \\
& & \mid  & (v_1, v_2)      & \textrm{Tuples} \\
\textrm{\spl expressions} & \pe
  & \pdef & \pepure{f} & \textrm{Invoke serverless function} \\
& & \mid  & \pefirst{\pe} & \textrm{Run $\pe$ to first part of input} \\
& & \mid  & \peseq{\pe_1}{\pe_2} & \textrm{Sequencing} \\
\textrm{\spl continuations} & \pkontid
  & \pdef & \pkreturn{x} & \textrm{Response to request} \\
& & \mid  & \pkseq{\pe}{\pkontid} & \textrm{In a sequence}  \\
& & \mid  & \pkfirst{v}{\pkontid} & \textrm{In \kw{first}} \\
\textrm{Components}  & C
  & \pdef & \cdots \\
& & \mid  & \prune{\pe}{v}{\pkontid} & \textrm{Running program} \\
& & \mid  & \pwaite{x}{\pkontid}     & \textrm{Waiting program} \\
& & \mid  & \preq{e}{x}{v}           & \textrm{Run program $e$ on $v$} \\
\multicolumn{5}{@{}l}{\fbox{$\pfstep{\pfstatetyp}{\ell}{\pfstatetyp}$}} \\
\multicolumn{5}{@{\splsemspace}l}{
    \inferrule*[Left=P-NewReq]{\textrm{$x$ is fresh}}{\pfstep{\mathcal{C}}{\kw{start}(v)}{\mathcal{C}\preq{e}{x}{v}}}} \\[1em]
\multicolumn{5}{@{\splsemspace}l}{\inferrule*[Left=P-Start]
    {}
    {\pfstep{\mathcal{C} \preq{e}{x}{v}}
            {}
            {\mathcal{C} \preq{e}{x}{v}  \prune{e}{v}{\pkreturn{x}}}}}  \\[1em]
\multicolumn{5}{@{\splsemspace}l}{\inferrule*[Left=P-Respond]
    {}
    {\pfstep{\mathcal{C}\pwaite{v'}{\pkreturn{x}}\preq{e}{x}{v}}
            {\kw{stop}(v')}
            {\mathcal{C} \presp{x}{v'}}}} \\[1em]
\multicolumn{5}{@{\splsemspace}l}{\inferrule*[Left=P-Seq1]
    {}
    {\pfstep{\mathcal{C}\prune{\peseq{\pe_1}{\pe_2}}{v}{\pkontid}}
            {}
            {\mathcal{C}\prune{\pe_1}{v}{\pkseq{\pe_2}{\pkontid}}}}}  \\[1em]
\multicolumn{5}{@{\splsemspace}l}{\inferrule*[Left=P-Seq2]
    {}
    {\pfstep{\mathcal{C}\pwaite{v}{\pkseq{\pe}{\pkontid}}}
            {}
            {\mathcal{C}\prune{\pe}{v}{\pkontid}}}}   \\[1em]
\multicolumn{5}{@{\splsemspace}l}{\inferrule*[Left=P-Invoke1]
    {\textrm{$x'$ is fresh}}
    {\pfstep{\mathcal{C}\prune{\pepure{f}}{v}{\pkontid}}
            {}
            {\mathcal{C}\pwaite{x'}{\pkontid}\preq{f}{x'}{v} }}}   \\[1em]
\multicolumn{5}{@{\splsemspace}l}{\inferrule*[Left=P-Invoke2]
    {}
    {\pfstep{\mathcal{C}\pwaite{x}{\pkontid}\presp{x}{v}}
            {}
            {\mathcal{C}\pwaite{v}{\pkontid}}}}   \\[1em]
\multicolumn{5}{@{\splsemspace}l}{\inferrule*[Left=P-First1]
    {}
    {\pfstep{\mathcal{C}\prune{\pefirst{\pe}}{(v_1, v_2)}{\pkontid}}
            {}
            {\mathcal{C}\prune{\pe}{v_1}{\pkfirst{v_2}{\pkontid}}}}}  \\[1em]
\multicolumn{5}{@{\splsemspace}l}{\inferrule*[Left=P-First2]
    {}
    {\pfstep{\mathcal{C}\pwaite{v_1}{\pkfirst{v_2}{\pkontid}}}
            {}
            {\mathcal{C}\pwaite{(v_1,v_2)}{\pkontid}}}} \\[1em]
\multicolumn{5}{@{\splsemspace}l}{\inferrule*[Left=P-Die]
    {}
    {\pfstep{\mathcal{C} \pwaite{v}{\kappa}}{}{\mathcal{C}}}}
\end{array}
\)
\caption[Extended semantics with \spl.]{Extending \pfname with \spl.}
\label{spl-semantics}
\end{figure}

\subsection{Composing Serverless Functions with Arrows}
\label{sec:designSPL}

We now extend \pfname with a domain specific language for composing
serverless functions, which we call \spl (\emph{serverless programming
language}). Since the
serverless platform is a shared resource and programs are untrusted, \spl
cannot run arbitrary code. However, \spl programs can invoke serverless
functions to perform arbitrary computation when needed. Therefore, invoking a
serverless function is a primitive operation in \spl, which serves as
the wiring between several serverless functions.

\Cref{spl-semantics} extends the \pfname with \spl.
This extension allows requests to run \spl programs ($\preq{\pe}{x}{v}$),
in addition to ordinary requests that name serverless
functions.\footnote{In practice, a
request would name an \spl program instead of carrying the program
itself.}
\spl is based on Hughes' arrows~\cite{hughes:arrows}, thus it supports the
three basic arrow combinators. An \spl
program can (1)~invoke a serverless function ($\kw{invoke}~f$); (2)~run
two subprograms in sequence ($\peseq{\pe_1}{\pe_2}$); or (3)~run a subprogram
on the first component of a tuple, and return the second component
unchanged ($\pefirst{\pe}$). These three operations
are sufficient to describe loop- and branch-free
compositions of serverless functions. It is straightforward to add support
for bounded loops and branches, which we do in our implementation.

To run \spl programs, we introduce a new kind of component
($\mathbb{E}$) that executes programs using an abstract machine that is similar
to a \textsc{ck} machine~\cite{felleisen:cek}. In other words, the evaluation rules
define a small-step semantics with an explicit representation of the
continuation (\pkontid). This design is necessary because programs need to
suspend execution to invoke serverless functions (\textsc{P-Invoke1}) and then
later resume execution (\textsc{P-Invoke2}). Similar to serverless functions,
\spl programs also execute at-least-once. Therefore, a single request may spawn
several programs (\textsc{P-Start}) and a program may die while waiting
for a serverless function to response (\textsc{P-Die}).

\begin{figure}
\footnotesize
\begin{subfigure}[t]{0.45\columnwidth}
  \vspace{0pt}
  \(
\begin{array}{@{}lr@{\,}l@{\,}l}
  \multicolumn{3}{@{}l} {\textrm{\spl expressions}} \\
  \pe & \pdef & \cdots \mid \pjid & \textrm{Run transformation} \\
  \multicolumn{3}{@{}l} {\textrm{\textsc{json} values}} \\
v &::=& n \mid b \mid \mathit{str} \mid \kw{null} \\
  \multicolumn{3}{@{}l} {\textrm{\textsc{json} pattern}} \\
    \pjid & \pdef & v &\textrm{\textsc{json} literal} \\
    \multicolumn{3}{@{}l} {\textrm{\textsc{json} query}} \\
    \pjq & \pdef &                   & \textrm{Empty query} \\
  & \mid  & .[n] \pjq         & \textrm{Array index} \\
  & \mid  & .\mathit{id} \pjq & \textrm{Field lookup} \\
\end{array}
\)
\end{subfigure}
\begin{subfigure}[t]{0.45\columnwidth}
  \vspace{0pt}
\(
\begin{array}{@{}lr@{\,}l@{\,}ll}
  \multicolumn{3}{@{}l}{\textrm{\textsc{json} pattern}} \\
  \pjid
& \pdef & v &\textrm{\textsc{json} literal} \\
 & \mid  & [ \pjid_1, \cdots, \pjid_n ] & \textrm{Array} \\
 & \mid  & \{ \mathit{str}_1 : \pjid_1, \cdots, \mathit{str}_n : \pjid_n \} & \textrm{Object} \\
         &|&\pjid_1~op~\pjid_2 &\textrm{Operators}\\
         &|&\mathtt{if}~(\pjid_1)~\mathtt{then}~\pjid_2~\mathtt{else}~\pjid_3&\textrm{Conditional}\\
         &|&[\mathit{str}_1 \rightarrow \mathit{p}_1]&\textrm{Update field}\\
 & \mid  & \kw{in}~\pjq & \textrm{Input reference} \\
\end{array}
\)
\end{subfigure}
\caption{\textsc{json} transformation language.}
\label{json-dsl}
\end{figure}

\paragraph{A sub-language for \textsc{json} transformations.}

A problem that arises in practice is that input and output values
to serverless functions ($v$) are frequently formatted as \textsc{json}
values, which makes it hard to define the \kw{first} operator in a satisfactory
way. For example, we could define
\kw{first} to operate over two-element \textsc{json} arrays, and then require
programmers to write serverless functions to transform arbitrary \textsc{json}
into this format. However, this approach is cumbersome and resource-intensive.
For even simple transformations, the programmer would have to write and deploy
serverless functions; the serverless platform would need to sandbox the process
using heavyweight \textsc{os} mechanisms; and the platform would have to copy
values to and from the process.

Instead, we augment \spl with a sub-language of \textsc{json} transformations
(\cref{json-dsl}). This language is a superset of \textsc{json}. It has
a distinguished variable (\pjin) that refers to the input \textsc{json} value,
which may be followed by a query to select fragments of the input.
For example, we can use this transformation language to write an \spl program
that receives a two-element array as input and then runs two different
serverless functions on each element:
\lstset{language=spl_native}
\begin{lstlisting}[numbers=none]
first (invoke f) >>> [in[1], in[0]] >>> first (invoke g)
\end{lstlisting}
Without the \textsc{json} transformation language, we would need an
auxiliary serverless function to swap the elements.

\paragraph{A simpler notation for \spl programs.}

\begin{figure}
\footnotesize
\captionsetup[subfigure]{justification=justified,singlelinecheck=false}
\begin{subfigure}{0.47\columnwidth}
\lstset{language=spl_surface}
\begin{lstlisting}[numbers=none]
a <- invoke f(in); b <- invoke g(in);
c <- invoke h({ x: b, y: a.d }); ret c;
\end{lstlisting}
\caption{Surface syntax program.}
\label{spl-opt-example}
\end{subfigure}
\hfill
\begin{subfigure}{0.47\columnwidth}
\lstset{language=spl_native}
\begin{lstlisting}[numbers=none]
[in, { input: in }] >>>
first (invoke f) >>>
[in[1].input, in[1][a -> in[0]]] >>>
first (invoke g) >>>
[{x: in[0], y: in[1].a.d}, in[1]] >>>
first (invoke h) >>> in[0]
\end{lstlisting}
\caption{Naive translation to \spl.}
\label{spl-unopt}
\end{subfigure}

\begin{subfigure}{0.47\columnwidth}
\lstset{language=spl_native}
\begin{lstlisting}[numbers=none]
[in, { input: in }] >>>
first (invoke f) >>>
[in[1].input, { a: in[0] }] >>>
first (invoke g) >>>
[{x: in[0], y: in[1].a.d}, {}] >>>
first (invoke h) >>> in[0]
\end{lstlisting}
\caption{Live variable analysis eliminates several fields.}
\label{spl-liveness-opt}
\end{subfigure}
\hfill
\begin{subfigure}{0.47\columnwidth}
\lstset{language=spl_native}
\begin{lstlisting}[numbers=none]
[in, { input: in }] >>>
first (invoke f) >>>
[in[1].input, { ad: in[0].d }] >>>
first (invoke g) >>>
[{ x: in[0], y: in[1].ad }, {}] >>>
first (invoke h) >>> in[0]
\end{lstlisting}
\caption{Live key analysis immediately projects \lstinline{a.d}.}
\label{spl-keys-opt}
\end{subfigure}

\caption{Compiling the surface syntax of \spl.}
\label{fig:compiling spl}
\end{figure}

\spl is designed to be a minimal set of primitives that are
straightforward to implement in a serverless platform. However, \spl programs
are difficult for programmers to comprehend. To address this problem,
we have also developed a surface syntax for \spl that is
based on Paterson's notation for arrows~\cite{paterson:arrow-notation}.
Using the surface syntax, we can rewrite
the previous example as follows:
\lstset{language=spl_surface}
\begin{lstlisting}[numbers=none]
x <- invoke f(in[0]); y <- invoke g(in[1]); ret [y, x];
\end{lstlisting}
This version is far less cryptic than the original.

We describe the surface syntax compiler by example. At a high level, the
compiler produces \spl programs in store-passing style.
For example, \cref{spl-opt-example} shows a surface syntax
program that invokes three serverless functions ($f$, $g$, and $h$). However, the
composition is non-trivial because the input of each function is not simply
the output of the previous function. We compile this program to an
equivalent \spl program that
 uses the \textsc{json} transformation language to save
intermediate values in a dictionary (\Cref{spl-unopt}).
 However, this naive translation carries
unnecessary intermediate state. We address this problem with two optimizations.
First, the compiler performs a live variable analysis, which produces the more
compact program shown in \cref{spl-liveness-opt}. In the original program, the input
reference (\pjin) is not live after $g$, and $c$ is the only live variable after
$h$, thus these are eliminated from the state.
Second, the compiler performs a liveness analysis of the \textsc{json} keys
returned by serverless functions, which produces an even smaller
program (\Cref{spl-keys-opt}). In our example, $f$
returns an object $a$, but the program only uses $a.d$ and discards any other
fields that $a$ may have. There are many situations where the entire object $a$
may be significantly larger than $a.d$, thus extracting it early can shrink
the amount of state a program carries.

\subsection{Implementation}
\label{sec:implSPL}

\paragraph{OpenWhisk implementation.}

Apache OpenWhisk is a mature and widely-deployed serverless platform that is
written in Scala and is the foundation of \textsc{ibm} Cloud Functions. We have
implemented \spl as a 1200 \textsc{loc} patch to OpenWhisk, which includes
the surface syntax compiler and several changes to the OpenWhisk runtime
system. We inherit OpenWhisk's fault tolerance mechanisms (\eg, at-least-once
execution) and reuse OpenWhisk's support for serverless function
sequences~\cite{baldini:trilemma} to implement the \pseqtok{} operator of
\spl.

Our OpenWhisk implementation of \spl has three differences from the language
presented so far. First, it supports bounded loops, which are a programming
convenience. Second, instead of implementing the \kw{first} operator and the
\textsc{json} transformation language as independent expressions, we have a
single operator that performs the same action as \kw{first}, but applies a
\textsc{json} transformation to the input and output, which is how
transformations are most commonly used. Finally, we implement a multi-armed
conditional, which is a straightforward extension to \spl. These operators
allow us to compile the surface syntax to smaller \spl programs, which
moderately improves performance.

\paragraph{Portable implementation.}

We have also built a portable implementation of \spl (1156
\textsc{loc} of Rust) that can invoke serverless functions in public clouds.
(We have tested with Google Cloud Functions.) Whereas the OpenWhisk
implementation allows us to carefully measure load and utilization on our own
hardware test-bed, we cannot perform the same experiments with our standalone
implementation, since public clouds abstract away the hardware used to run
serverless functions. The portable implementation has helped us ensure that the
design of \spl is independent of the design and implementation of OpenWhisk,
and we have used it to explore other kinds of features that a serverless
platform may wish to provide. For example, we have added a \kw{fetch} operator
to \spl that receives the name of a file in cloud storage as input and produces
the file's contents as output. It is common to have serverless functions fetch
private files from cloud storage (after an access control check). The \kw{fetch}
operator can make these kinds of functions faster and consume fewer resources.

\section{Evaluation}
\label{sec:eval}

This section first evaluates the performance of \spl{} using microbenchmarks,
and then highlights the expressivity of \spl{} with three case studies.

\subsection{Comparison to OpenWhisk Conductor}

\begin{figure}

\begin{subfigure}{0.3\textwidth}
\begin{tikzpicture}
\node{\pgfimage[width=.95\columnwidth]{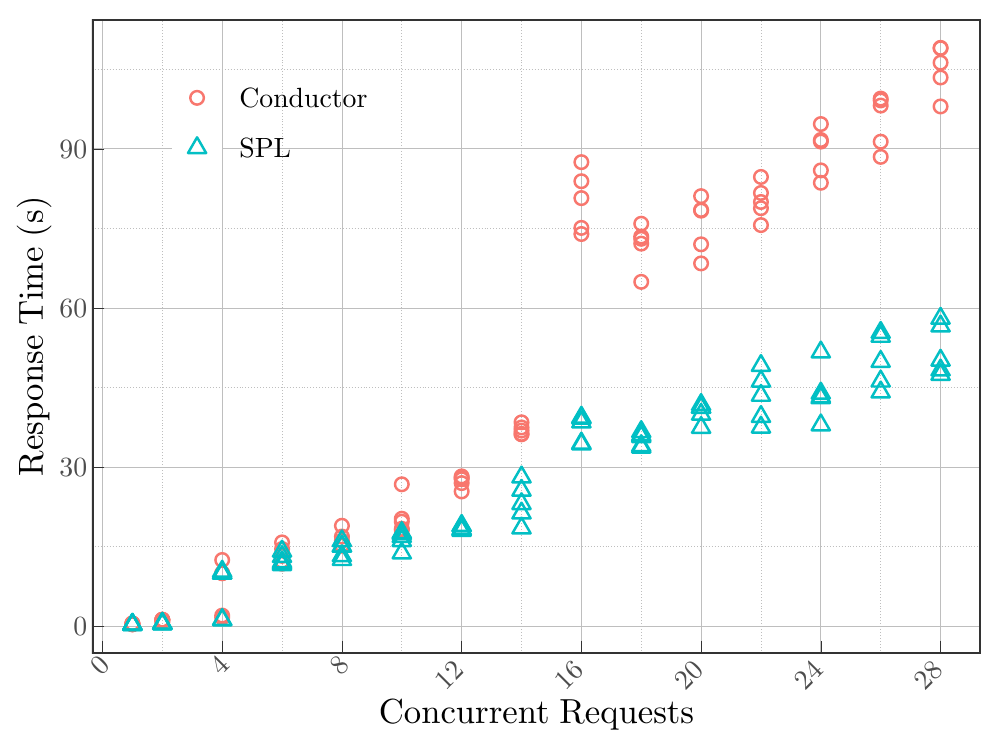}};
\end{tikzpicture}
\caption{Varying concurrent requests.}
\label{fig:invocations-results}
\end{subfigure}
\begin{subfigure}{0.3\textwidth}
\begin{tikzpicture}
\node{\pgfimage[width=.95\columnwidth]{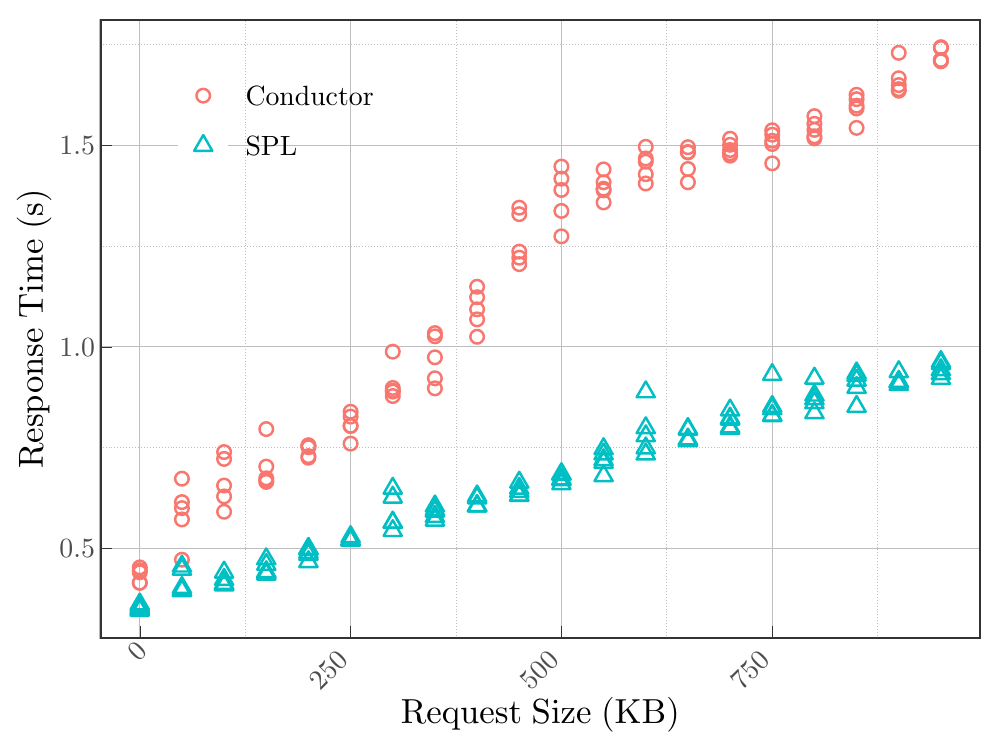}};
\end{tikzpicture}
\caption{Varying request size.}
\label{fig:params-results}
\end{subfigure}
\begin{subfigure}{0.3\textwidth}
\begin{tikzpicture}
\node{\pgfimage[width=.95\columnwidth]{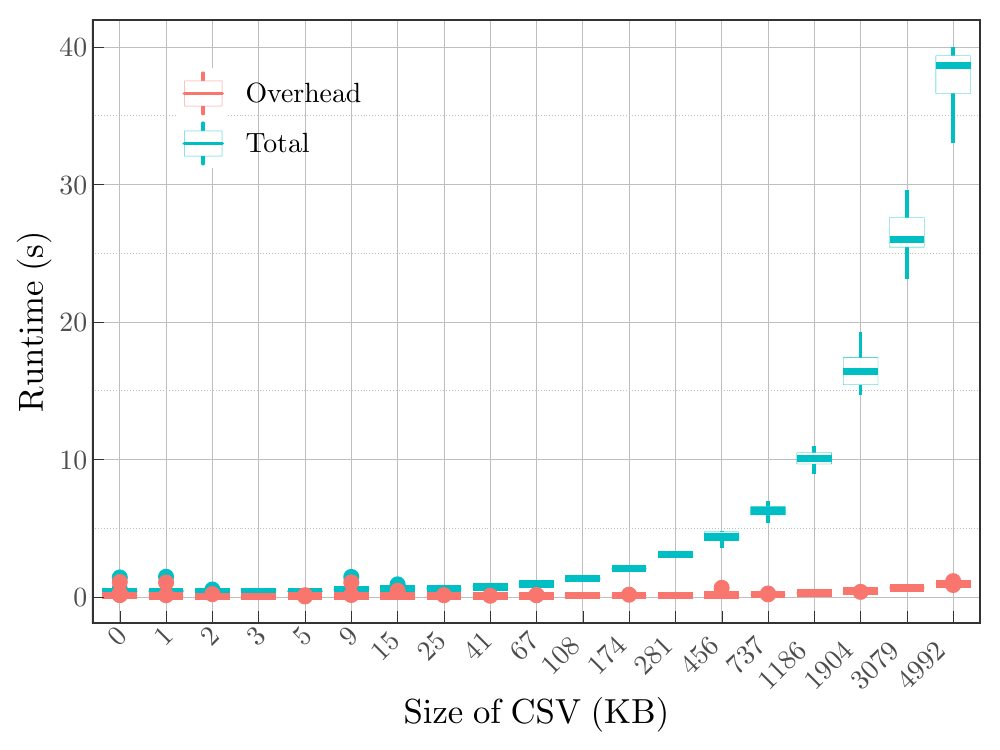}};
\end{tikzpicture}
\caption{Varying CSV size.}
\label{fig:case-study-results}
\end{subfigure}

\caption{\spl benchmarks. \Cref{fig:invocations-results,fig:params-results}
show that \spl is faster than OpenWhisk Conductor when the serverless platform is
processing several concurrent requests, or when the request size increases.
\Cref{fig:case-study-results} shows that most of the execution time is spent
in serverless functions, and not running \spl.}

\end{figure}

The Apache OpenWhisk serverless platform has built-in support for composing
serverless functions using a \emph{Conductor}~\cite{rabbah:composer-better}. A
Conductor is special type of serverless function that, when invoked, may
respond with the name and arguments of an auxiliary serverless function for the
platform to invoke, instead of returning immediately. When the auxiliary
function returns a response, the platform re-invokes the Conductor with the
response value. Therefore, the platform interleaves the execution of the
Conductor function and its auxiliary functions, which allows a Conductor
to implement sequential control flow, similar to \spl. The code for a Conductor
can be written in an arbitrary language (e.g., JavaScript).
The key difference between \spl and Conductor,
is that \spl is designed to run directly on the platform, the
Conductor has to be executed in a container, which consumes additional
resources.

This section compares the performance of \spl to Conductor with a
microbenchmark that stresses the performance of the serverless platform. The
benchmark \spl program runs a sequence of ten serverless functions, and we
translate it to an equivalent program for Conductor. We run two experiments
that (1)~vary the number of concurrent requests and (2)~vary the size of the
requests. In both experiments, we measure end-to-end response time, which is the
metric that is most relevant to programmers. We find that \spl outperforms
Conductor in both experiments, which is expected because its design requires
fewer resources, as explained below.

We conduct our experiments on a six-core Intel Xeon E5-1650 with 64 GB RAM
with Hyper-Threading enabled.

\paragraph{Concurrent invocations.} Our first experiment shows how response
times change when the system is processing several concurrent requests. We run
$N$ concurrent requests of the same program, and then measure five response
times. \Cref{fig:invocations-results} shows that \spl is slightly faster than
Conductor when $N \le 12$, but approximately twice as fast when $N > 12$. We
attribute this to the fact that Conductor interleaves the conductor function
with the ten serverless functions, thus it requires twice as many containers to
run. Moreover, since our \textsc{cpu} has six hyper-threaded cores, Conductor
is overloaded with 12 concurrent requests.

\paragraph{Request size.} Our second experiment shows how response
times depend on the size of the input request. We use the same microbenchmark
as before, and ensure that the platform only processes one request at a time.
We vary the request body size from 0KB to 1MB and measure five response times at each request size.
\Cref{fig:params-results} shows that \spl is almost twice as fast as
Conductor. We again attribute this to the fact that Conductor needs to copy
the request across a sequence of functions that is twice as long as \spl.

\paragraph{Summary.} The \textsc{os}-based isolation techniques that
Conductor uses have a nontrivial cost. \spl, since it uses language-based
isolation, is able to lower the resource utilization of serverless compositions
by up to a factor of two.

\subsection{Case Studies}

The core \spl language, presented in \cref{spl}, is a minimal fragment that
lacks convenient features that are needed for real-world programming. To
identity the additional features that are necessary, we have written several
different kinds of \spl programs, and added new features to \spl when
necessary. Fortunately, these new features easily fit the structure established
by the core language. This section presents some of the programs that we've
built and discusses the new features that they employ. These examples illustrate
that it is easy to grow core \spl into a convenient and expressive programming
language for serverless function composition.

\begin{figure}
\captionsetup[subfigure]{justification=justified,singlelinecheck=false}
\begin{subfigure}{\columnwidth}
\lstset{language=spl_surface}
\begin{lstlisting}
if (in.amount > 100) {
  invoke bank({ type: "deposit", to: "checking",
          amount: in.amount - 100, transId: in.tId1 });
  invoke bank({ type: "deposit", to: "savings",
          amount: 100, transId: in.tId2 });
} else {
  invoke bank({ type: "deposit", to: "checking",
          amount: in.amount, transId: in.tId1 });
} ret;\end{lstlisting}
\caption{Receive funds, deposit \$100 in savings (if feasible),
and deposit the rest in checking.}
\label{spl-bank-example}
\end{subfigure}

\begin{subfigure}{\columnwidth}
\lstset{language=spl_surface}
\begin{lstlisting}
invoke postStatusToGitHub({ state: in.state,
  sha: in.sha, url: in.url, repo: "<repo owner/name>" });
if (in.state == "failure") {
  invoke postToSlack({ channel: "<id>", text: "<msg>" });
} ret;\end{lstlisting}
\caption{Receive build state from Google Cloud Build, set status on GitHub, and
report failures on Slack.}
\label{ci-example}
\end{subfigure}
\begin{subfigure}{\columnwidth}
\lstset{language=spl_surface}
\begin{lstlisting}
data <- get in.url;
json <- invoke csvToJson(data);
out <- invoke plotJson({data:json,x:in.xAxis,y:in.yAxis});
ret out;
\end{lstlisting}
\caption{Receive a \textsc{url} of a \textsc{csv} file and two column names, download the file,
convert it to \textsc{json}, then plot the \textsc{json}.}
\label{spl-csv-ebuildsxample}
\end{subfigure}
\caption{Example \spl{} programs.}
\end{figure}

\paragraph{Conditional bank deposits.}

\Cref{spl-bank-example} uses \spl to write a bank deposit function using the
\lstinline|deposit| function from \Cref{working-bank-example}.
If the received amount is greater than \$100, it is split in two parts
and deposited into the checking and savings accounts, by calling \lstinline|deposit|.
This \spl program does not suffer from ``double billing'' because the serverless
platform suspends the \spl program when it invokes
a serverless function and resumes it when the response is ready. This example
also shows that our implementation supports basic arithmetic, which we add to
the \textsc{json} transformation sub-language in a straightforward way.

\paragraph{Continuous integration.}

\Cref{ci-example} uses \spl to rewrite the Continuous Integration example
(~\Cref{post-build-status}), and is based on an example from
\citet{baldini:trilemma}. The program in \Cref{post-build-status} suffered from
the ``double billing''
problem, since the composite serverless function needs to be active while
waiting for the \lstinline|postStatusToGitHub| and \lstinline|postToSlack| do
the actual work. In contrast, the serverless platform suspends the \spl
program when it invokes a serverless function and resumes it when the response is
ready.  Our \spl program
(\cref{ci-example}) connects GitHub, Slack, and Google Cloud Build (which is a
continuous integration tool, similar to TravisCI).
Cloud Build makes it easy to run tests when a new commit is pushed to GitHub. But, it
is much harder to see the test results in a convenient way. However, Cloud
Build can invoke a serverless function when tests complete, and we use it to
run an \spl program that (1)~uses the GitHub
\textsc{api} to add a test-status icon next to each commit message, and
(2)~uses the Slack \textsc{api} to post a message whenever a test fails.
Instead of writing a monolithic serverless function, we first write two
serverless functions that post to Slack and set GitHub status icons
respectively, and let the \spl{} program invoke them. It is
easy to reuse the GitHub and Slack functions in other applications.

\paragraph{Data visualization.}

Our last example (\cref{spl-csv-ebuildsxample}) receives the \textsc{url} of a
\textsc{csv}-formatted data file with two columns, plots the data, and responds
with the plotted image. This is the kind of task that a power-constrained
mobile application may wish to offload to a serverless platform, especially
when the size of the data is significantly larger than the plot. Our program
invokes two independent serverless functions that are trivial wrappers
around popular JavaScript libraries: \emph{csvjson}, which converts \textsc{csv}
to \textsc{json} and \emph{vega}, which creates plots from \textsc{json}
data. This example uses a new primitive (\kw{get}) that our
implementation supports to download the data file. Downloading is a
common operation that is natural for the platform to provide. Our implementation
simply issues an \textsc{http} \textsc{get} request and suspends the \spl program
until the response is available. However, it is easy to imagine more sophisticated
implementations that support caching, authorization, and other features.
Finally, this example show that our \spl implementation is not limited
to processing \textsc{json}. The \kw{get}
command produces a plain-text \textsc{csv} file, and the \lstinline|plotJson|
invocation produces a \textsc{jpeg} image.

A natural question to ask about this example is whether the decomposition
into three parts introduces excessive communication overhead. We investigate
this by varying the size of the input \textsc{csv}s (ten trials per size),
and measuring the total running time and the overhead, which we define
as the time spent outside serverless functions (\ie, transferring data,
running \kw{get}, and applying \textsc{json} transformations).
\Cref{fig:case-study-results} shows that even as the file size approaches
5 MB, the overhead remains low (less than 3\% for a 5 MB file, and up to 25\% for 1 KB file).

\section{Related Work}
\label{sec:Related Work}

\paragraph{Serverless computing.}

\citet{baldini:trilemma}
introduce the problem of serverless function composition and present a new
primitive for serverless function sequencing. Subsequent work develops
serverless state machines (Conductor) and a \textsc{dsl} (Composer) that makes state
machines easier to write~\cite{rabbah:composer-better}. In \Cref{spl}, we
present an alternative set of composition operators that we formalize as an
extension to \pfname, implement in OpenWhisk, and evaluate their performance.

Trapeze~\cite{alpernas:trapeze} presents dynamic \textsc{ifc} for serverless
computing, and further sandboxes serverless functions to mediate their
interactions with shared storage. Their Coq formalization of
termination-sensitive noninterference does not model some features of
serverless platforms, such as warm starts and failures, that our semantics does
model.

Several projects exploit serverless computing for elastic
parallelization~\cite{jonas:pywren, fouladi:excamera, ao:sprocket, fouladi:gg}.
\Cref{spl} addresses modularity and does not support parallel execution.
However, it would be an interesting challenge to grow the \textsc{dsl} in
\cref{spl} to the point where it can support the aforementioned applications without any
non-serverless computing components. It is worth noting that today's serverless
execution model is not a good fit for all applications. For example,
\citet{singhvi:nf} list several barriers to running network functions on
serverless platforms.

Serverless computing and other container-based platforms suffer several
performance and utilization barriers. There are several ways to address these
problems, including datacenter design~\cite{gan:microservices}, resource
allocation~\cite{bjorkqvist:noms16}, programming
abstractions~\cite{rabbah:composer-better, baldini:trilemma}, edge
computing~\cite{aske:mspc}, and cloud container design~\cite{x-containers:shen}.
\pfname is designed to elucidate subtle semantic
issues (not performance problems) that affect programmers building serverless
applications.

\paragraph{Language-based approaches to microservices.}

SKC~\cite{gabbrielli:SKC} is another formal semantics of serverless computing
that models how serverless functions can interact with each other. Unlike
\pfname, it does not model certain low-level, observable details of serverless
platforms, such as warm starts.

$\lambda_{\textrm{FAIL}}$~\cite{lambda-fail} is a semantics for
horizontally-scaled services with durable storage, which are related to
serverless computing. A key difference
between \pfname and $\lambda_{\textrm{FAIL}}$ is that \pfname models
\emph{warm-starts}, which occur when a serverless platform runs a new request on
an old function instance, without resetting its state. Warm-starts make it hard to
reason about correctness, but this paper presents an approach to do so. Both
\pfname and $\lambda_{\textrm{FAIL}}$ present weak bisimulations between
detailed and \pnname semantics. However, \pfname's \pnname semantics
processes a single request at a time, whereas $\lambda_{\textrm{FAIL}}$'s
idealized semantics has concurrency.
We use \pfname to specify a protocol to ensure that serverless
functions are idempotent and fault tolerant. However, $\lambda_{\textrm{FAIL}}$
also presents a compiler that automatically ensures that these properties
hold for C\# and F\# code. We believe the approach would work for \pfname.
\cref{spl} extends \pfname with new primitives, which we then implement
and evaluate.

Whip~\cite{waye:whip} and \texttt{ucheck}~\cite{panda:hotos2017} are tools that
check properties of microservice-based applications at run-time. These works are
complementary to ours. For example, our paper identifies several important
properties of serverless functions, which could then be checked using Whip or
\texttt{ucheck}.

Orleans~\cite{orleans-msr} is a programming language for developing distributed
systems using virtual actors. Orleans raises the level of abstraction
and provides better locality by automatically
placing hot actors nearby. Orleans is complementary to \pfname, for example, Orleans
can be used to implement \pfname semantics to develop a serverless system.

\paragraph{Cloud orchestration frameworks.}

Ballerina~\cite{weerawarana:ballerina} is a language for managing cloud
environments; Engage~\cite{fischer:engage} is a deployment manager that
supports inter-machine dependencies; Pulumi~\cite{pulumi} is an embedded
\textsc{dsl} for writing programs that configure and run in the cloud;
and CPL~\cite{bracevac:cpl} is a unified language for writing distributed cloud
programs together with their distribution routines. In
contrast, \pfname is a semantics of serverless computing. \cref{spl} uses
\pfname to design and implement a language for composing serverless functions
that runs within a serverless platform.

\paragraph{Verification.}

There is a large body of work on verification, testing, and modular programming
for distributed systems and algorithms (\eg,
\cite{desai:oopsla2018,dragoi:popl2016,guha:coqnet,sergey:popl2018,wilcox:verdi,chajed:sosp2018,bakst:oopsla2017,gomes:oopsla2017,hawblitzel:sosp2015}).
The serverless computation model is more constrained than arbitrary distributed
systems and algorithms. This paper presents a formal semantics of serverless
computing, \pfname, with an emphasis on low-level details that are
observable by programs, and thus hard for programmers to get right.
To demonstrate that \pfname is useful, we present three applications that
employ it and extend it in several ways. This paper does not address verification
for serverless computing, but \pfname could be used as a foundation
for future verification work.

$\lambda_{\textrm{zap}}$~\cite{lambdazap} is a model of computation in the
presence of transient hardware faults (e.g., bit flips). To detect and recover
from such faults, $\lambda_{\textrm{zap}}$ programs replicate computations and
use majority voting. Serverless computing does not address these kinds of data
errors, but does address communication failures by reinvoking functions on new
machines, which we model in \pfname. Unfortunately, function reinvocation
and other low-level behaviors are observable by serverless functions, and its
easy to write a function that goes wrong when reinvoked. Using \pfname, this
paper lays out a methodology to reason about serverless functions while
abstracting away low-level platform behaviors.

\section{Conclusion}
\label{sec:Contributions}

We have presented \pfname, an operational semantics that models the low-level
of serverless platforms that are observable by programmers. We have also
presented three applications of \pfname. (1)~We prove a weak bisimulation to
characterize when programmers can ignore the low-level details of \pfname.
(2)~We extend \pfname with a key-value store to reason about stateful
functions. (3)~We extend \pfname with a language for serverless function
composition, implement it, and evaluate its performance. We hope that these
applications show that \pfname can be a foundation for further research on
language-based approaches to serverless computing.

\section*{Acknowledgements}

This work was partially supported by the National Science Foundation
under grants CNS-1413985, CCF-1453474, and CNS-1513055. We thank
Samuel Baxter, Breanna Devore-McDonald, and Joseph Spitzer for their
work on the \spl implementation.

\bibliography{main}

\ifappendix

\newpage

\appendix
\onecolumn

\usetikzlibrary{cd}
\mathchardef\mhyphen="2D
\newcommand{\pinstprime}[3]{\ensuremath{\mathbb{F}'(#1,#2,#3, \pinstid)}}
\newcommand{\plambdahist}{$\lambda$\awslambda$ +\mathit{hist}$}
\newcommand{\plambdadie}{$\lambda$\awslambda$ +\neg\mathit{die}$}
\newcommand{\plambdabuf}{$\lambda$\awslambda$ +\mathit{buf}$}
\newcommand{\plambdahiststate}{\ensuremath{\mathcal{C}_{\mathit{hist}}}}
\newcommand{\plambdadiestate}{\ensuremath{\mathcal{C}_{\neg\mathit{die}}}}
\newcommand{\plambdabufstate}{\ensuremath{\mathcal{C}_{\mathit{buf}}}}
\newcommand{\plambdaf}{\ensuremath{\lambda\awslambda/f}}
\newcommand{\plambdafstate}{\ensuremath{\mathcal{C}_f}}

\section{Serverless Platform without History}
\label{appendix1}

This section augments the semantics of \pfname to include an execution
history for each function instance. We call this extended semantics
\plambdahist. In this semantics, every function instance
records its 1)~input request when busy and 2)~the history of internal
states, beginning with the state in which it received the request.
The syntax of \plambdahist{} extends \cref{function-semantics} as follows,
and excludes function instances without histories ($\mathbb{F}$):

\[
\begin{array}{@{}lr@{\,}c@{\,}ll}
\textrm{Execution mode with history} & m'  & \pdef & \pidle
  & \textrm{Idle} \\
                                     &    & \mid  & \pbusy(x,v)
  & \textrm{Processing request with ID $x$ and value $v$} \\
\textrm{Components}  & \mathbb{C} & \pdef &  \cdots \\
                     &            &       &\mathbb{F'}(f,m',\pnstates,y)
  &\textrm{Function instance with history}
\end{array}
\]

The semantics of \plambdahist{} extends \cref{function-semantics} to
maintain execution history:

\[
\begin{array}{@{}lr@{\,}c@{\,}ll}
\multicolumn{5}{@{\quad\quad\quad}l}{\inferrule*[Left=Cold]
  {\textrm{$\pinstid$ is fresh} \\ \precvfn_f(v, \pinitfn(f)) = \sigma}
  {\pfstatetyp \preq{f}{x}{v} \pfsteparr{} \pfstatetyp \preq{f}{x}{v}  \mathbb{F'}(f,\pbusy(x,v),[\pinitfn(f),\sigma],y)
 }} \\[1.5em]
\multicolumn{5}{@{\quad\quad\quad}l}{\inferrule*[Left=Warm]
  {\precvfn_f(v, \sigma) = \sigma'}
  {\pfstatetyp \preq{f}{x}{v} \mathbb{F'}(f,\pidle, [\sigma],y)
         \pfsteparr{}  \pfstatetyp \preq{f}{x}{v} \mathbb{F'}(f,\pbusy(x,v), [\sigma,\sigma'],y)
     }} \\[1.5em]
\multicolumn{5}{@{\quad\quad\quad}l}{\inferrule*[Left=Hidden]
  {\pstepfn_f(\sigma) = (\sigma', \varepsilon)}
  {\pfstatetyp \mathbb{F'}(f,\pbusy(x,v), \psnoc{\pnstates}{\sigma},y) 
         \pfsteparr{} \pfstatetyp\mathbb{F'}(f,\pbusy(x,v), \psnoc{\pnstates}{\sigma,\sigma'},y) 
     }} \\[1.5em]
\multicolumn{5}{@{\quad\quad}l}{\inferrule*[Left=Resp]
	{\pstepfn_f(\sigma) = (\sigma', \ptreturn{v'})}
  {\pfstatetyp \preq{f}{x}{v} \mathbb{F'}(f,\pbusy(x,v), \psnoc{\pnstates}{\sigma},y) 
    \pfsteparr{\kw{stop}(x,v')}
    \pfstatetyp 
    \mathbb{F'}(f,\pidle, [\sigma'],y) 
     \presp{x}{v'}
   }} \\[1.5em]
\multicolumn{5}{@{\quad\quad\quad}l}{\inferrule*[Left=Die]
  {\phantom{.}}
  {\pfstep{\pfstatetyp \mathbb{F'}(f,m', \pnstates,y) }{}{\pfstatetyp}}} \\[1.5em]
\end{array}
\]

Note that the \textsc{Resp} rule above and the \textsc{N-Buffer-Stop} rule
in the naive semantics ensure that all steps in the history are unobservable.

It is straightforward to establish a strong bisimulation between \pfname and
\plambdahist, since the added state does not affect execution. To do so, we
first define the bisimulation relation below. The relation requires execution histories
to be a valid history (\cref{lamlam-hist-bisim-rel}~\cref{real-hist}),
with states that are equivalent to the initial state of the function
(\cref{init-related-1,init-related-2}).

\begin{definition}[Bisimulation Relation]
\label{lamlam-hist-bisim-rel}
We define $\mathcal{C} \approx \mathcal{C}_\mathit{hist}$ as:
\begin{enumerate}
    
\item Both configurations have an identical set of requests ($\mathbb{R}$) and
responses ($\mathbb{S}$), and

\item For every function instance ID ($y$), either:

\begin{enumerate}
    
\item There is no instance with ID $y$ in either configuration,

\item Both instances $y$ are busy, i.e.:

\begin{enumerate}
    
\item $\mathbb{F}(f,\pbusy(x), \sigma_n,y)\mathbb{R}(x,v) \in \mathcal{C}$,

\item $\mathbb{F'}(f,\pbusy(x,v), [\sigma_0,\cdots,\sigma_n],y) \in \mathcal{C}_\mathit{hist}$,

\item\label{real-hist}

$\pstepfn_f(\sigma_0) = (\sigma_1, t_1), \cdots \pstepfn_f(\sigma_{n-1}) = (\sigma_n, t_n)$, 
where $t_i \ne \ptreturn{v}$ and

\item\label{init-related-1} $(\sigma_0,\pinitfn(f)) \in \mathcal{R}$.

\end{enumerate}

\item Both instances $y$ are idle, i.e.:

\begin{enumerate}
    
\item $\mathbb{F}(f,\pidle, \sigma,y)\mathbb{R}(x,v) \in \mathcal{C}$,

\item $\mathbb{F'}(f,\pidle, [\sigma],y) \in \mathcal{C}_\mathit{hist}$,

\item\label{init-related-2} $(\sigma, \pinitfn(f)) \in \mathcal{R}$.

\end{enumerate}

\end{enumerate}

\end{enumerate}

\end{definition}

The strong bisimulation theorem is straightforward.

\begin{theorem}[Bisimulation]
  \label{lam-bisim-lam-hist} If 
$\mathcal{C} \approx \mathcal{C}_\mathit{hist}$
then:

\begin{enumerate}
    
\item If $\mathcal{C}\pfsteparr{\ell}\mathcal{C'}$ then
$\mathcal{C}_\mathit{hist}\pfsteparr{\ell}\mathcal{C'}_\mathit{hist}$ and
$\mathcal{C}' \approx \mathcal{C}_\mathit{hist}'$.

\item If $\mathcal{C}_\mathit{hist}\pfsteparr{\ell}\mathcal{C'}_\mathit{hist}$
then $\mathcal{C}\pfsteparr{\ell}\mathcal{C'}$ and $\mathcal{C}' \approx
\mathcal{C}_\mathit{hist}'$.

\end{enumerate}

\end{theorem}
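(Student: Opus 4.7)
The plan is to prove both directions by straightforward induction on the derivation of the single step, observing that the history information maintained by \plambdahist{} is write-only: no reduction rule in either semantics consults the history ($\pstepfn_f$ and $\precvfn_f$ only examine the current state $\sigma$, which is the last element of $\pnstates$ in \plambdahist{}). So the relation $\approx$ is essentially a projection, and the key work is confirming that each step preserves the invariants of \cref{lamlam-hist-bisim-rel}, especially condition~\ref{real-hist}, which asserts that the recorded history is actually a valid run of $\pstepfn_f$ from a state related to $\pinitfn(f)$.

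For the forward direction, I would case split on the rule used in $\mathcal{C} \pfsteparr{\ell} \mathcal{C}'$. For \textsc{Req}, \textsc{Die}, and \textsc{Hidden} the simulating step in \plambdahist{} is the rule of the same name; for \textsc{Hidden}, the new history is $\psnoc{\pnstates}{\sigma,\sigma'}$, and clause~\ref{real-hist} is preserved by a single extension of the chain $\pstepfn_f(\sigma_{n-1})=(\sigma_n,t_n)$ using the hypothesis $\pstepfn_f(\sigma) = (\sigma', \varepsilon)$, where $\varepsilon \ne \ptreturn{v}$ as required. For \textsc{Cold}, the simulating history is $[\pinitfn(f), \sigma]$, so clauses~\ref{real-hist} and~\ref{init-related-1} hold trivially (the chain has length~one and the first state is literally $\pinitfn(f)$, which is $\prel$-related to itself by reflexivity). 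For \textsc{Warm}, the instance was idle with history $[\sigma]$ where $(\sigma,\pinitfn(f))\in\prel$ (clause~\ref{init-related-2}); the new history $[\sigma,\sigma']$ satisfies~\ref{real-hist} because $\precvfn_f(v,\sigma)=\sigma'$ is exactly the receive step, which in \plambdahist{} is permitted to start a history (not an internal step governed by $\pstepfn_f$). For \textsc{Resp}, the instance becomes idle with history $[\sigma']$, and clause~\ref{init-related-2} must be re-established as $(\sigma', \pinitfn(f)) \in \prel$; this is precisely condition~(4) of the safety relation (\cref{def:safety-relation}) applied to the final $\pstepfn_f$ producing $\ptreturn{v'}$.

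The backward direction mirrors this case analysis: any step of \plambdahist{} has a corresponding step in \pfname{} obtained by erasing the additional history information, and the bisimulation invariants on the remaining state are the same ones verified in the forward case. In both directions, \textsc{Die} is immediate because removing an instance trivially preserves~\ref{lamlam-hist-bisim-rel}, and label matching is automatic since the only observable labels ($\kw{start}$, $\kw{stop}$) appear in rules (\textsc{Req}, \textsc{Resp}) whose behavior on the request/response components $\mathbb{R}$, $\mathbb{S}$ is identical across the two semantics.

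The main obstacle is the \textsc{Resp} case, which is the only place where re-establishing $(\sigma,\pinitfn(f))\in\prel$ for the now-idle instance is nontrivial, since $\sigma'$ is an arbitrary output of $\pstepfn_f$. This is exactly where the safety relation from \cref{def:safety-relation} is load-bearing: without condition~(4), an instance could return and leave behind a state unrelated to $\pinitfn(f)$, violating clause~\ref{init-related-2} of the bisimulation relation for subsequent \textsc{Warm} starts. Everything else is essentially bookkeeping on the history list.
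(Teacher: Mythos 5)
Your proposal is correct and follows exactly the route the paper takes: the paper's entire proof is ``By case analysis on the definition of $\pfsteparr{\ell}$,'' and your case split by rule is that analysis written out in full. Your identification of the \textsc{Resp} case as the one place where condition~(4) of \cref{def:safety-relation} is needed to re-establish $(\sigma',\pinitfn(f)) \in \prel$ for the newly idle instance is the right observation, and the rest is indeed bookkeeping on the history list.
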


\begin{proof}
By case analysis on the definition of $\pfsteparr{\ell}$.
\end{proof}

\subsection{Eliminating Instance Termination}

We define \plambdadie{} as \plambdahist{} without the \textsc{Die} rule. We prove
that evaluation in \plambdadie{} is weakly bisimilar to evaluation in
\plambdahist{}. To do so, we define the bisimulation relation below that allows
\plambdadie{} to have extra function instances. Intuitively, these correspond
to instances that would have died. The latter two conditions are invariants
that ensure that the sequence of states are valid execution histories that
begin with a state that is equivalent to the initial state of the function.

\begin{definition}[Bisimulation Relation]
\label{hist-rel-die}
 We define $\mathcal{C}_\mathit{hist} \approx \plambdadiestate$ as:

\begin{enumerate}

\item Both configurations have an identical set of requests ($\mathbb{R}$) and
responses ($\mathbb{S}$),

\item The function instances of $\plambdadiestate$ are
a superset of the function instances of $\plambdahiststate$,

\item For all idle function instances in either configuration
($\mathbb{F'}(f,\pidle, [\sigma],y)$), we have $(\sigma,
\pinitfn(f)) \in \mathcal{R}$, and

\item For all busy function instances in either configuration
($\mathbb{F'}(f,\pbusy(x,v), [\sigma_0,\cdots,\sigma_n],y)$):

\begin{enumerate}

\item $\pstepfn_f(\sigma_0) = (\sigma_1, \varepsilon), \cdots \pstepfn_f(\sigma_{n-1}) = (\sigma_n, \varepsilon)$, and

\item $(\sigma_0,\pinitfn(f)) \in \mathcal{R}$.

\end{enumerate}

\end{enumerate}

\end{definition}

With this definition, we can state the weak simulation theorem:

\begin{theorem}[Weak Bisimulation]
\label{hist-bisim-die} If $\mathcal{C}_\mathit{hist} \approx
\mathcal{C}_{\mathit{hist}\backslash\mathit{die}}$ then:

\begin{enumerate}
    
\item If $\mathcal{C}_\mathit{hist}\pfsteparr{\ell}\mathcal{C'}_\mathit{hist}$
then $\mathcal{C}_\mathit{\neg\mathit{die}} \pfsteps{\ell}
\mathcal{C'}_\mathit{\neg\mathit{die}}$ and
$\mathcal{C}'_\mathit{hist} \approx
\mathcal{C}_\mathit{\neg\mathit{die}}'$; and

\item If $\mathcal{C}_\mathit{\neg die} \pfsteparr{\ell}
\mathcal{C'}_\mathit{\neg die}$ then
$\mathcal{C}_\mathit{hist}\pfsteps{\ell}\mathcal{C'}_\mathit{hist}$ and
$\mathcal{C}'_\mathit{hist} \approx
\mathcal{C}_\mathit{\neg\mathit{die}}'$.

\end{enumerate}

\end{theorem}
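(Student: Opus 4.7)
The plan is to proceed by case analysis on the single rule fired on each side, exploiting the fact that \cref{hist-rel-die} allows $\mathcal{C}_{\neg\mathit{die}}$ to carry a superset of $\mathcal{C}_\mathit{hist}$'s instances while still synchronizing requests and responses.

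For the forward direction (\plambdahist steps, \plambdadie matches), every rule except \textsc{Die} will be mirrored by the identical rule in \plambdadie: the components needed on the left-hand side exist in \plambdadie by the superset clause, and the resulting history is still a valid $\pstepfn$-chain rooted at a state related to $\pinitfn(f)$. The \textsc{Die} case is matched by zero steps in \plambdadie; the dropped instance simply becomes one additional ``extra'' on the \plambdadie side, and the per-instance invariants of \cref{hist-rel-die} continue to hold because they already did.

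For the backward direction, \textsc{Req} and \textsc{Cold} are mirrored by having \plambdahist choose the same fresh identifier. When \plambdadie fires \textsc{Warm}, \textsc{Hidden}, or \textsc{Resp} on an instance that also appears in \plambdahist, \plambdahist simply mirrors the step. The interesting subcase is when the step fires on an instance that exists only in \plambdadie: for \textsc{Warm} and \textsc{Hidden}, \plambdahist takes zero steps, and the per-instance invariants are re-established using safety relation clause (2) (for the $\precvfn$-derived extension introduced by \textsc{Warm}) and clause (3) (to witness that the hidden step really is labeled $\varepsilon$ and preserves equivalence with $\pinitfn(f)$).

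The genuinely hard case is \textsc{Resp} on an extra instance, because this emits an observable $\kw{stop}(x,v')$ that must also appear in the \plambdahist trace and it consumes a request $\preq{f}{x}{v}$ that in \plambdahist is still present. The strategy is to have \plambdahist catch up via the sequence \textsc{Cold}; \textsc{Hidden}$^{*}$; \textsc{Resp}; \textsc{Die}. The request is still available in \plambdahist (no Resp on $x$ has fired there yet) so \textsc{Cold} applies, and the new instance walks a history $[\pinitfn(f), \sigma'_1, \ldots, \sigma'_n]$ parallel to the extra instance's $[\sigma_0, \ldots, \sigma_n]$. An induction using safety relation clauses (2) and (3) establishes $(\sigma_i, \sigma'_i) \in \mathcal{R}$ at every index and forces $\pstepfn_f(\sigma'_n) = (\sigma'_{n+1}, \ptreturn{v'})$ with the \emph{same} value $v'$, so the observable labels agree. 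The terminal \textsc{Die} removes the catchup instance so that \plambdahist does not overshoot the superset invariant, while the now-idle extra instance on the \plambdadie side, whose state is related to $\pinitfn(f)$ by safety relation clause (4), still satisfies the idle-instance condition. I expect this Resp-on-extra case to be the main obstacle: it is the only place where a single \plambdadie step is matched by a nontrivial sequence in \plambdahist, and the only point where all four clauses of the safety relation are needed jointly.
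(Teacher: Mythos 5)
Your proposal follows essentially the same route as the paper's proof: mirror every non-\textsc{Die} step across the superset of instances, match \textsc{Die} with zero steps, absorb unobservable steps on extra instances with zero steps, and handle an observable step on an extra instance by a \textsc{Cold}; \textsc{Hidden}$^{*}$; \textsc{Resp} catch-up sequence whose states are related index-by-index to the extra instance's history via the safety relation. Your terminal \textsc{Die} to discard the catch-up instance is a small refinement the paper omits (it otherwise glosses over how the freshly cold-started instance is reconciled with the superset invariant), but it does not change the substance of the argument.
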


\begin{proof}

We first prove case~(1) of the theorem. There are two cases to consider:

\begin{enumerate}

\item Suppose $\mathcal{C}_\mathit{hist}\pfsteparr{\ell}\mathcal{C'}_\mathit{hist}$
employs the \textsc{Die} rule. Thus $\ell = \varepsilon$. In this case, 
let $\mathcal{C}_\mathit{\neg\mathit{die}}' =
\mathcal{C}_\mathit{\neg\mathit{die}}$. Moreover,
the bisimulation relation holds after the step because the function instances 
of $\mathcal{C}_\mathit{\neg\mathit{die}}$ are a strict superset
of the function instances of $\mathcal{C}'_\mathit{hist}$.

\item Suppose
$\mathcal{C}_\mathit{hist}\pfsteparr{\ell}\mathcal{C'}_\mathit{hist}$ does not
employ the \textsc{Die} rule. Since the function instances of
$\mathcal{C}_\mathit{\neg\mathit{die}}$ are a superset of the
function instances of $\mathcal{C}_\mathit{hist}$, we can take the same step on
the equivalent function ($\mathcal{C}_\mathit{\neg die}
\pfsteparr{\ell}\mathcal{C'}_\mathit{\neg die}$). The bisimulation
relation holds after the step since both $\mathcal{C'}_\mathit{hist}$ and
$\mathcal{C'}_\mathit{\neg die}$ transition the same function
instance to the same state.

\end{enumerate}

We now prove case~(2) of the theorem. There are three cases to consider:

\begin{enumerate}

\item Suppose $\mathcal{C}_\mathit{\neg die} \pfsteparr{\ell}
\mathcal{C'}_\mathit{\neg die}$ involves a function instance that
also exists in $\mathcal{C}_\mathit{hist}$. In these cases, we can take the
same step $\mathcal{C}_{\mathit{hist}} \pfsteps{\ell}
\mathcal{C'}_{\mathit{hist}}$. The bisimulation relation holds after the step
since both $\mathcal{C'}_\mathit{hist}$ and
$\mathcal{C'}_\mathit{\neg die}$ transition the same function
instance to the same state.

\item Suppose $\mathcal{C}_\mathit{\neg die} \pfsteparr{}
\mathcal{C'}_\mathit{\neg die}$, an unobservable step,
involves a function instance that
\emph{does not} exist in $\mathcal{C}_\mathit{hist}$. Since the step
is unobservable, we do not take a step, and $\mathcal{C}_\mathit{hist}' = 
\mathcal{C}_\mathit{hist}$.

\item Suppose $\mathcal{C}_\mathit{\neg die} \pfsteparr{\ell}
\mathcal{C'}_\mathit{\neg die}$ involves a function instance that
\emph{does not} exist in $\mathcal{C}_\mathit{hist}$ and $\ell$ is observable.
Let us call this function
instance $\mathbb{F'}(f,m',[\sigma_0,\cdots,\sigma_{n-1}],y)$ \emph{before}
the observable step is taken.

The bisimulation relation ensures that the sequence of states in the function
instance are a valid history of unobservable steps.
Moreover, the step is
$\pstepfn_f(\sigma_{n-1}) = (\sigma,t)$ is observable. We can construct an equivalent
sequence of states:

\begin{tikzcd}
\sigma_0 \arrow[r]\arrow[d,dash,dashed,"\in\mathcal{R}"] 
  & \sigma_1 \arrow[r,dashed]\arrow[d,dash,dashed,"\in\mathcal{R}"] 
  & \sigma_{n-1} \arrow[r,"t"]\arrow[d,dash,dashed,"\in\mathcal{R}"] 
  & \sigma_{n} \arrow[d,dash,dashed,"\in\mathcal{R}"] \\
\sigma_0' \arrow[r] 
  & \sigma_1' \arrow[r,dashed]
  & \sigma_{n-1}' \arrow[r,"t"]
  & \sigma_{n}'
\end{tikzcd}

In the diagram above:
\begin{itemize}

\item $\sigma'_0 = \pinitfn(f)$, thus $\sigma_0 \approx \sigma_0'$ by the
latter cases of the bisimulation relation.
    
\item Let $\pstepfn(\sigma_i') = (\sigma_{i+1}',t)$. Since $\sigma_i \approx
\sigma_i'$, by the safety relation, we have $\pstepfn_f(\sigma_i) =
(\sigma_{i+1}, t)$, $\pstepfn_f(\sigma_i') = (\sigma_{i+1}', t)$, and
$\sigma_{i+1} \approx \sigma'_{i+1}$

\end{itemize}

Thus, we can create a series of steps in $\mathcal{C}_\mathit{hist}$ that
begins with a \textsc{Cold} step, followed by a sequence of \textsc{Hidden} steps,
and ends with an observable step. Use the state after the final step to construct
$\mathcal{C}_\mathit{hist}'$.

\end{enumerate}

\end{proof}

\subsection{Output Buffering and Single Instance per Function}
\label{buffering-semantics}

\plambdabuf{} is a semantics for a serverless platform that has exactly one
instance per function, and a set of buffered responses that have yet to
be emitted as observations.

\[
\begin{array}{@{}lr@{\,}c@{\,}ll}
\textrm{Components}  & \mathbb{C} & \pdef &  \cdots \\
                     &            & \mid  &  \mathbb{B}(x,v)
  & \textrm{Buffered response}
\end{array}
\]

The semantics of \plambdabuf{} assumes that in the initial configuration all
functions $f$ have exactly one idle instance with state initialized to
$\pinitfn(f)$. Thus, we remove the \textsc{Cold} rule, but keep the
\textsc{Warm} rule. Moreover, we add the two following rules:

\[
\begin{array}{@{}lr@{\,}c@{\,}ll}
\multicolumn{5}{@{\quad\quad}l}{\inferrule*[Left=RespBuf]
  {\pstepfn_f(\sigma) = (\sigma', \ptreturn{v'}) \\
  \mathbb{B}(x,v') \notin \pfstatetyp}
  {\pfstatetyp \preq{f}{x}{v} 
   \mathbb{F'}(f,\pbusy(x,v), \psnoc{\pnstates}{\sigma},y) 
    \pfsteparr{}
    \pfstatetyp 
    \preq{f}{x}{v} 
    \mathbb{F'}(f,\pidle, [\sigma'],y) 
    \mathbb{B}(x,v')
   }} \\[1.5em]
\multicolumn{5}{@{\quad\quad}l}{\inferrule*[Left=BufEmit]
  {\phantom{.}}
  {\pfstatetyp \preq{f}{x}{v} \mathbb{B}(x,v') 
   \pfsteparr{\kw{stop}(x,v')}
   \pfstatetyp \presp{x}{v'}
  }} 
\end{array}
\]

Note that we \emph{do not} remove the \textsc{Resp} rule. Thus,
an instance may non-deterministically produce a response without buffering.

\begin{definition}[Bisimulation Relation] \label{die-rel-buf} We define
$\plambdadiestate \approx \plambdabufstate$ as:

\begin{enumerate}

\item Both configurations have an identical set of requests ($\mathbb{R}$) and
responses ($\mathbb{S}$);

\item\label{buf-bisim-rel-subset} For all
$\mathbb{F'}(f,\pbusy(x,v), \pnstates,y) \in \plambdabufstate$
there exists $\mathbb{F'}(f,\pbusy(x,v), \pnstates',y) \in
\plambdadiestate$ such that $(\plast{\pnstates},\plast{\pnstates'}) \in
\mathcal{R}$;

\item If $\mathbb{F'}(f,\pbusy(x,v), \pnstates,y) \in \plambdabufstate$
then for all $v'$,  $\mathbb{B}(x,v') \notin \plambdabufstate$
(every request is processed exactly once);

\item For all $\mathbb{B}(x,v) \in \plambdabufstate$, there exists a $v'$ such
that $\preq{f}{x}{v'} \in \plambdabufstate$ (all buffered responses have a
pending request);

\item\label{bisim-diebuf-idle-init} For all idle function instances in either
configuration ($\mathbb{F'}(f,\pidle, [\sigma],y)$), we have
$(\sigma, \pinitfn(f)) \in \mathcal{R}$; and

\item For all busy function instances in either configuration
($\mathbb{F'}(f,\pbusy(x,v), [\sigma_0,\cdots,\sigma_n],y)$):

\begin{enumerate}

\item $\pstepfn_f(\sigma_0) = (\sigma_1, \varepsilon), \cdots \pstepfn_f(\sigma_{n-1}) = (\sigma_n, \varepsilon)$, and

\item $(\sigma_0,\pinitfn(f)) \in \mathcal{R}$.

\end{enumerate}

\end{enumerate}

\end{definition}

\begin{theorem}[Weak Bisimulation]\label{die-bisim-buf} If $\plambdadiestate\approx
\plambdabufstate$ then:
    
\begin{enumerate}

\item If $\plambdadiestate\pfsteparr{\ell}\plambdadiestate'$ then
$\plambdabufstate\pfsteps{\ell}\plambdabufstate'$ and
$\plambdadiestate'\approx\plambdabufstate'$; and

\item If $\plambdabufstate\pfsteparr{\ell}\plambdabufstate'$ then
$\plambdadiestate\pfsteps{\ell}\plambdadiestate'$ then
$\plambdadiestate'\approx\plambdabufstate'$.

\end{enumerate}

\end{theorem}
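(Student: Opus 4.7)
The plan is to proceed by induction on the derivation of the single step in each direction, case-splitting on the rule applied and exhibiting the simulating sequence explicitly. In both directions, the goal is to produce a (possibly empty) sequence in the opposite semantics whose only observable label is $\ell$, and then re-verify every clause of \cref{die-rel-buf}. Throughout, the safety relation $\mathcal{R}$ and its closure properties from \cref{def:safety-relation} are the workhorse that lets states in the two configurations diverge syntactically while remaining behaviorally indistinguishable.

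For the forward direction (\plambdadie\ to \plambdabuf), the easy cases are \textsc{Req}, which is mirrored one-for-one, and \textsc{Cold}, which spawns a fresh instance in \plambdadiestate\ but requires no move in \plambdabufstate\ because \cref{die-rel-buf}(2) only constrains \plambdabufstate's busy instances via \plambdadiestate, not the converse. A \textsc{Warm} step in \plambdadiestate\ is mirrored by a \textsc{Warm} in \plambdabufstate\ when the sole \plambdabufstate\ instance for that function is idle, and by no move otherwise (treating the new \plambdadiestate\ instance as an ``extra''). A \textsc{Hidden} step is mirrored only when the stepping \plambdadiestate\ instance is the one tracked by a busy \plambdabufstate\ instance; equivalence of the resulting states follows from clause (3) of the safety relation.

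The critical case is \textsc{Resp} in \plambdadiestate, emitting $\kw{stop}(x,v')$. If \plambdabufstate's instance is already processing $x$, drive it via \textsc{Hidden} steps until $\pstepfn_f$ yields $\ptreturn{v'}$ (the value matches by clauses (3)--(4) of the safety relation, together with the history invariant in clause 6 of the bisimulation relation), then take \textsc{Resp} directly. If \plambdabufstate\ is busy on a different request $x_1$, first drive that instance to completion, stash $v_1'$ via \textsc{RespBuf} (unobservable), then \textsc{Warm} on $x$, drive to completion, and emit via \textsc{Resp} or \textsc{RespBuf}+\textsc{BufEmit}. The buffer is exactly what makes this reordering possible. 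Clause (3) of \cref{die-rel-buf}, which forbids a buffered response coexisting with an ongoing instance for the same $x$, is preserved because we only buffer when the instance idles.

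The backward direction (\plambdabuf\ to \plambdadie) is more symmetric. \textsc{Req}, \textsc{Warm}, \textsc{Hidden}, and \textsc{Resp} in \plambdabufstate\ have direct \plambdadiestate\ counterparts, possibly preceded by a \textsc{Cold} if no matching busy instance yet exists in \plambdadiestate. \textsc{RespBuf} in \plambdabufstate\ is unobservable and is matched by zero steps in \plambdadiestate: the \plambdadiestate\ instance simply remains busy, and the new buffer entry is justified by the request still present in $\pfstatetyp$. \textsc{BufEmit} is matched by a sequence of \textsc{Hidden} steps in the corresponding \plambdadiestate\ instance followed by \textsc{Resp}, again using the safety relation to argue the emitted value agrees.

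The main obstacle is orchestrating the \textsc{Resp} case of the forward direction when \plambdabufstate's single-instance-per-function discipline forces sequential processing while \plambdadiestate\ runs requests in parallel; the buffer plus the split between \textsc{RespBuf} and \textsc{BufEmit} is the mechanism that resolves this, but the bookkeeping required to reestablish all six clauses of \cref{die-rel-buf} after such a multi-step simulation, particularly clauses (3) and (6), is the most delicate part of the argument. A careful invariant tying each buffered response back to the initial state of the function via a valid history under $\mathcal{R}$ is what ultimately closes the case.
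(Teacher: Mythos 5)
Your proposal follows essentially the same strategy as the paper's proof: a case analysis on the applied rule, using the safety relation to keep diverging states related, with the crux being the observable \textsc{Resp} when the single \plambdabuf{} instance is busy on a different request, resolved exactly as in the paper by finishing that request unobservably via \textsc{RespBuf}, warm-starting on the target request, and running it to the observable step. Your treatment is in fact slightly more explicit than the paper's in a few sub-cases (e.g., an extra \plambdadie{} instance responding to the request \plambdabuf{} is already processing, and the \textsc{BufEmit} case of the backward direction), but these are elaborations of the same argument rather than a different approach.
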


\begin{proof}

We first prove Case~(1) of the theorem. There are four cases to consider:

\begin{enumerate}

\item Suppose a busy instance $\mathbb{F'}(f,m', \pnstates,y) \in \plambdadiestate$
takes the step $\plambdadiestate\pfsteparr{\ell}\plambdadiestate'$, and there
exists a $\mathbb{F'}(f,m', \pnstates',y) \in \plambdabufstate$. By the
bisimulation relation, $(\plast{\pnstates},\plast{\pnstates'}) \in \mathcal{R}$.
In these cases, the latter instance can take a related step
$\plambdabufstate\pfsteps{\ell}\plambdabufstate'$.

\item Suppose an idle instance 
$\mathbb{F'}(f,\pidle, \pnstates',y) \in \plambdadiestate$ causes the step
$\plambdadiestate\pfsteparr{\ell}\plambdadiestate'$, which can only
be due to the \textsc{Cold} rule. Moreover, suppose
$\mathbb{F'}(f,\pidle, \pnstates,y) \in \plambdabufstate$.
Case~\ref{bisim-diebuf-idle-init} of the bisimulation relation
ensures that $(\plast{\pnstates},\pinitfn(f)),(\plast{\pnstates'},\pinitfn(f)) \in\mathcal{R}$, thus 
$\mathbb{F'}(f,\pidle, \pnstates,y)$ steps to a related state by the safety relation.

\item Suppose the instance
$\mathbb{F'}(f,\pbusy(x,v), \pnstates,y) \in \plambdadiestate$ makes an
unobservable step,
$\plambdadiestate\pfsteparr{}\plambdadiestate'$. However,
$\mathbb{F'}(f,\pbusy(x',v'), \pnstates',y') \in \plambdabufstate$ where $x' \ne x$
(i.e., \plambdabufstate{} is processing a different request).
In these cases, let $\plambdabufstate' = \plambdabufstate$.

\item Suppose the instance
$\mathbb{F'}(f,\pbusy(x,v), \pnstates,y) \in \plambdadiestate$ makes an
observable step:
\[ \plambdadiestate\pfsteparr{\kw{stop}(x,v)}\plambdadiestate' \] 
and $\mathcal{C}\mathbb{F'}(f,\pbusy(x',v'),\pnstates',y') =\plambdabufstate$
(i.e., processing a different request).
In these cases, we can construct the sequence of states shown below:

\begin{tikzcd}
\plambdabufstate = \mathcal{C}\mathbb{F'}(f,\pbusy(x',v'),\pnstates',y')
  \arrow[d,dashed] \\
\mathcal{C}\mathbb{B}(x',v'')\mathbb{F'}(f,\pidle,[\sigma'],y')
\arrow[d] \arrow[r,dash,dashed,"\in\mathcal{R}"] 
& \pinitfn(f) \\
\mathcal{C}\mathbb{B}(x',v'')\mathbb{F'}(f,\pbusy(x,v),[\sigma',\sigma''],y')
\arrow[d,dashed] \\
\mathcal{C}\mathbb{B}(x',v'')\mathbb{F'}(f,\pbusy(x,v),\pnstates'',y')
\arrow[d,"{\textsf{stop}(x,v)}"] \arrow[r,dash,dashed,"\in\mathcal{R}"]  &
\plast{\pnstates} \\
\plambdabufstate' = \mathcal{C}\mathbb{B}(x',v'')\mathbb{F'}(f,\pidle,[\sigma'''],y') 
\arrow[r,dash,dashed,"\approx"]
& \plambdadiestate'
\end{tikzcd}

In the diagram above, the first sequence of steps terminates with 
\textsc{RespBuf}, which produces an idle instance with
state $\sigma'$, and $(\sigma',\pinitfn(f)) \in \mathcal{R}$. The
next step starts processing request $(x,v$). The next sequence of
steps terminates with state $\plast{\pnstates''}$, where
$(\plast{\pnstates''},\plast{\pnstates}) \in \mathcal{R}$ by definition
of $\mathcal{R}$ and the bisimulation relation. Thus, the final step
produces the same observation.

\end{enumerate}

We now prove case~(2) of the theorem. If a busy instance takes the step
$\plambdabufstate\pfsteparr{\ell}\plambdabufstate'$, then
case~\ref{buf-bisim-rel-subset} of the bisimulation relation ensures that a
related instance exists in $\plambdadiestate$ that can take the same step to a
related state.

\end{proof}
    
\subsection{Single Request Simulation}

We now modify the previous semantics to 1)~run exactly one function instance $f$
and 2)~only receive requests when it is idle.  Thus we delete the \textsc{Req}
rule and add the following rule:
\[
\inferrule*[Left=ReqIdle]{\textrm{$x$ is fresh}}{\pfstep{\pfstatetyp\mathbb{F'}(f,\pidle,\pnstates,y)}
{\kw{start}(f,x,v)}{\mathcal{C}\mathbb{F'}(f,\pidle,\pnstates,y)\preq{f}{x}{v}}}
\]
Let $\plambdafstate$ be a configuration of this modified semantics.

We define $x(\tightoverharp{\ell})$ as the sub-sequence of $\ell$ that only has
observations labelled $x$.

\begin{definition}[Bisimulation Relation]
We define $\plambdabufstate \approx \plambdafstate$ as:
\begin{enumerate}
  
  \item $\plambdafstate \subseteq \plambdabufstate$, and
  
  \item Any component that is in $\plambdabufstate$ and not in
  $\plambdafstate$, must involve another function $f' \ne f$.
\end{enumerate}
\end{definition}

\begin{theorem}[Weak Bisimulation]
\label{buf-bisim-f}
If $\plambdabufstate \approx \plambdafstate$ then:

\begin{enumerate}

\item If $\plambdafstate\pfsteparr{\ell}\plambdafstate'$ then
$\plambdabufstate \pfsteps{\ell} \plambdabufstate'$ and
$\plambdabufstate' \pequiv \plambdafstate'$.

\item If $\pfstep{\plambdabufstate}{\ell}{\plambdabufstate'}$ then
$\plambdafstate \pfsteps{\ell'} \plambdafstate'$, where $\ell' = x(\ell)$ and
$\plambdabufstate' \pequiv \plambdafstate'$.

\end{enumerate}

\end{theorem}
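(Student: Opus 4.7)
The plan is to prove both parts of the weak bisimulation by case analysis on the transition rule used in the hypothesized step, exploiting the key observation that the relation $\plambdabufstate \approx \plambdafstate$ says exactly that the two configurations agree on every $f$-component and can only diverge on components involving some $f' \ne f$. So I would begin by recording, as an auxiliary lemma, that every rule in either semantics is ``local'' in the sense that it rewrites a small, syntactically identifiable set of components, which lets me decide for each step whether it stays within the $f$-slice shared by the two configurations or operates entirely outside it.

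For the forward direction, note that every rule applicable in $\plambdafstate$ — namely \textsc{ReqIdle}, \textsc{Warm}, \textsc{Hidden}, \textsc{Resp}, \textsc{RespBuf}, \textsc{BufEmit}, and \textsc{Die} — inspects and rewrites only $f$-components. Condition~(1) of the relation puts all these components into $\plambdabufstate$, and condition~(2) guarantees that the remaining non-$f$ components are untouched. Hence the same rule — or, in the case of \textsc{ReqIdle}, its more permissive counterpart \textsc{Req}, which is available in $\plambdabufstate$ — fires in $\plambdabufstate$ and produces some $\plambdabufstate'$ that still matches $\plambdafstate'$ on $f$-components while the non-$f$ components are unchanged; hence $\plambdabufstate' \approx \plambdafstate'$.

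For the backward direction, I would split on whether the rule applied in $\plambdabufstate$ touches a component involving $f$. If it involves only components for some $f' \ne f$, I take zero steps in $\plambdafstate$: the label, if observable, carries a request id foreign to $\plambdafstate$, so $x(\ell) = \varepsilon$ and the relation is preserved unchanged. If the rule touches $f$-components, those components also lie in $\plambdafstate$ by the relation, and the analogous rule of $\plambdaf$ fires to produce the desired $\plambdafstate'$; the only rule in $\plambdabufstate$ without an immediate counterpart is \textsc{Req} for $f$, which must be matched by \textsc{ReqIdle}.

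The main obstacle is precisely this \textsc{Req}-for-$f$ case, since \textsc{ReqIdle} requires $f$ to be idle whereas \textsc{Req} does not. I would address it by co-proving an invariant with the bisimulation: any $\plambdabufstate$ reachable from an initial configuration and standing in $\approx$ to some $\plambdafstate$ has at most one pending request for $f$ and, when such a request is pending, the $f$-instance is idle. This invariant is trivially preserved by all rules except \textsc{Req} for $f$, and for that rule it follows because the shared $f$-components force $\plambdafstate$ to perform \textsc{ReqIdle} in lock-step, which itself enforces the idleness precondition. Once the invariant is established, the remaining case analysis reduces to routine rule-matching and both parts of the theorem fall out directly.
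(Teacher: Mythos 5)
Your overall route is the same as the paper's: a rule-by-rule case analysis exploiting the fact that the relation forces the two configurations to agree on every component involving $f$, with the forward direction's \textsc{ReqIdle} step answered by \textsc{Req} --- which is precisely the one remark the paper's proof records. You have also correctly isolated the one genuinely non-routine case, namely a \textsc{Req} for $f$ fired in \plambdabufstate{} that must be answered by \textsc{ReqIdle} in \plambdafstate{}.

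The problem is that the invariant you propose to discharge that case is false and cannot be repaired in the form you give. It is falsified one step after any \textsc{Warm}: the request component $\preq{f}{x}{v}$ is \emph{not} consumed when the instance warm-starts, so a reachable \plambdabufstate{} routinely contains a pending request for $f$ alongside a busy $f$-instance; likewise nothing bounds the number of pending $f$-requests, since \textsc{Req} has no precondition. More fundamentally, because \textsc{Req} is always enabled, \plambdabufstate{} may issue a fresh request for $f$ while the shared $f$-instance is busy, and no invariant on reachable states can rule that move out. Your justification that ``the shared $f$-components force \plambdafstate{} to perform \textsc{ReqIdle} in lock-step, which itself enforces the idleness precondition'' inverts the direction of the simulation game: in part~(2) it is \plambdafstate{} that must answer whatever move \plambdabufstate{} makes, not constrain it. In that situation \textsc{ReqIdle} is not enabled, the filtered label is empty so \plambdafstate{} must stand still, and the new component $\preq{f}{x'}{v}$ then violates condition~(2) of the bisimulation relation, which requires every $f$-component of \plambdabufstate{} to also appear in \plambdafstate{}. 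To be fair, the paper's own proof is a two-line ``straightforward case analysis'' that is silent on exactly this case, so you have located the real pressure point; but your patch does not resolve it, and a sound argument would have to either weaken the relation (e.g., permit extra pending $f$-requests on the \plambdabufstate{} side that \plambdafstate{} will admit later, once idle) or else argue the matching \textsc{ReqIdle} can be deferred, neither of which your proposal does.
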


\begin{proof}

Straightforward case analysis of $\plambdafstate\pfsteparr{\ell}\plambdafstate'$
and $\pfstep{\plambdabufstate}{\ell}{\plambdabufstate'}$. In case~(1), note
that the \textsc{Req} applies in all cases that \textsc{ReqIdle} applies.

\end{proof}

\subsection{Abstract Machine Simulation}

\begin{definition}[Bisimulation Relation]
\label{bisim-rel-f-abs}
We define $\pnstateid \approx \plambdafstate$ in the obvious way.
Let $\pnstateid = \langle f,m,\pnstates,\mathcal{B} \rangle$.
\begin{enumerate}
  \item Every buffered response  $\mathbb{B}(x,v) \in \plambdafstate$ has
  a corresponding buffered response $(x,v) \in \mathcal{B}$,
  \item $\plambdafstate$ has a single function instance, with $\pnstates$,
  and mode $m' \approx m$.

\end{enumerate}
\end{definition}

\begin{theorem}[Bisimulation]
If $\pnstateid \approx \plambdafstate$:
\begin{enumerate}

\item If $\pnstep{\pnstateid}{\ell}{\pnstateid'}$ then
$\pfstep{\plambdafstate'}{\ell}{\plambdafstate'}$ and $\pnstateid' \approx
\plambdafstate'$.

\item If $\pfstep{\plambdafstate'}{\ell}{\plambdafstate'}$ then
$\pnstep{\pnstateid}{\ell}{\pnstateid'}$ and $\pnstateid' \approx
\plambdafstate'$.

\end{enumerate}

\end{theorem}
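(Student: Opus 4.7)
The plan is to proceed by a direct case analysis on the single step taken on each side, exploiting the fact that $\plambdafstate$ was specifically designed to mirror the naive semantics one-to-one. By \cref{bisim-rel-f-abs}, the unique function instance in $\plambdafstate$ carries exactly the trace $\pnstates$ appearing in $\pnstateid$ and its buffered responses $\mathbb{B}(x,v)$ are in bijection with the pairs $(x,v)\in\mathcal{B}$, so each rule application on one side should match a rule (or short sequence of rules) on the other, with the resulting configurations again related.

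For the forward direction I would split on the naive rule used. \textsc{N-Step} is matched by \textsc{Hidden} applied to the unique busy instance, extending the trace identically. \textsc{N-Buffer-Stop} is matched by \textsc{RespBuf}, which performs the same $\pstepfn_f$ transition, leaves the instance idle with singleton trace $[\sigma']$, and records $\mathbb{B}(x,v)$. \textsc{N-Emit-Stop} is matched by \textsc{BufEmit}, producing the same $\kw{stop}(x,v)$ label. The most delicate case is \textsc{N-Start}, which atomically installs both a fresh $\pinitfn(f)$ and the result of $\precvfn_f$; on the $\plambdafstate$ side it corresponds to \textsc{ReqIdle} followed immediately by \textsc{Warm}, and the observable label $\kw{start}(f,x,v)$ is produced exactly once, matching the naive side.

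For the backward direction I would split on the $\plambdafstate$ rule. \textsc{Hidden}, \textsc{RespBuf}, and \textsc{BufEmit} map back to \textsc{N-Step}, \textsc{N-Buffer-Stop}, and \textsc{N-Emit-Stop} respectively. The \textsc{ReqIdle}/\textsc{Warm} pair together map back to one \textsc{N-Start}, with the intermediate $\plambdafstate$ configuration between the two steps matched by the same $\pnstateid$ (i.e., no naive step fires for the non-observable sub-step). The residual rules inherited from $\plambdabufstate$ and earlier layers---\textsc{Resp}, \textsc{Die}, and the previously removed \textsc{Cold}---must be shown unreachable or relation-preserving under the single-instance constraint; \textsc{Die} in particular would eliminate the sole instance and so either must be excluded in this restricted semantics or explicitly disallowed at the top level.

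The main obstacle I anticipate is the mismatch between \textsc{N-Start}, which rebuilds $\sigma_0$ from $\pinitfn(f)$, and \textsc{Warm}, which continues from whatever state the idle instance currently holds. The natural fix is to identify and maintain the invariant that idle instances in $\plambdafstate$ always carry the singleton trace $[\pinitfn(f)]$ (or at least a state $\mathcal{R}$-related to it), an invariant preserved by \textsc{RespBuf} thanks to the safety-relation machinery already threaded through \cref{lam-bisim-lam-hist,hist-bisim-die,die-bisim-buf,buf-bisim-f}. Once this invariant is stated precisely and verified for each rule, both directions reduce to mechanical checking that the chosen pair of rules produces the expected labels and configurations, and the bisimulation follows.
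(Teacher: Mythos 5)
Your proposal follows the same basic strategy as the paper's proof---a direct rule-by-rule matching between the two semantics---but where the paper dispatches the entire theorem with the single sentence ``Straightforward, since there are only syntactic differences between the two semantics,'' you actually carry out the case analysis, and in doing so you surface two real frictions that the paper's one-liner glosses over. First, as you observe, \textsc{N-Start} is matched not by one $\plambdafstate$ step but by \textsc{ReqIdle} followed by \textsc{Warm}, and dually the un-removed \textsc{Resp} rule on the $\plambdafstate$ side corresponds to \textsc{N-Buffer-Stop} followed by \textsc{N-Emit-Stop}; so the theorem as literally stated---one step matched by exactly one step, i.e.\ a strong bisimulation---does not hold without either weakening it to a weak bisimulation (as in the surrounding lemmas) or treating those rule pairs as atomic. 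Second, your worry about \textsc{Warm} resuming from whatever state the idle instance currently holds, versus \textsc{N-Start} rebuilding from $\pinitfn(f)$, is exactly right: the invariant you propose, that idle instances carry a state $\mathcal{R}$-related to $\pinitfn(f)$, is indeed already threaded through the earlier bisimulation relations and is the missing ingredient, though with only $\mathcal{R}$-relatedness (rather than equality) the resulting busy histories agree only up to $\mathcal{R}$, so the ``obvious'' relation of \cref{bisim-rel-f-abs} must itself be read modulo $\mathcal{R}$. One small correction to your accounting of residual rules: \textsc{Die} and \textsc{Cold} are already removed at the \plambdadie{} and \plambdabuf{} layers respectively, so only \textsc{Resp} genuinely survives into $\plambdafstate$ and needs handling. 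On balance your proposal is more careful than the paper's own proof; the gaps you identify are gaps in the paper's claim of a purely syntactic correspondence, not defects in your argument.
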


\begin{proof}
Straightforward, since there are only syntactic differences between the
two semantics.
\end{proof}

\subsection{Final Bisimulation Relation}

We now define a bisimulation relation between $\pfstatetyp$ and $\pnstateid$,
without reference to any of the intermediate relations or semantics.

\begin{definition}[Bisimulation Relation]
\label{final-bisim-relation}

$\pnstate{f}{m}{\pnstates}{\pnstopbuff} \pequiv \pfstatetyp$ is defined as:
\begin{enumerate}

\item\label{final-bisim-idle-rel}
For all $\pinst{f}{\pidle}{\sigma} \in \pfstatetyp$, $(\sigma,\pinitfn(f)) \in \mathcal{R}$;

\emph{All idle instances are in a state related to the initial state. This
should be an invariant at every intermediate relation.}

\item\label{final-bisim-busy-seq}
For all $\pinst{f}{\pbusy(x)}{\sigma_n} \in \pfstatetyp$, there exists
$\preq{f}{x}{v} \in \pfstatetyp$, and a sequence of states $\sigma_0 \cdots \sigma_n$,
such that $(\sigma_0,\pinitfn(f)) \in \mathcal{R}$,
$\precvfn_f(v,\sigma_0) = \sigma_1$, and
$\pstepfn_f(\sigma_i) = (\sigma_{i+1}, \varepsilon)$

\emph{All busy instances are in a state that is reachable from a state related
to the initial state, with a sequence of unobservable steps. This should be an
invariant at every intermediate relation.}

\item For all $(x,w) \in \pnstopbuff$ there exists $\preq{f}{x}{v}$ $\in \pfstatetyp$,
and  a sequence of states $\sigma_0 \cdots \sigma_n$, such that
$\precvfn_f(v, \pinitfn(f)) = \sigma_0$, 
$\pstepfn_f(\sigma_i) = (\sigma_{i+1}, \epsilon)$, and
$\pstepfn_f(\sigma_{n-1}) = (\sigma_n, \ptreturn{w})$

\emph{For every buffered response in $\pnstateid$, there is a corresponding
request in $\pfstatetyp$. Moreover, there exists a sequence of states, recorded
in the history, that derives this response from the request.}

\end{enumerate}

\end{definition}

\begin{theorem}

The final bisimulation relation is contained in the composition of the sequence
of bisimulation relations defined above.

\end{theorem}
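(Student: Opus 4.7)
The plan is to prove containment constructively: given any pair $(\pnstateid, \pfstatetyp)$ satisfying \cref{final-bisim-relation}, I will exhibit concrete intermediate configurations $\plambdahiststate$, $\plambdadiestate$, $\plambdabufstate$, and $\plambdafstate$ such that every adjacent pair in the chain $\pfstatetyp \approx \plambdahiststate \approx \plambdadiestate \approx \plambdabufstate \approx \plambdafstate \approx \pnstateid$ satisfies the corresponding intermediate bisimulation relation from \cref{lamlam-hist-bisim-rel,hist-rel-die,die-rel-buf,bisim-rel-f-abs} (plus the anonymous $\plambdabuf \approx \plambdaf$ relation). Since the composition of these relations is, by definition, a relation containing any pair so witnessed, this establishes the theorem.

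First, I would construct $\plambdahiststate$ by promoting each component of $\pfstatetyp$ to its history-bearing counterpart: every idle instance $\pinst{f}{\pidle}{\sigma}$ becomes $\mathbb{F'}(f,\pidle,[\sigma],y)$, and every busy instance $\pinst{f}{\pbusy(x)}{\sigma_n}$ becomes $\mathbb{F'}(f,\pbusy(x,v),[\sigma_0,\ldots,\sigma_n],y)$, where the sequence $\sigma_0 \cdots \sigma_n$ is exactly the witness sequence supplied by clause~(2) of \cref{final-bisim-relation} and $v$ is the payload of the matching $\preq{f}{x}{v}$. Requests and responses are carried over unchanged. The four conditions of \cref{lamlam-hist-bisim-rel} then follow directly: idle states satisfy $(\sigma,\pinitfn(f))\in\mathcal{R}$ by clause~(1) of \cref{final-bisim-relation}, while the history chain and $(\sigma_0,\pinitfn(f))\in\mathcal{R}$ are precisely the data given by clause~(2).

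Next, take $\plambdadiestate := \plambdahiststate$. The relation in \cref{hist-rel-die} only asks that $\plambdadiestate$ contain a superset of the instances in $\plambdahiststate$, and that the history and initial-state invariants hold; both hold trivially because the configurations are identical. For $\plambdabufstate$, I would materialize each buffered response $(x,w)\in\pnstopbuff$ as a component $\mathbb{B}(x,w)$, keep the pending request $\preq{f}{x}{v}$ that witnesses it (again from clause~(3) of \cref{final-bisim-relation}), and drop the corresponding busy instance if necessary so that the ``exactly one instance per function'' invariant of \plambdabuf{} is respected; the states on surviving instances are related via $\mathcal{R}$ to those in $\plambdadiestate$ by reflexivity. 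The witnessing history for the buffered response supplies the execution trace required to check \cref{die-rel-buf}. Finally, $\plambdafstate$ is obtained by projecting $\plambdabufstate$ onto components involving the function $f$, which satisfies the subset-plus-foreign-function condition of the $\plambdabuf \approx \plambdaf$ relation; and $\pnstateid \approx \plambdafstate$ by \cref{bisim-rel-f-abs} holds because the single surviving instance carries $\pnstates$ and mode $m$ with the matching buffered responses.

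The main obstacle I anticipate is bookkeeping rather than mathematical depth: I need to ensure that the single projected instance in $\plambdafstate$ is consistent with the abstract-machine state $\pnstateid$ in the busy case, and that the simultaneous removal of redundant instances when passing from \plambdadie{} to \plambdabuf{} does not invalidate the superset direction of \cref{die-rel-buf}~clause~\ref{buf-bisim-rel-subset}. These are handled by noting that \cref{die-rel-buf} places the superset on the \plambdadie{} side (where we added, not removed, instances) and by picking the surviving instance in $\plambdabufstate$ to be precisely the one whose history matches $\pnstates$ when $\pnstateid$ is busy. Since each intermediate relation only constrains the configurations statically, no reasoning about step sequences is needed; the result reduces to checking the invariants clause by clause against the witnesses supplied by \cref{final-bisim-relation}.
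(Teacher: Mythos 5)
Your proposal is correct and follows essentially the same route as the paper's own proof: the same chain $\pfstatetyp \approx \plambdahiststate \approx \plambdadiestate \approx \plambdabufstate \approx \plambdafstate \approx \pnstateid$, with histories built from clause~(2) of \cref{final-bisim-relation}, the identity construction for the \plambdadie{} step, buffered responses materialized from $\pnstopbuff$ while dropping busy instances for already-buffered requests, and projection onto $f$ for the final steps. The bookkeeping concerns you flag (superset direction in \cref{die-rel-buf}, choice of surviving instance) are resolved exactly as you anticipate.
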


\begin{proof}
Consider an element of
 $\pnstate{f}{m}{\pnstates}{\pnstopbuff} \pequiv \pfstatetyp$.

\begin{enumerate}

  \item We construct \plambdahiststate{}, where $\pfstatetyp \approx \plambdahiststate$. Most of the
  criteria of this relation are straightforward and follow directly from the definition of
  $\pfstatetyp$. However, we need to construct the histories that \plambdahist{} introduces.  
  Each busy function instance in $\plambdahiststate{}$ has a history, which is the sequence of
  states in \cref{final-bisim-busy-seq} of \cref{final-bisim-relation}. The current state in the
  history of every idle instance in $\plambdahiststate{}$ is related to its initial state, by
  \cref{final-bisim-idle-rel} of \cref{final-bisim-relation}.

  \item We construct \plambdadiestate{}, where $\plambdahiststate \approx \plambdadiestate$.
  Let $\plambdadiestate = \plambdahiststate$, and the criteria for the relation are straightforward.

  \item We construct \plambdabufstate{}, where $\plambdadiestate \approx \plambdabufstate$.
  Let $\plambdabufstate \supseteq \{ \mathbb{B}(x,v) \mid (x,v) \in \mathcal{B} \}$.
  Moreover, $\plambdabufstate$ contains exactly once instance per function,
  constructed as follows:
  
  \begin{itemize}
    
    \item A busy instance may be in $\plambdabufstate$, only if it is not processing a request that
    has already been buffered. More precisely, if $\mathbb{F'}(f,\pbusy(x),\pnstates,y) \in
    \plambdabufstate$, then $\mathbb{F'}(f,\pbusy(x),\pnstates,y) \in \plambdadiestate$ and
    $x \not\in \mathit{dom}(\mathcal{B})$.

    \item Any function $f$ may have an idle instance in $\plambdabufstate$ with initial-equivalent
    state. More precisely, if $\mathbb{F'}(f,\pidle,\pnstates,y) \in \plambdadiestate$ then
    $(\plast{\pnstates}, \pinitfn(f)) \in \mathcal{R}$.

  \end{itemize}

   \item We construct $\plambdafstate$, where $\plambdabufstate \approx \plambdafstate$. Let
   $\plambdafstate \subseteq \plambdabufstate$ be the largest subset of $\plambdabufstate$ where
   all components involve the function $f$.

   \item Finally, we construct $\pnstateid$ from $\plambdafstate$ in a
   straightforward manner, since they only have syntactic differences.

\end{enumerate}

\end{proof}

\section{Severless with a Key Value Store}

The following notation will help simplify the proofs. First, note that the parameter $f$ of $\pstepfn_f$ is clear from context as we consider each function individually, so we write $\pstepfn$ instead. In addition, let $\sigma_0 = \pinitfn (f)$ for concision. Finally, we define the following relation:

\begin{definition}[Equivalent in History]
For all $\sigma$, $\pnstates$, and $\prel$, $\sigma \in_\prel \pnstates$ holds if there exists $\sigma' \in \pnstates$
and $(\sigma,\sigma') \in \prel$.
\end{definition}

We now show some preliminary lemmas.

\begin{lemma}
\label{lemma:forward-induction}
For all $f$, $\pnstates$, $\pnstates'$, $\psigmaequiv$, $\mathcal{C}$, $x$, $v$, and $\prel$ if:

\begin{enumerate}

  \item[H1.] $\langle f,\pbusy(x),\pnstates,\pnstopbuff\rangle \pnsteps{} \langle f,\pbusy(x),\pnstates\pappend\pnstates',\pnstopbuff\rangle$, and

  \item[H2.] $\langle f,\pbusy(x),\pnstates,\pnstopbuff\rangle \pequiv C \preq{f}{x}{v} \pinst{f}{\pbusy(x)}{\psigmaequiv} \land (\psigmaequiv, \plast{\pnstates}) \in \prel$

\end{enumerate}

then there exists a $\psigmaequiv'$ such that:

\begin{enumerate}

  \item[G1.] $C \preq{f}{x}{v} \pinst{f}{\pbusy(x)}{\psigmaequiv} \pfsteps{}
   C \preq{f}{x}{v} \pinst{f}{\pbusy(x)}{\psigmaequiv'}$, and

  \item[G2.] $\pnstate{f}{\pbusy(x)}{\pnstates\pappend\pnstates'}{\pnstopbuff} \pequiv C \preq{f}{x}{v}
         \pinst{f}{\pbusy(x)}{\psigmaequiv'} \land (\psigmaequiv', \plast{\pnstates\pappend\pnstates'}) \in \prel$

\end{enumerate}

\end{lemma}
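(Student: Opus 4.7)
The plan is to proceed by induction on the length of the \pnname derivation in H1, or equivalently on the length of the appended trace $\pnstates'$. Note that every step from $\langle f,\pbusy(x),\pnstates,\pnstopbuff\rangle$ that preserves both the busy mode $\pbusy(x)$ and the response buffer $\pnstopbuff$ must be an application of the \textsc{N-Step} rule: \textsc{N-Buffer-Stop} would transition the mode to $\pidle$, while \textsc{N-Start} and \textsc{N-Emit-Stop} require an idle mode to fire. Thus we can peel off one \textsc{N-Step} at a time from the tail of the derivation.

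For the base case ($\pnstates'$ empty, zero \pnname steps), the witness $\psigmaequiv' = \psigmaequiv$ trivially discharges G1 via the reflexive closure of $\Rightarrow$, and G2 is exactly H2.

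For the inductive case, split the derivation as $\pnstate{f}{\pbusy(x)}{\pnstates}{\pnstopbuff} \pnsteps{} \pnstate{f}{\pbusy(x)}{\pnstates\pappend\pnstates''}{\pnstopbuff} \pnstep{}{}{} \pnstate{f}{\pbusy(x)}{\pnstates\pappend\pnstates'}{\pnstopbuff}$, where the last step is \textsc{N-Step} and $\pnstates' = \pnstates'' \pappend [\sigma_{\mathit{new}}]$ with $\pstepfn(\plast{\pnstates\pappend\pnstates''}) = (\sigma_{\mathit{new}}, \varepsilon)$. Apply the induction hypothesis to the prefix to obtain some $\psigmaequiv''$ such that $C\preq{f}{x}{v}\pinst{f}{\pbusy(x)}{\psigmaequiv} \pfsteps{} C\preq{f}{x}{v}\pinst{f}{\pbusy(x)}{\psigmaequiv''}$ with $(\psigmaequiv'', \plast{\pnstates\pappend\pnstates''}) \in \prel$ and bisimulation intact. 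Now invoke clause~(3) of the safety relation in Definition~\ref{def:safety-relation}: since $\psigmaequiv''$ and $\plast{\pnstates\pappend\pnstates''}$ are $\prel$-related, $\pstepfn$ produces the same command on both, namely $\varepsilon$, and related successor states. Hence there is a $\psigmaequiv'$ with $\pstepfn(\psigmaequiv'') = (\psigmaequiv', \varepsilon)$ and $(\psigmaequiv', \sigma_{\mathit{new}}) \in \prel$. The \textsc{Hidden} rule of \pfname then lets us take one more step to reach $C\preq{f}{x}{v}\pinst{f}{\pbusy(x)}{\psigmaequiv'}$, establishing G1 by concatenation.

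The remaining task is to verify that G2 holds, i.e.\ that the bisimulation relation of Definition~\ref{def:bisimulation-relation} is preserved. Clauses~(1) and~(3) are unchanged because we have not touched any idle instance, request, response, or buffered output; the only alteration is the internal state of one busy instance. For clause~(2), we must exhibit a witness sequence $\sigma_0,\ldots$ ending at $\psigmaequiv'$ that satisfies the unobservable-step chain from the initial state. The IH supplies such a chain ending at $\psigmaequiv''$, and extending it by the single \textsc{Hidden} step just derived produces a valid chain ending at $\psigmaequiv'$. Finally, the newly-related pair $(\psigmaequiv', \plast{\pnstates\pappend\pnstates'}) = (\psigmaequiv', \sigma_{\mathit{new}}) \in \prel$ is exactly the $\prel$ side-condition of G2.

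The main obstacle is bookkeeping the bisimulation witness chain in clause~(2): we need the Hidden step to both preserve the existing chain and to extend it in a way synchronized with the \pnname trace. This hinges entirely on clause~(3) of the safety relation, which guarantees that $\prel$-equivalent states produce identical commands, so the \pnname and \pfname derivations stay in lock-step with respect to observable behavior while possibly diverging on internal state.
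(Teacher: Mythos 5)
Your proposal is correct and follows essentially the same route as the paper's proof: induction on $\pnstates'$, observing that the last \pnname step must be \textsc{N-Step}, applying the induction hypothesis to the prefix, then using clause~(3) of the safety relation to obtain a matching \textsc{Hidden} step in \pfname and checking that the bisimulation witness chain extends by that step. The only cosmetic difference is that the paper phrases the preservation of clause~(2) in terms of every busy instance remaining related to some state in the (now longer) history, which is subsumed by your prefix observation.
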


\begin{proof}

By induction on $\pnstates'$.

\noindent
\textbf{Base case.} When $\pnstates' = \pempty$, we choose $\psigmaequiv' = \psigmaequiv$, and then both sides of G1 are equal and G2
is H2.

\noindent
\textbf{Inductive step} Let $\pnstates' = \psnoc{\pnstates''}{\sigma}$. The inductive hypotheses are:

\begin{enumerate}

  \item[H3.] $C \preq{f}{x}{v} \pinst{f}{\pbusy(x)}{\psigmaequiv} \pfsteps{}
   C \preq{f}{x}{v} \pinst{f}{\pbusy(x)}{\psigmaequiv''}$, and

  \item[H4.] $\pnstate{f}{\pbusy(x)}{\pnstates\pappend\pnstates''}{\pnstopbuff} \pequiv C \preq{f}{x}{v}
         \pinst{f}{\pbusy(x)}{\psigmaequiv''} \land (\psigmaequiv'', \plast{\pnstates\pappend\pnstates''}) \in \prel$

\end{enumerate}
Substitute  $\pnstates'$ with $\psnoc{\pnstates''}{\sigma}$ in the goals to get:
\begin{enumerate}

  \item[G1'.] $C \preq{f}{x}{v} \pinst{f}{\pbusy(x)}{\psigmaequiv} \pfsteps{}
   C \preq{f}{x}{v} \pinst{f(x)}{\pbusy}{\psigmaequiv'}$, and

  \item[G2'.] $\pnstate{f}{\pbusy(x)}{\pnstates\pappend \psnoc{\pnstates''}{\sigma}}{\pnstopbuff} \pequiv C \preq{f}{x}{v}
         \pinst{f}{\pbusy}{\psigmaequiv'} \land (\psigmaequiv', \plast{\pnstates\pappend \psnoc{\pnstates''}{\sigma}}) \in \prel$

\end{enumerate}
Substitute  $\pnstates'$ with $\psnoc{\pnstates''}{\sigma}$ in $H1$ to get:
\begin{enumerate}

\item[H1'.] $\langle f,\pbusy(x),\pnstates,\pnstopbuff\rangle \pnsteps{} \langle f,\pbusy(x),\pnstates\pappend\psnoc{\pnstates''}{\sigma},\pnstopbuff\rangle$
\end{enumerate}
By definition of $~~\pnsteps{}$, the last step of $\mathit{H1}'$ is:
\begin{enumerate}

\item [H5.] $\pnstep{\langle f,\pbusy(x),\pnstates\pappend\pnstates'',\pnstopbuff\rangle}{}
            {\langle f,\pbusy(x),\pnstates\pappend\psnoc{\pnstates''}{\sigma},\pnstopbuff\rangle}$
\end{enumerate}
By definition of $\pnstep{}{}{}$, $\mathit{H5}$ must be an application of the
\textsc{Step} rule. Therefore, by hypothesis of \textsc{Step} and $\mathit{H5}$:
\begin{enumerate}

\item [H6.] $\pstepfn(\plast{\pnstates\pappend\pnstates''}) = (\sigma, \varepsilon)$

\end{enumerate}
Let $(\psigmaequiv', t) = \pstepfn{(\psigmaequiv'')}$. By $\mathit{H4}$, $\mathit{H6}$ and \cref{def:safety-relation}:
\begin{enumerate}
    \item [H7.] $(\psigmaequiv', \sigma) \in \prel \land t = \varepsilon$
\end{enumerate}
By \textsc{Hidden} and $\mathit{H7}$:
\begin{enumerate}

\item [H8.] $\pfstep{C \preq{f}{x}{v} \pinst{f}{\pbusy(x)}{\psigmaequiv''}}{}
 C \preq{f}{x}{v} \pinst{f}{\pbusy(x)}{\psigmaequiv'}$

\end{enumerate}
Finally, $\mathit{H3}$ and $\mathit{H8}$ establish goal $G1'$.

To establish goal $\mathit{G2'}$, we use the second case in the definition of $\pequiv$,
thus need to show:
\begin{enumerate}

\item[G3.] For all $\psigmaequiv_i$ such that $\pinst{f}{\pbusy(x)}{\psigmaequiv_i} \in
C\preq{f}{x}{v}\pinst{f}{\pbusy}{\psigmaequiv'}$
we have $\psigmaequiv_i \in_\prel \pnstates\pappend \psnoc{\pnstates''}{\sigma}$.
\end{enumerate}
$\mathit{G3}$ is trivial for $\pinst{f}{\pbusy}{\psigmaequiv'}$, since $(\psigmaequiv', \sigma) \in \prel$ by $\mathit{H7}$ and holds for $C$ by $\mathit{H2}$, since
the history on the left-hand side of $\mathit{H2}$ is a prefix of the history on
the left-hand side of $\mathit{G2'}$. By $\mathit{G3}$ and $\mathit{H7}$ the goal $\mathit{G2'}$ is established.

\end{proof}

\begin{lemma}
\label{lemma:three}
For all $x$, $l$, $f$, $m$, $\pnstates$, $\mathcal{C}$, and $\sigma'$, if:
\begin{enumerate}
    \item [H1.] $\forall \pinst{f}{\pbusy(x)}{\sigma'}.$ $\sigma' \in_\prel \pnstates$ 
\end{enumerate}
then
\begin{enumerate}
    \item [G1.] $\mathcal{C}\pinst{f}{\pbusy(x)}{\sigma'} \pfsteps{} \mathcal{C}\pinst{f}{\pbusy(x)}{last(\pnstates)}$ with $l$ = $\epsilon$
\end{enumerate}
\end{lemma}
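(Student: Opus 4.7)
The plan is to prove this lemma by induction on the distance in the history $\pnstates$ between the position $\prel$-related to $\sigma'$ and the final position. By hypothesis H1 and the definition of $\in_\prel$, there exists an index $i$ and a state $\sigma_i \in \pnstates$ with $(\sigma', \sigma_i) \in \prel$. Since $\pnstates$ is maintained by the \pnname semantics using the \textsc{N-Step} rule, which only extends the history when $\pstepfn$ produces $\varepsilon$, the consecutive states $\sigma_i, \sigma_{i+1}, \ldots, \sigma_n = \plast{\pnstates}$ satisfy $\pstepfn(\sigma_j) = (\sigma_{j+1}, \varepsilon)$ for each $i \le j < n$. I would lift this well-formedness either as an invariant already maintained by the \pnname semantics (it is implicit in the way histories are constructed) or as an added precondition at each invocation site.

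The induction on $n - i$ has two cases. In the base case $n = i$, the state $\sigma'$ is already $\prel$-equivalent to $\plast{\pnstates}$, so the empty sequence of \pfname steps suffices and the label list is trivially $\varepsilon$. In the inductive step, I invoke clause~(3) of \cref{def:safety-relation}: from $(\sigma', \sigma_i) \in \prel$ and $\pstepfn(\sigma_i) = (\sigma_{i+1}, \varepsilon)$, I obtain $\pstepfn(\sigma') = (\sigma'', \varepsilon)$ with $(\sigma'', \sigma_{i+1}) \in \prel$. The \textsc{Hidden} rule of \pfname then provides the unobservable transition from $\mathcal{C}\pinst{f}{\pbusy(x)}{\sigma'}$ to $\mathcal{C}\pinst{f}{\pbusy(x)}{\sigma''}$, after which the induction hypothesis applied to $\sigma''$ (now $\prel$-related to $\sigma_{i+1}$, one step closer to $\plast{\pnstates}$) supplies the remaining unobservable steps. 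Concatenating the first step with the inductively obtained sequence yields G1 with all labels equal to $\varepsilon$.

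The main obstacle is a mild imprecision in the statement: G1 literally requires \pfname to reach the exact state $\pinst{f}{\pbusy(x)}{\plast{\pnstates}}$, yet H1 only supplies $\prel$-equivalence, so the strongest conclusion actually derivable is that \pfname reaches a state $\prel$-equivalent to $\plast{\pnstates}$. I would read the lemma up-to-$\prel$, consistent with goal G2 of \cref{lemma:forward-induction}, where the analogous relationship between the final \pfname state and $\plast{\pnstates\pappend\pnstates'}$ is written explicitly using $\prel$. A secondary subtlety is that the induction presumes $\pnstates$ to be a genuine execution trace of the \pnname semantics; this holds by the invariants maintained in \cref{def:bisimulation-relation}, but should be noted explicitly where the lemma is applied so that the chain $\pstepfn(\sigma_j)=(\sigma_{j+1},\varepsilon)$ used in the inductive step is available.
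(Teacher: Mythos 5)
Your proposal is correct and matches the paper's own proof: both proceed by induction on the distance from the $\prel$-matched position in the history to its end, using clause~(3) of the safety relation together with repeated applications of \textsc{Hidden}. The paper's proof likewise only concludes that the final state is $\prel$-related to $\plast{\pnstates}$, so your up-to-$\prel$ reading of G1 is exactly what the authors intend.
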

\begin{proof}
By Induction on $(\sigma', \pnstates_{n-k})$ s.t. $1 \le k \le n$.
\paragraph{Base Case:} When $(\sigma', last(\pnstates)) \in \prel$ then it is trivial.
\paragraph{Induction Step:}  When $(\sigma', \pnstates_{n-k}) \in \prel$ s.t. $1 \le k < n$
then by \textsc{Hidden} $\pfstep{\mathcal{C}\pinst{f}{\pbusy(x)}{\sigma'}}{\epsilon}{\mathcal{C}\pinst{f}{\pbusy(x)}{\sigma^{n-k-1}}}$ s.t. $\kw{step}(\sigma^{n-k}, v) = (\sigma^{n-k-1}, \epsilon)$ and $(\sigma^{n-k-1}, \pnstates_{n-k-1}) \in \prel$\\
Again by \textsc{Hidden} $\pfstep{\mathcal{C}\pinst{f}{\pbusy(x)}{\sigma^{n-k-2}}}{\epsilon}{\mathcal{C}\pinst{f}{\pbusy(x)}{\sigma^{n-k-2}}}$ s.t. $\kw{step}(\sigma^{n-k-1}, v) = (\sigma^{n-k-2}, \epsilon)$ and $(\sigma^{n-k-2}, \pnstates_{n-k-2}) \in \prel$\\
Similarly, by repeated application of \textsc{Hidden}
$\pfstep{\mathcal{C}\pinst{f}{\pbusy(x)}{\sigma^{n-1}}}{\epsilon}{\mathcal{C}\pinst{f}{\pbusy(x)}{\sigma^{n}}}$ s.t. $\kw{step}(\sigma^{n-1}, v) = (\sigma^{n}, \epsilon)$ and $(\sigma^{n}, \pnstates_{n}) \in \prel$\\
This follows Goal $\mathit{G1}$
\end{proof}

\begin{definition}
\label{def:idempotence}
	Let $X$ be the set of all request IDs. A serverless function $\langle f, \sigma_0, \Sigma, \precvfn, \pstepfn \rangle$ with request ID $x \in X$ is idempotent if the following conditions hold:

\begin{enumerate}

	\item Let $\sigma_1 = \precvfn(v, \sigma_0)$.
	\item Let $(\sigma_2, t_2) = \pstepfn(\sigma_1)$, and it must be that $t_2 = \ptbeginTx$.
	\item Let $(\sigma_3, t_3) = \pstepfn(\sigma_2)$, and it must be that $t_3 = \ptread{x})$.
	\item Let $\sigma_4 = \precvfn(\pnil, \sigma_3)$.
	\item Let $\sigma_4' = \precvfn(w, \sigma_3)$, for $w \neq \pnil$.
	\item Let $(\sigma_f, t_f) = \pstepfn(\sigma_4')$, and it must be that $t_f = \ptendTx$.
	\item $\exists n \geq 4$ such that:
		\begin{enumerate}
			\item Let $(\sigma_{n+1}, t_{n+1}) = \pstepfn(\sigma_n)$ and $t_{n+1} = \ptwrite{x}{w}$.
			\item $4 \leq m < n \implies \pstepfn(\sigma_m) = (\sigma_{m+1}, t_{m+1}) \land t_{m+1} \in \{ \varepsilon, \ptread{k}, \ptwrite{k}{v'} \mid k \not \in X \}$
		\end{enumerate}
	\item $\pstepfn(\sigma_{n+1}) = (\sigma_f, \ptendTx)$
	\item Let $(\sigma_{ff}, t_{ff}) = \pstepfn(\sigma_f)$, and it must be that $t_{ff} = \ptreturn{w}$.

\end{enumerate}
\end{definition}


\begin{lemma} \label{lemma:forward-induction-store}
For all $\widehat{\mathcal{C}}_0$, $\pkv$, $f$, $x$, $\pnstates$, $\sigma$ if:
\begin{enumerate}
    \item [H1.] $\mathcal{C} = \pds{\pkv}{\plfree} \widehat{\mathcal{C}}_0$
    \item [H2.] $\pnunlocked, \pnstate{f}{\pbusy(x)}{\psnoc{\pnstates}{\sigma}}{\pnstopbuff} \pequiv \mathcal{C}$
    \item [H3.] $\exists \pinsty{f}{\pbusy(x)}{\psigmaequivold}{y'} \in \widehat{\mathcal{C}}_0$
\end{enumerate}

Then there exists $\widehat{\mathcal{C}}_1$ such that

\begin{enumerate}
    \item [G1.] $\mathcal{C} \pfsteps{\tightoverharp{\ell}} \pds{\pkv}{\plfree} \pinsty{f}{\pbusy(x)}{\psigmaequiv'}{y} \widehat{\mathcal{C}}_1 \land x(\tightoverharp{\ell}) = \varepsilon$
	\item [G2.] $\pnunlocked, \pnstate{f}{\pbusy(x)}{\psnoc{\pnstates}{\sigma}}{\pnstopbuff} \pequiv \pds{\pkv}{\plfree} \pinsty{f}{\pbusy(x)}{\psigmaequiv'}{y} \widehat{\mathcal{C}}_1$
	\item [G3.] $(\sigma, \psigmaequiv') \in \prel$
\end{enumerate}

\end{lemma}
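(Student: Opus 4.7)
The plan is to treat this as a catch-up argument directly analogous to \cref{lemma:forward-induction}: the existing busy instance supplied by H3 sits somewhere along an internal-step trajectory whose image in the naive history can be identified via the safety relation, and I will push it forward using \textsc{Hidden} steps (which carry the empty label and therefore contribute nothing to $x(\tightoverharp{\ell})$) until its state is $\prel$-related to the current naive state $\sigma$. The configuration $\widehat{\mathcal{C}}_1$ is obtained from $\widehat{\mathcal{C}}_0$ by replacing this single instance; the free datastore $\pds{\pkv}{\plfree}$ and every other component are untouched.

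First, I would unfold H2 through case (1) of \cref{def:extended-bisimulation-rel}, which given the $\pnunlocked$/$\plfree$ pairing reduces to $\pnstate{f}{\pbusy(x)}{\psnoc{\pnstates}{\sigma}}{\pnstopbuff} \pequiv \widehat{\mathcal{C}}_0$ in the sense of \cref{def:bisimulation-relation}. Applying clause~(2) of that definition to the instance from H3 yields a trajectory $\psigmaequiv_0, \ldots, \psigmaequiv_m = \psigmaequivold$ with $(\psigmaequiv_0, \pinitfn(f)) \in \prel$, $\precvfn_f(v,\psigmaequiv_0) = \psigmaequiv_1$, and $\pstepfn(\psigmaequiv_i) = (\psigmaequiv_{i+1}, \varepsilon)$ elsewhere. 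The naive history $\psnoc{\pnstates}{\sigma}$ is built by analogous rules, starting at $\pinitfn(f)$ and stepping through $\precvfn$ and $\pstepfn$. Clauses~2 and~3 of \cref{def:safety-relation} let me lift $\prel$ along matching positions of the two trajectories, yielding $(\sigma_i, \psigmaequiv_i) \in \prel$ wherever both are defined.

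Next, I would invoke the forward-induction argument of \cref{lemma:forward-induction} (re-used with the free store passively carried along, since no rule in that lemma's body touches the datastore component) to conclude that $\pinsty{f}{\pbusy(x)}{\psigmaequivold}{y'}$ can step via a sequence of \textsc{Hidden} applications to $\pinsty{f}{\pbusy(x)}{\psigmaequiv'}{y'}$ with $(\sigma, \psigmaequiv') \in \prel$. Each step is justified by safety: from $(\sigma_j, \psigmaequiv_j) \in \prel$ and $\pstepfn(\sigma_j) = (\sigma_{j+1}, \varepsilon)$ we obtain $\pstepfn(\psigmaequiv_j) = (\psigmaequiv_{j+1}, \varepsilon)$ with $(\sigma_{j+1}, \psigmaequiv_{j+1}) \in \prel$. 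Since every step has label $\varepsilon$, G1's side-condition $x(\tightoverharp{\ell}) = \varepsilon$ holds trivially, and G3 follows directly.

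Finally, for G2 I would verify case~(1) of \cref{def:extended-bisimulation-rel} once more on the new configuration. The store remains free, so it suffices to re-verify \cref{def:bisimulation-relation}: idle instances and buffered responses in $\widehat{\mathcal{C}}_1$ are unchanged, so their clauses inherit from H2; for the updated busy instance, the extended trajectory $\psigmaequiv_0, \ldots, \psigmaequiv_m, \ldots, \psigmaequiv'$ witnesses clause~(2). The main obstacle is precisely the alignment argument in the middle paragraph: pairing the two trajectories position-for-position depends on determinism of $\pstepfn$ and a careful induction on the remaining suffix of the naive history, so that safety can be unrolled exactly the right number of times to land on a state related to $\sigma$, rather than overshooting or stopping short.
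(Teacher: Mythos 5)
Your proposal matches the paper's proof: the paper likewise catches the busy instance up to $\sigma$ by induction on the length of the remaining suffix of the naive history, using \textsc{Hidden} steps justified by clause~(3) of the safety relation, with the free store carried along untouched. The only cosmetic difference is that the paper pins down the instance's starting position by reading it directly off the bisimulation invariant (the instance's state is $\prel$-related to some $\sigma_i$ in the naive history, and it chooses the greatest such $i$ over all busy instances) rather than re-deriving that position by aligning the two trajectories from their initial states, which is where your "overshooting" worry lives.
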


\begin{proof}
By H1, H2 and \cref{def:bisimulation-relation}:
\begin{enumerate}
	\item [H4.] For all $\pinsty{f}{\pbusy(x)}{\sigma''}{y} \in \widehat{\mathcal{C}}_0$, $\sigma'' \in_\prel \psnoc{\pnstates}{\sigma}$
\end{enumerate}

Let $\psnoc{\pnstates}{\sigma} = [\sigma_0, \sigma_1, \cdots, \sigma_n], \sigma_n = \sigma$. Choose $i$ and $\psigmaequiv$ such that $i$ is the greatest index in $\psnoc{\pnstates}{\sigma}$ such that there exists $\pinsty{f}{\pbusy(x)}{\psigmaequiv}{y} \in \widehat{\mathcal{C}}_0$ with $(\sigma_i, \psigmaequiv) \in \prel$. Let $\widehat{\mathcal{C}}_0' = \widehat{\mathcal{C}}_0 \setminus \pinsty{f}{\pbusy(x)}{\psigmaequiv}{y}$, and let $r = n - i$. Note that $0 \leq r \leq n$. We now perform induction on $r$:

\paragraph{Base Case: $r = 0$:}
By hypothesis $r = 0$, so $i = n$ and $(\sigma, \psigmaequiv) \in \prel$. Then, G1--G3 are established trivially.

\paragraph{Inductive Case: $0 < r \leq n$:}
\begin{enumerate}
	\item [H5.] $(\psigmaequiv, \sigma_i) \in \prel$
	\item [H6.] $0 \leq i < n$
\end{enumerate}
For notation, let $\pnstateid_j = \pnstate{f}{\pbusy(x)}{[\sigma_0, \sigma_1, \cdots, \sigma_j]}{\pnstopbuff}$. Since the only possible \pnname lock state transitions prior to a $\kw{stop}(x, v)$ are $\pnstep{\pnunlocked}{}{\pnunlocked}$, $\pnstep{\pnunlocked}{}{\pnlock{M}{\pnstates'}}$ and $\pnstep{\pnlock{M}{\pnstates'}}{}{\pncommit{v}}$ the \pnname steps prior to $\pnunlocked, \pnstate{f}{\pbusy(x)}{\psnoc{\pnstates}{\sigma}}{\pnstopbuff}$ must be of the form:
\[
	\pnunlocked, \pnstateid_0 \pnsteparrd{}{\kw{start}(x, v)} \pnunlocked, \pnstateid_1 \pnsteparrd{}{\varepsilon} \cdots \pnsteparrd{}{\varepsilon} \pnunlocked, \pnstateid_i \pnsteparrd{}{\varepsilon} \pnunlocked, \pnstateid_{i+1} \pnsteparrd{}{\varepsilon} \cdots \pnsteparrd{}{\varepsilon} \pnunlocked, \pnstateid_n = \pnunlocked, \pnstate{f}{\pbusy(x)}{\psnoc{\pnstates}{\sigma}}{\pnstopbuff}
\]
In particular, we know that:
\begin{enumerate}
	\item [H7.] $\pnstep{\pnunlocked, \pnstateid_i}{}{\pnunlocked, \pnstateid_{i+1}}$
\end{enumerate}
Since $\pnstateid_i = \pnstate{f}{\pbusy(x)}{[\sigma_0, \cdots, \sigma_i]}{\pnstopbuff}$ is in the busy state, and the first step above is a $\kw{start}(x, v)$, $i \neq 0$. By case analysis on \pnname step rules, this must be a \textsc{\pnNameRule-Step} rule invocation. Thus:
\begin{enumerate}
	\item [H9.] $\pstepfn{(\sigma_i)} = (\sigma_{i+1}, \varepsilon)$
\end{enumerate}
Let $(\psigmaequiv', t) = \pstepfn{(\psigmaequiv)}$. By H5, H9 and \cref{def:safety-relation}:
\begin{enumerate}
	\item [H10.] $(\psigmaequiv', \sigma_{i+1}) \in \prel$
	\item [H11.] $t = \varepsilon$
\end{enumerate}
By H10, H11 and \textsc{Hidden}:
\begin{enumerate}
	\item [H12.] $\pfstep{
			\pinsty{f}{\pbusy(x)}{\psigmaequiv}{y} \pds{\pkv}{\plfree} \widehat{\mathcal{C}}_0'
		}{}{
			\pinsty{f}{\pbusy(x)}{\psigmaequiv'}{y} \pds{\pkv}{\plfree} \widehat{\mathcal{C}}_0'
		}$
\end{enumerate}
By H4 and H10:
\begin{enumerate}
	\item [H13.] $\pnunlocked, \pnstateid_{i+1} \pequiv \pinsty{f}{\pbusy(x)}{\psigmaequiv'}{y} \pds{\pkv}{\plfree} \widehat{\mathcal{C}}_0'$
\end{enumerate}
By H10 we now have $r' = r - 1$, so by the inductive hypothesis we have that there exists $\widehat{\mathcal{C}}_1$ such that:
\begin{enumerate}
	\item [H14.] $\pinsty{f}{\pbusy(x)}{\psigmaequiv'}{y} \pds{\pkv}{\plfree} \widehat{\mathcal{C}}_0' \pfsteps{} \pinsty{f}{\pbusy(x)}{\psigmaequiv''}{y} \pds{\pkv}{\plfree} \widehat{\mathcal{C}}_1$
	\item [H15.] $\pnunlocked, \pnstate{f}{\pbusy(x)}{\psnoc{\pnstates}{\sigma}}{\pnstopbuff} \pequiv \pinsty{f}{\pbusy(x)}{\psigmaequiv''}{y} \pds{\pkv}{\plfree} \widehat{\mathcal{C}}_1$
	\item [H16.] $(\sigma, \psigmaequiv'') \in \prel$
\end{enumerate}
Then, G1--G3 are established by H14, H15 and H16.
\end{proof}


\begin{lemma}[Classification of \pnname semantics traces]\label{lemma:forward-classification}
Suppose:
\begin{enumerate}
	\item [H1.] $\langle f, \sigma_0, \Sigma, \precvfn, \pstepfn \rangle$ satisfies \cref{def:idempotence}.
\end{enumerate}
For notation, define the following, using $n$ chosen in \cref{def:idempotence}:
\begin{align*}
	\pnstateid_0 &= \pnstate{f}{\pidle}{\sigma_0}{\pnstopbuff} \\
	\pnstateid_i &= \pnstate{f}{\pbusy(x)}{[\sigma_0, \sigma_1, \cdots, \sigma_i]}{\pnstopbuff}, \textrm{ for } i \leq n+1 \\
	\pnstateid_{n+2} &= \pnstate{f}{\pbusy(x)}{[\sigma_0, \sigma_1, \cdots, \sigma_{n+1}, \sigma_f]}{\pnstopbuff} \\
	\pnstateid_{n+3} &= \pnstate{f}{\pidle}{[\sigma_{ff}]}{\pnstopbuff \cup \{ (x, w) \}} \\
	\pnstateid_{n+4} &= \pnstate{f}{\pidle}{[\sigma_{ff}]}{\pnstopbuff} \\
	\pnstateid_4' &= \pnstate{f}{\pbusy(x)}{[\sigma_0, \sigma_1, \cdots, \sigma_3, \sigma_4']}{\pnstopbuff} \\
	\pnstateid_5' &= \pnstate{f}{\pbusy(x)}{[\sigma_0, \sigma_1, \cdots, \sigma_3, \sigma_4', \sigma_f]}{\pnstopbuff} \\
	\pnstateid_6' &= \pnstate{f}{\pidle}{[\sigma_{ff}]}{\pnstopbuff \cup \{ (x, w) \}} \\
	\pnstateid_7' &= \pnstate{f}{\pidle}{[\sigma_{ff}]}{\pnstopbuff} \\
	L &= \pnlock{M}{[\sigma_0, \sigma_1]} \\
	L' &= \pnlock{M'}{[\sigma_0, \sigma_1]} \\
	L'' &= \pnlock{M'[x \mapsto w]}{[\sigma_0, \sigma_1]} \\
	C &= \pncommit{w}
\end{align*}

Then there are exactly two classes of program traces in the \pnname semantics. Either the \pnname program trace is exactly:
\begin{align*}
	\pnunlocked, \pnstateid_0 &\pnsteparrd{Rule \textsc{Start}}{\kw{start}(x, v)}
	\pnunlocked, \pnstateid_1 \pnsteparrd{Rule \textsc{BeginTx}}{}
	L, \pnstateid_2 \pnsteparrd{Rule \textsc{Read}}{}
	L, \pnstateid_4' \pnsteparrd{Rule \textsc{EndTx}}{}
	C, \pnstateid_5' \\ &\pnsteparrd{Rule \textsc{\pnNameRule-Buffer-Stop}}{}
	C, \pnstateid_6' \pnsteparrd{Rule \textsc{\pnNameRule-Emit-Stop}}{\kw{stop}(x, w)}
	C, \pnstateid_7'
\end{align*}
or otherwise it has the following form:
\begin{align*}
	\pnunlocked, \pnstateid_0 &\pnsteparrd{Rule \textsc{Start}}{\kw{start}(x, v)}
	\pnunlocked, \pnstateid_1 \pnsteparrd{Rule \textsc{BeginTx}}{}
	L, \pnstateid_2 \pnsteparrd{Rule \textsc{Read}}{}
	L, \pnstateid_4 \pnstepsarrd{Rules \textsc{Read}, \textsc{Write}, or \textsc{Step}}{} \;
	L', \pnstateid_n \\ &\pnsteparrd{Rule \textsc{Write}}{}
	L'', \pnstateid_{n+1} \pnsteparrd{Rule \textsc{EndTx}}{}
	C, \pnstateid_{n+2} \pnsteparrd{Rule \textsc{\pnNameRule-Buffer-Stop}}{}
	C, \pnstateid_{n+3} \pnsteparrd{Rule \textsc{\pnNameRule-Emit-Stop}}{\kw{stop}(x, w)}
	C, \pnstateid_{n+4}
\end{align*}

\end{lemma}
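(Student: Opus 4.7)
The plan is to induct forward on the \pnname reduction sequence beginning at $\pnunlocked, \pnstateid_0$, using \cref{def:idempotence} to force a unique \pnname rule at each step. The only sources of nondeterminism in the \pnname semantics that matter here are (i)~the choice of map $M$ at \textsc{\pnNameRule-BeginTx} and (ii)~the \textsc{\pnNameRule-Rollback} rule; item~(i) will provide the case split that yields the two classes, while item~(ii) only restarts an already-classified sub-trace and so introduces no new class.

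First I would handle the common prefix shared by both classes. In $\pnstateid_0$ the mode is $\pidle$, so only \textsc{\pnNameRule-Start} applies, and item~1 of \cref{def:idempotence} gives $\precvfn(v, \sigma_0) = \sigma_1$, advancing to $\pnunlocked, \pnstateid_1$. In $\pnstateid_1$, item~2 gives $\pstepfn(\sigma_1) = (\sigma_2, \ptbeginTx)$, which forces \textsc{\pnNameRule-BeginTx} with some nondeterministically chosen $M$, producing $L, \pnstateid_2$. In $\pnstateid_2$, item~3 gives $\pstepfn(\sigma_2) = (\sigma_3, \ptread{x})$, forcing \textsc{\pnNameRule-Read}. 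The trace now branches on the value of $M(x)$.

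If $M(x) = w$ with $w \neq \pnil$, then item~5 gives $\precvfn(w, \sigma_3) = \sigma_4'$, producing $L, \pnstateid_4'$. Item~6 then forces \textsc{\pnNameRule-EndTx} at $\sigma_4'$, committing $w$ at key $x$ and moving to $C, \pnstateid_5'$. Finally, item~9 gives $\pstepfn(\sigma_f) = (\sigma_{ff}, \ptreturn{w})$, which forces \textsc{\pnNameRule-Buffer-Stop} followed by \textsc{\pnNameRule-Emit-Stop}; this exactly matches the first claimed class. If instead $M(x) = \pnil$, then item~4 gives $\sigma_4$, and item~7(b) constrains every $\pstepfn(\sigma_m)$ for $4 \le m < n$ to produce only $\varepsilon$, $\ptread{k}$, or $\ptwrite{k}{v'}$ with $k \notin X$, each of which forces exactly one of \textsc{\pnNameRule-Step}, \textsc{\pnNameRule-Read}, or \textsc{\pnNameRule-Write}. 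None of these rules alters the transaction state, so the trace remains in a locked busy configuration of the form $L', \pnstateid_m$. At step $n$, item~7(a) forces \textsc{\pnNameRule-Write} of $w$ at key $x$, producing $L'', \pnstateid_{n+1}$. Items~8 and~9 then force the same EndTx / Buffer-Stop / Emit-Stop finish as above, matching the second claimed class.

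The main obstacle will be ruling out any trace outside these two classes. This reduces to two observations. First, \pnname steps inside a transaction are entirely determined by $\pstepfn$, and items~1--9 of \cref{def:idempotence} pin down $\pstepfn$'s behavior on exactly the states reachable along any such trace. Second, item~7(b) is critical: without its restriction that intermediate commands avoid $\ptbeginTx$, $\ptendTx$, and $\ptreturn{\cdot}$, a function could commit early or restart a transaction mid-execution, yielding traces outside the stated classes. Finally, handling \textsc{\pnNameRule-Rollback} needs a brief remark: rollback can fire at any point before the transaction commits and simply returns execution to $\pnstateid_1$, so inducting on the number of rollbacks, every completed trace still collapses into one of the two classes.
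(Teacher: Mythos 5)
Your proposal follows essentially the same route as the paper's proof: a case split on whether $M(x) = \pnil$, then stepping through the items of \cref{def:idempotence} to force each successive \pnname rule and assemble the two trace forms. You additionally argue exhaustiveness and sketch how to absorb \textsc{\pnNameRule-Rollback}, points the paper's own proof leaves implicit, but the core argument is the same.
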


\begin{proof}
There are two cases:

\paragraph{Case $\mathbf{M(x) = \pnil}$:}

By \cref{def:idempotence}, step 1 and the \textsc{Start} rule:
\begin{enumerate}
	\item [H2.] $\pnunlocked, \pnstateid_0 \pnsteparrd{Rule \textsc{Start}}{\kw{start}(x, v)} \pnunlocked, \pnstateid_1$
\end{enumerate}

By \cref{def:idempotence}, step 2 and the \textsc{BeginTx} rule:
\begin{enumerate}
	\item [H3.] $\pnunlocked, \pnstateid_1 \pnsteparrd{Rule \textsc{BeginTx}}{} L, \pnstateid_2$
\end{enumerate}

By \cref{def:idempotence}, steps 3 and 4, the case hypothesis, and the \textsc{Read} rule:
\begin{enumerate}
	\item [H4.] $L, \pnstateid_2 \pnsteparrd{Rule \textsc{Read}}{} L, \pnstateid_4$
\end{enumerate}

By \cref{def:idempotence}, steps 7a and 7b, and the \textsc{Step}, \textsc{Read} and \textsc{Write} rules:
\begin{enumerate}
	\item [H5.] $L, \pnstateid_4 \pnstepsarrd{Rules \textsc{Read}, \textsc{Write}, or \textsc{Step}}{} \;
	L', \pnstateid_n \pnsteparrd{Rule \textsc{Write}}{} L'', \pnstateid_{n+1}$
\end{enumerate}

By \cref{def:idempotence}, step 8, and the \textsc{EndTx} rule:
\begin{enumerate}
	\item [H6.] $L'', \pnstateid_{n+1} \pnsteparrd{Rule \textsc{EndTx}}{}
	C, \pnstateid_{n+2}$
\end{enumerate}

By \cref{def:idempotence}, step 9, and the \textsc{\pnNameRule-Buffer-Stop} and \textsc{\pnNameRule-Emit-Stop} rules:
\begin{enumerate}
	\item [H7.] $C, \pnstateid_{n+2} \pnsteparrd{Rule \textsc{\pnNameRule-Buffer-Stop}}{}
	C, \pnstateid_{n+3} \pnsteparrd{Rule \textsc{\pnNameRule-Emit-Stop}}{\kw{stop}(x, w)} C, \pnstateid_{n+4}$
\end{enumerate}

By H2, H3, H4, H5, H6 and H7, the second \pnname program trace form is established.

Case complete.

\paragraph{Case $\mathbf{M(x) \neq \pnil}$:}

By \cref{def:idempotence}, step 1 and the \textsc{Start} rule:
\begin{enumerate}
	\item [H2.] $\pnunlocked, \pnstateid_0 \pnsteparrd{Rule \textsc{Start}}{\kw{start}(x, v)} \pnunlocked, \pnstateid_1$
\end{enumerate}

By \cref{def:idempotence}, step 2 and the \textsc{BeginTx} rule:
\begin{enumerate}
	\item [H3.] $\pnunlocked, \pnstateid_1 \pnsteparrd{Rule \textsc{BeginTx}}{} L, \pnstateid_2$
\end{enumerate}

By \cref{def:idempotence}, steps 3 and 5, the case hypothesis, and the \textsc{Read} rule:
\begin{enumerate}
	\item [H4.] $L, \pnstateid_2 \pnsteparrd{Rule \textsc{Read}}{} L, \pnstateid_4'$
\end{enumerate}

By \cref{def:idempotence}, step 6, and the \textsc{EndTx} rule:
\begin{enumerate}
	\item [H5.] $L, \pnstateid_4' \pnsteparrd{Rule \textsc{EndTx}}{}
	C, \pnstateid_5'$
\end{enumerate}

By \cref{def:idempotence}, step 9, and the \textsc{\pnNameRule-Buffer-Stop} and \textsc{\pnNameRule-Emit-Stop} rules:
\begin{enumerate}
	\item [H6.] $C, \pnstateid_{5}' \pnsteparrd{Rule \textsc{\pnNameRule-Buffer-Stop}}{}
	C, \pnstateid_{6}' \pnsteparrd{Rule \textsc{\pnNameRule-Emit-Stop}}{\kw{stop}(x, w)} C, \pnstateid_{7}'$
\end{enumerate}

By H2, H3, H4, H5, and H6, the first \pnname program trace form is established.

Case complete.
\end{proof}


\begin{theorem}[Forward direction of weak bisimulation with the key-value store]
\label{ext-bisim-fwd}
For all $\pnlockid$, $\pnlockid'$, $\pnstateid$, $\pnstateid'$, $\mathcal{C}$, $\ell$, $\Sigma$, $\precvfn$, $\pstepfn$ if:
\begin{enumerate}
    \item [H1.] $\pnstep{\pnlockid,\pnstateid}{\ell}{\pnlockid',\pnstateid'}$
    \item [H2.] $\pnlockid,\pnstateid \pequiv \mathcal{C}$
    \item [H3.] $\langle f, \sigma_0, \Sigma, \precvfn, \pstepfn,\pnstopbuff \rangle$ satisfies \cref{def:idempotence}
\end{enumerate}

Then there exists $\tightoverharp{\ell_1}$, $\tightoverharp{\ell_2}$, $\mathcal{C}_i$, $\mathcal{C}_{i+1}$ and $\mathcal{C}'$ such that

\begin{enumerate}
    \item [G1.] $\pnlockid',\pnstateid' \pequiv \mathcal{C'}$
    \item [G2.] $\mathcal{C} \pfsteps{\tightoverharp{\ell_1}} \mathcal{C}_i \land x(\tightoverharp{\ell_1}) = \varepsilon$
    \item [G3.] $\pfstep{\mathcal{C}_i}{\ell}{\mathcal{C}_{i+1}}$
    \item [G4.] $\mathcal{C}_{i+1} \pfsteps{\tightoverharp{\ell_2}} \mathcal{C}' \land x(\tightoverharp{\ell_2}) = \varepsilon$
\end{enumerate}

\end{theorem}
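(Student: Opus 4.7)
The plan is to proceed by case analysis on which \pnname rule derives the step in hypothesis H1, and, for each case, on which clause of \cref{def:extended-bisimulation-rel} witnesses $\pnlockid,\pnstateid \pequiv \mathcal{C}$. The idempotence classification from \cref{lemma:forward-classification} gives us a complete enumeration of the \pnname transitions we must handle, and in each case the task reduces to (i)~synthesizing a prefix $\tightoverharp{\ell_1}$ of unobservable \pfname steps that "catches up" some distinguished instance in $\mathcal{C}$ to the current \pnname state, (ii)~firing a single \pfname rule that produces the label $\ell$, and (iii)~synthesizing a suffix $\tightoverharp{\ell_2}$ that re-establishes the bisimulation.

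For the \textsc{\pnNameRule-Start}, \textsc{\pnNameRule-Step}, and \textsc{\pnNameRule-Emit-Stop} rules, the lock is in $\pnunlocked$, so clause~(1) of \cref{def:extended-bisimulation-rel} applies and the reasoning is essentially that of \cref{thm:Weak Bisimulation}: \textsc{Req} matches \textsc{\pnNameRule-Start}, \textsc{Resp} matches \textsc{\pnNameRule-Emit-Stop} after first using \cref{lemma:forward-induction-store} (possibly preceded by \textsc{Cold} if no instance for $x$ yet exists) to walk some \pfname instance to a $\prel$-equivalent state, and \textsc{Hidden} handles \textsc{\pnNameRule-Step}. For \textsc{\pnNameRule-BeginTx} we additionally use \textsc{BeginTx} on that same caught-up instance, matching the fresh $M$ chosen nondeterministically in the \pnname step with the $\pkv'$ chosen by \textsc{BeginTx}, and re-establishing clause~(2) of the bisimulation. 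For \textsc{\pnNameRule-Read} and \textsc{\pnNameRule-Write} we are already in clause~(2), so the distinguished instance $\pinst{f}{\pbusy(x)}{\psigmaequiv}$ is lock-stepped with $\sigma$ and we use \textsc{Read}/\textsc{Write} directly, relying on the safety relation and the equality of the pending lock contents in the two semantics. For \textsc{\pnNameRule-EndTx} we fire \textsc{EndTx}, which moves the \pfname lock state to $\plfree$ and commits $\pkv'$ globally; we must verify that the committed map satisfies the $\pkv(x) = w$ invariant required by clause~(3) — this is exactly what idempotence condition~(2) of \cref{thm:ext-weak-bisim} (together with the form supplied by \cref{lemma:forward-classification}) guarantees.

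The genuinely delicate case is \textsc{\pnNameRule-Rollback}, where the \pnname semantics discards the in-flight transaction and rewinds $\pnstates\pappend[\sigma]$ back to $\pnstates'$, landing in $\cdot, \pnstate{f}{\pbusy(x)}{\pnstates'}{\pnstopbuff}$. Clause~(2) of the bisimulation supplies a lock-holding \pfname instance, and the plan is to match the rollback in \pfname with the sequence \textsc{Die} (on that instance) followed by \textsc{DropTx} (whose side condition is now enabled because no busy instance of $f$ on $x$ remains in lockstep). The subtle point is showing that after this sequence we fall back into clause~(1): all remaining requests $\preq{f}{x}{v}$ may be serviced by any other instance whose state is reachable by unobservable steps from an initial-$\prel$-related state, and the history prefix $\pnstates'$ is precisely such a prefix by construction, so the case~(1) reuses $\pequiv$ from \cref{thm:Weak Bisimulation}.

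The principal obstacle is bookkeeping rather than any single difficult argument: we must preserve the invariant that every busy \pfname instance for $x$ corresponds to a prefix of the \pnname history (clause~(2) of \cref{def:bisimulation-relation}) while simultaneously preserving the lock-synchronization invariant of clause~(2) of \cref{def:extended-bisimulation-rel}. The key device is to always insert the catch-up $\tightoverharp{\ell_1}$ steps on the single distinguished lock-holding instance, leaving all other instances untouched so that their prefixes are preserved; and to rely on \cref{lemma:forward-classification} to guarantee that the \pnname trace, at the moment of any observable $\ell$, is in exactly one of the canonical configurations for which the above catch-up argument applies. Idempotence conditions~(1) and~(3) of \cref{thm:ext-weak-bisim} are used to ensure, respectively, that the "$M(x) \ne \pnil$" branch of \cref{lemma:forward-classification} returns $w = \pkv(x)$ so the \textsc{Resp} label in \pfname matches $\kw{stop}(x,w)$ on the \pnname side, and that the \pnname buffered response in $\pnstopbuff$ in clause~(3) of \cref{def:bisimulation-relation} can always be realized in \pfname.
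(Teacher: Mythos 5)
Your plan matches the paper's proof in all essentials: the same case analysis on the \pnname rule deriving H1, the same use of \cref{lemma:forward-classification} to pin down the trace shape and of \cref{lemma:forward-induction-store} to catch a distinguished instance up before \textsc{BeginTx}, the memoized replay for \textsc{\pnNameRule-Emit-Stop} (a fresh \textsc{Cold} start followed by \textsc{BeginTx}, a \textsc{Read} of key $x$ returning the committed $w$, \textsc{EndTx}, and \textsc{Resp}, justified by idempotence), and \textsc{Die} followed by \textsc{DropTx} for \textsc{\pnNameRule-Rollback}. The only slips are cosmetic: you omit the trivial \textsc{\pnNameRule-Buffer-Stop} case (which the paper discharges with zero \pfname{} steps), and the Emit-Stop replay is constructed directly from \cref{def:idempotence} rather than via \cref{lemma:forward-induction-store}, which only covers \textsc{Hidden} catch-up steps in the unlocked state.
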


\begin{proof}
By case analysis of the relation
$\pnstep{\pnlockid,\pnstateid}{\ell}{\pnlockid',\pnstateid'}$.

\paragraph{Case \textsc{Start}:}
By definition and by \cref{lemma:forward-classification}:
\begin{enumerate}
	\item [H1'.] $\pnlockid, \pnstateid = \pnstep{
			\pnunlocked , \pnstate{f}{\pidle}{\sigma_0}{\pnstopbuff}
		}{\kw{start}(x, v)}{
			\pnunlocked, \pnstate{f}{\pbusy(x)}{[\sigma_0, \sigma_1]}{\pnstopbuff}
		} = \pnlockid', \pnstateid'$
\end{enumerate}

Choose $x$ to be fresh. By \textsc{Req}:
\begin{enumerate}
	\item [H4.] $\pfstep{\mathcal{C}}{\kw{start}(x, v)}{\preq{f}{x}{v} \mathcal{C}}$
\end{enumerate}

Choose $\mathcal{C}' = \preq{f}{x}{v} \mathcal{C}$, and G1 is established. By H4, G2--G4 are established.

Case complete.

\paragraph{Case {\sc Step}:}
By definition:
\begin{enumerate}
	\item [H1'.] $\pnlockid, \pnstateid = \pnstep{\pnlockid, \pnstate{f}{\pbusy(x)}{\psnoc{\pnstates}{\sigma}}{\pnstopbuff}}{}{\pnlockid, \pnstate{f}{\pbusy(x)}{\psnoc{\pnstates}{\sigma, \sigma'}}{\pnstopbuff}} = \pnlockid', \pnstateid'$
\end{enumerate}

By hypothesis of \textsc{Step} and $H1'$:
\begin{enumerate}
	\item [H4.] $\pstepfn(\sigma) = (\sigma', \varepsilon)$
\end{enumerate}

By \cref{lemma:forward-classification}:
\begin{enumerate}
	\item [H5.] There exists an $m$ with $4 \leq m < n$ such that $\pnlockid, \pnstateid = \pnlock{M''}{[\sigma_0, \sigma_1]}, \pnstateid_m$ and $\pnlockid', \pnstateid' = \pnlock{M''}{[\sigma_0, \sigma_1]}, \pnstateid_{m+1}$. Thus, $\sigma = \sigma_m, \sigma' = \sigma_{m+1}$.
\end{enumerate}

By H5:
\begin{enumerate}
	\item [H6.] $\pnlockid, \pnstateid = \pnstep{
			\pnlock{M''}{[\sigma_0, \sigma_1]},
			\pnstate{f}{\pbusy(x)}{[\sigma_0, \cdots, \sigma_m]}
		}{}{ \newline
			\pnlock{M''}{[\sigma_0, \sigma_1]},
			\pnstate{f}{\pbusy(x)}{[\sigma_0, \cdots, \sigma_{m+1}]}
		} = \pnlockid', \pnstateid'$
\end{enumerate}

By H2 and H6:
\begin{enumerate}
	\item [H7.] $C = \pds{M_0}{\plowned{y}{M''}} \pinsty{f}{\psigmaequiv_m}{\pbusy(x)}{y} \widehat{\mathcal{C}}_0$
	\item [H8.] $(\sigma_m, \psigmaequiv_m) \in \prel$
\end{enumerate}

By H4, H5, H8 and \cref{def:safety-relation}:
\begin{enumerate}
	\item [H9.] $\pstepfn(\psigmaequiv_m) = (\psigmaequiv_{m+1}, \varepsilon)$
	\item [H10.] $(\sigma_{m+1}, \psigmaequiv_{m+1}) \in \prel$
\end{enumerate}

By H9 and \textsc{Hidden}:
\begin{enumerate}
	\item [H11.] $\pfstep{
			\pds{M_0}{\plowned{y}{M''}} \pinsty{f}{\psigmaequiv_m}{\pbusy(x)}{y} \widehat{\mathcal{C}}_0
		}{}{
			\pds{M_0}{\plowned{y}{M''}} \pinsty{f}{\psigmaequiv_{m+1}}{\pbusy(x)}{y} \widehat{\mathcal{C}}_0
		}$
\end{enumerate}

Let $\mathcal{C}' = \pds{M_0}{\plowned{y}{M''}} \pinsty{f}{\psigmaequiv_{m+1}}{\pbusy(x)}{y} \widehat{\mathcal{C}}_0$. By H10, G1 is satisfied. By H11, G2--G4 are satisfied.

Case complete.

\paragraph{Case \textsc{\pnNameRule-Buffer-Stop}:}
By definition:
\begin{enumerate}
	\item [H1'.] $\pnlockid, \pnstateid = \pnstep{\pnlockid, \pnstate{f}{\pbusy(x)}{\psnoc{\pnstates}{\sigma} }{\pnstopbuff}}{}{\pnlockid, \pnstate{f}{\pidle}{[\sigma_N']}{\pnstopbuff \cup \{(x, w)\}}} = \pnlockid', \pnstateid'$
\end{enumerate}

By $H2$, $\preq{f}{x}{v} \in \pfstatetyp$, so then $\pnlockid', \pnstateid' \pequiv \pfstatetyp$. Thus goals G1--G4 are established by a zero step of $\pfstatetyp$.

%
%

\paragraph{Case \textsc{\pnNameRule-Emit-Stop}:}
By definition:
\begin{enumerate}
	\item [H1'.] $\pnlockid, \pnstateid = \pnstep{\pnlockid, \pnstate{f}{\pidle}{\pnstates}{\pnstopbuff \cup \{(x, w)\}}}{\kw{stop}(x, w)}{\pnlockid, \pnstate{f}{\pidle}{\pnstates}{\pnstopbuff}} = \pnlockid', \pnstateid'$
\end{enumerate}

Case complete.


By \cref{lemma:forward-classification} there are two cases:

\textbf{Sub-Case 1:} $\pncommit{w}, \pnstateid_6' \pnsteparrd{Rule \textsc{\pnNameRule-Emit-Stop}}{\kw{stop}(x, w)}
	\pncommit{w}, \pnstateid_7'$.

\textbf{Sub-Case 2:} $\pncommit{w}, \pnstateid_{n+3} \pnsteparrd{Rule \textsc{\pnNameRule-Emit-Stop}}{\kw{stop}(x, w)}
	\pncommit{w}, \pnstateid_{n+4}$.

Note that both sub-cases end in the same state as given by \cref{lemma:forward-classification}, $\pnstateid_{7}' = \pnstateid_{n+4}'$. In both sub-cases we have:
\begin{enumerate}
	\item [H2'.] $\mathcal{C} = \pds{\pkv}{\plfree} \preq{f}{x}{v} \widehat{\mathcal{C}}_0$
	\item [H5.] $w \neq \pnil$
	\item [H6.] $M(x) = w$
\end{enumerate}

By H3:
\begin{enumerate}
	\item [H7.] $\precvfn(v, \sigma_0) = \sigma_1$
\end{enumerate}

Choose $y$ to be fresh, and by H7 and \textsc{Cold}:
\begin{enumerate}
	\item [H8.] $\pfstep{\pds{\pkv}{\plfree} \preq{f}{x}{v} \widehat{\mathcal{C}}_0}{}{\pds{\pkv}{\plfree} \preq{f}{x}{v} \pinsty{f}{\sigma_1}{\pbusy(x)}{y} \widehat{\mathcal{C}}_0}$
\end{enumerate}

By H3:
\begin{enumerate}
	\item [H9.] $\pstepfn(\sigma_1) = (\sigma_2, \ptbeginTx)$
\end{enumerate}

By H9 and \textsc{BeginTx}:
\begin{enumerate}
	\item [H10.] $\pfstep{
		\pds{\pkv}{\plfree} \preq{f}{x}{v} \pinsty{f}{\sigma_1}{\pbusy(x)}{y} \widehat{\mathcal{C}}_0
	}{}{
		\pds{\pkv}{\plowned{y}{M}} \preq{f}{x}{v} \pinsty{f}{\sigma_2}{\pbusy(x)}{y} \widehat{\mathcal{C}}_0
	}$
\end{enumerate}

By H3:
\begin{enumerate}
	\item [H11.] $\pstepfn(\sigma_2) = (\sigma_3, \ptread{x} )$
	\item [H12.] $\precvfn(w, \sigma_3) = \sigma_4'$, for $w \neq \pnil$
\end{enumerate}

By H5, H6, H11, H12 and \textsc{Read}:
\begin{enumerate}
	\item [H13.] $\pfstep{
		\pds{\pkv}{\plowned{y}{M}} \preq{f}{x}{v} \pinsty{f}{\sigma_2}{\pbusy(x)}{y} \widehat{\mathcal{C}}_0
	}{}{ \newline
		\pds{\pkv}{\plowned{y}{M}} \preq{f}{x}{v} \pinsty{f}{\sigma_4'}{\pbusy(x)}{y} \widehat{\mathcal{C}}_0
	}$
\end{enumerate}

By H3:
\begin{enumerate}
	\item [H14.] $\pstepfn(\sigma_4') = (\sigma_f, \ptendTx )$
\end{enumerate}

By H14 and \textsc{EndTx}:
\begin{enumerate}
	\item [H15.] $\pfstep{
		\pds{\pkv}{\plowned{y}{M}} \preq{f}{x}{v} \pinsty{f}{\sigma_4'}{\pbusy(x)}{y} \widehat{\mathcal{C}}_0
	}{}{
		\pds{\pkv}{\plfree} \preq{f}{x}{v} \pinsty{f}{\sigma_f}{\pbusy(x)}{y} \widehat{\mathcal{C}}_0
	}$
\end{enumerate}

By H3:
\begin{enumerate}
	\item [H16.] $\pstepfn(\sigma_f) = (\sigma_{ff}, \ptreturn{w} )$
\end{enumerate}

By H16 and \textsc{Resp}:
\begin{enumerate}
	\item [H17.] $\pfstep{
		\pds{\pkv}{\plfree} \preq{f}{x}{v} \pinsty{f}{\sigma_f}{\pbusy(x)}{y} \widehat{\mathcal{C}}_0
	}{\kw{stop}(x, w)}{
		\pds{\pkv}{\plfree} \preq{f}{x}{v} \pinsty{f}{\sigma_{ff}}{\pidle}{y}\presp{x}{w} \widehat{\mathcal{C}}_0
	}$
\end{enumerate}

Let $\mathcal{C}_{i+1} = \mathcal{C}' = \pds{\pkv}{\plfree} \preq{f}{x}{v} \pinsty{f}{\sigma_{ff}}{\pidle}{y}\presp{x}{w} \widehat{\mathcal{C}}_0$. By H6, G1 is satisfied.

Let $\mathcal{C}_i = \pds{\pkv}{\plfree} \preq{f}{x}{v} \pinsty{f}{\sigma_f}{\pbusy(x)}{y} \widehat{\mathcal{C}}_0$. By H8, H10, H13, H15 and H17, G2--G4 are satisfied.

Case complete.

\paragraph{Case \textsc{Read}:}
By definition:
\begin{enumerate}
    \item [H1'.]
    $\pnstep{\pnlock{\pkv}{\pnstates'}, \pnstate{f}{\pbusy(x)}{\psnoc{\pnstates}{\sigma}}{\pnstopbuff}}
        {}
        {\pnlock{\pkv}{\pnstates'}, \pnstate{f}{\pbusy(x)}{\psnoc{\pnstates}{\sigma,\sigma',\sigma''}}{\pnstopbuff}}$
    \item [H2'.] $\mathcal{C} = \pds{\pkv'}{\plowned{y}{\pkv}} \pinsty{f}{\psigmaequiv}{\pbusy{(x)}}{y} \widehat{\mathcal{C}}_0$ and
    $\langle f,\pbusy(x),\psnoc{\pnstates}{\sigma} \rangle \pequiv \pinsty{f}{\psigmaequiv}{\pbusy{(x)}}{y} \widehat{\mathcal{C}}_0$ and $(\sigma, \psigmaequiv) \in \prel$.
\end{enumerate}

By hypothesis of \textsc{Read} and $H1'$:
\begin{enumerate}
	\item [H4.] $\pstepfn{(\sigma)} = (\sigma', \ptread{k})$
	\item [H5.] $\precvfn{(\pkv(k), \sigma')} = \sigma''$
\end{enumerate}

By H4, $H2'$ and \cref{def:safety-relation}:
\begin{enumerate}
	\item [H6.] $\pstepfn{(\psigmaequiv)} = (\psigmaequiv', \ptread{k}) \land (\sigma', \psigmaequiv') \in \prel$.
\end{enumerate}

By H5, H6 and \cref{def:safety-relation}:
\begin{enumerate}
	\item [H7.] $\precvfn{(\pkv(k), \psigmaequiv')} = \psigmaequiv'' \land (\sigma'', \psigmaequiv'') \in \prel$
\end{enumerate}

By the \textsc{Read} rule, H6 and H7:
\begin{enumerate}
	\item [H8.] $\pfstep{
		\pds{\pkv'}{\plowned{y}{\pkv}} \pinsty{f}{\psigmaequiv}{\pbusy{(x)}}{y} \widehat{\mathcal{C}}_0}{}{
		\pds{\pkv'}{\plowned{y}{\pkv}} \pinsty{f}{\psigmaequiv''}{\pbusy{(x)}}{y} \widehat{\mathcal{C}}_0
		}$
\end{enumerate}

Let $\mathcal{C}' = \pds{\pkv'}{\plowned{y}{\pkv}} \pinsty{f}{\psigmaequiv''}{\pbusy{(x)}}{y} \widehat{\mathcal{C}}_0$. Goals G1--G4 follow by H8.

Case complete.

\paragraph{Case \textsc{Write}:}
By definition:
\begin{enumerate}
    \item [H1'.]
    $\pnstep{\pnlock{\pkv}{\pnstates'}, \pnstate{f}{\pbusy(x)}{\psnoc{\pnstates}{\sigma}}{\pnstopbuff}}
        {}
        {\pnlock{\pkv'}{\pnstates'}, \pnstate{f}{\pbusy(x)}{\psnoc{\pnstates}{\sigma,\sigma'}}{\pnstopbuff}}$
    \item [H2'.] $\mathcal{C} = \pds{\pkv''}{\plowned{y}{\pkv}} \pinsty{f}{\psigmaequiv}{\pbusy{(x)}}{y} \widehat{\mathcal{C}}_0$ and
    $\langle f,\pbusy(x),\psnoc{\pnstates}{\sigma},\pnstopbuff \rangle \pequiv \pinsty{f}{\psigmaequiv}{\pbusy{(x)}}{y} \widehat{\mathcal{C}}_0$ and $(\sigma, \psigmaequiv) \in \prel$.
\end{enumerate}

By hypothesis of \textsc{Write} and $H1'$:
\begin{enumerate}
	\item [H4.] $\pstepfn{(\sigma)} = (\sigma', \ptwrite{k}{v})$
	\item [H5.] $\pkv' = \pkv [ k \mapsto v ]$
\end{enumerate}

By H4, $H2'$ and \cref{def:safety-relation}:
\begin{enumerate}
	\item [H6.] $\pstepfn{(\psigmaequiv)} = (\psigmaequiv', \ptwrite{k}{v}) \land (\sigma', \psigmaequiv') \in \prel$.
\end{enumerate}

By the \textsc{Write} rule, H5 and H6:
\begin{enumerate}
	\item [H7.] $\pfstep{
		\pds{\pkv''}{\plowned{y}{\pkv}} \pinsty{f}{\psigmaequiv}{\pbusy{(x)}}{y} \widehat{\mathcal{C}}_0}{}{
		\pds{\pkv''}{\plowned{y}{\pkv'}} \pinsty{f}{\psigmaequiv'}{\pbusy{(x)}}{y} \widehat{\mathcal{C}}_0
		}$
\end{enumerate}

Let $\mathcal{C}' = \pds{\pkv''}{\plowned{y}{\pkv'}} \pinsty{f}{\psigmaequiv'}{\pbusy{(x)}}{y} \widehat{\mathcal{C}}_0$. Goals G1--G4 follow by H7.

Case complete.

\paragraph{Case \textsc{BeginTx}:}
By definition:
\begin{enumerate}
    \item [H1'.]
    $\pnstep{\pnunlocked,\langle f,\pbusy(x),\psnoc{\pnstates}{\sigma},\pnstopbuff \rangle}
        {}
        {\pnlock{\pkv}{\psnoc{\pnstates}{\sigma}},\langle f,\pbusy(x),\psnoc{\pnstates}{\sigma,\sigma'},\pnstopbuff \rangle}$
    \item[H2'.] $\mathcal{C} = \pds{M}{\plfree}\widehat{\mathcal{C}}_0$ and
    $\langle f,\pbusy(x),\psnoc{\pnstates}{\sigma},\pnstopbuff \rangle \pequiv \widehat{\mathcal{C}}_0$.
\end{enumerate}

There are two sub-cases:

\textbf{Sub-Case 1:} $\exists \pinsty{f}{\pbusy(x)}{\sigma''}{y} \in \widehat{\mathcal{C}}_0$ for some $\sigma''$ and $y$.

By \cref{lemma:forward-induction-store}:
\begin{enumerate}
    \item[H4A.] $\pds{M}{\plfree} \widehat{\mathcal{C}}_0 \pfsteps{\varepsilon} \pds{M}{\plfree}\pinsty{f}{\pbusy(x)}{\psigmaequiv}{y}\widehat{C}_1$.
    \item[H4B.] $\pnunlocked, \pnstate{f}{\pbusy(x)}{\psnoc{\pnstates}{\sigma}}{\pnstopbuff} \pequiv \mathcal{C}$.
    \item[H4C.] $(\sigma, \psigmaequiv) \in \prel$.
\end{enumerate}

\textbf{Sub-Case 2:} $\neg \exists \pinsty{f}{\pbusy(x)}{\sigma''}{y} \in \widehat{\mathcal{C}}_0$ for some $\sigma''$ and $y$. By \textsc{Cold} and \cref{lemma:forward-induction-store}:
\begin{enumerate}
    \item[H4A.] $\pds{M}{\plfree} \widehat{\mathcal{C}}_0 \pfsteparr{} \pds{M}{\plfree} \pinsty{f}{\pbusy(x)}{\sigma''}{y} \widehat{\mathcal{C}}_0' \pfsteps{\varepsilon} \pds{M}{\plfree}\pinsty{f}{\pbusy(x)}{\psigmaequiv}{y}\widehat{C}_1$.
    \item[H4B.] $\pnunlocked, \pnstate{f}{\pbusy(x)}{\psnoc{\pnstates}{\sigma}}{\pnstopbuff} \pequiv \mathcal{C}$.
    \item[H4C.] $(\sigma, \psigmaequiv) \in \prel$.
\end{enumerate}

By hypothesis of \textsc{BeginTx} and $\mathit{H1}'$:

\begin{enumerate}
    \item[H5.]$\pstepfn{(\sigma)} = (\sigma', \ptbeginTx)$
\end{enumerate}

By \textsc{BeginTx}, $\mathit{H4C}$ (either sub-case), $\mathit{H5}$ and \cref{def:safety-relation}:

\begin{enumerate}
  \item[H6.]
$\pfstep{\pds{\pkv}{\plfree}\pinsty{f}{\pbusy(x)}{\psigmaequiv}{y} \widehat{C}_1}{}
{\pds{\pkv}{\plowned{y}{\pkv}}\pinsty{f}{\pbusy(x)}{\psigmaequiv'}{y} \widehat{C}_1} \land (\sigma', \psigmaequiv') \in \prel$
\end{enumerate}

Let $\mathcal{C}' = \pds{\pkv}{\plowned{y}{\pkv}}\widehat{C}_1\pinsty{f}{\pbusy(x)}{\psigmaequiv'}{y}$.
Therefore, goals G2--G4 follow by H4A (either sub-case) and H6. Goal G1 follows from the
right-hand side of H1' and $\mathcal{C}'$.

Case complete.

\paragraph{Case \textsc{EndTx}:}
By definition:
\begin{enumerate}
    \item [H1'.]
    $\pnstep{\pnlock{\pkv}{\pnstates'},\langle f,\pbusy(x),\psnoc{\pnstates}{\sigma} \rangle}
             {}
             {\pncommit{\pkv(x)} ,\langle f,\pbusy(x),\psnoc{\pnstates}{\sigma,\sigma'} \rangle}$
    \item [H2'.] $\mathcal{C} = \pds{\pkv'}{\plowned{y}{\pkv}} \pinsty{f}{\psigmaequiv}{\pbusy{(x)}}{y} \widehat{\mathcal{C}}_0$ and
    $\langle f,\pbusy(x),\psnoc{\pnstates}{\sigma} \rangle \pequiv \pinsty{f}{\psigmaequiv}{\pbusy{(x)}}{y} \widehat{\mathcal{C}}_0$ and $(\sigma, \psigmaequiv) \in \prel$.
\end{enumerate}

By hypothesis of \textsc{EndTx} and $\mathit{H1}'$:
\begin{enumerate}
	\item [H4.] $\pstepfn{(\sigma)} = (\sigma', \ptendTx)$
\end{enumerate}

By $H2'$, $H4$ and \cref{def:safety-relation}:
\begin{enumerate}
	\item [H5.] $\pstepfn{(\psigmaequiv)} = (\psigmaequiv', \ptendTx)$
	\item [H6.] $(\sigma', \psigmaequiv') \in \prel$
\end{enumerate}

By $H5$ and the \textsc{EndTx} rule:
\begin{enumerate}
	\item [H7.] $\pds{\pkv'}{\plowned{y}{\pkv}} \pinst{f}{\psigmaequiv}{\pbusy{(x)}} \widehat{\mathcal{C}}_0 \pfsteparr{}
         \pds{\pkv}{\plfree} \pinst{f}{\psigmaequiv'}{\pbusy{(x)}}  \widehat{\mathcal{C}}_0$
\end{enumerate}

Let $\mathcal{C}' = \pds{\pkv}{\plfree} \pinst{f}{\psigmaequiv'}{\pbusy{(x)}}  \widehat{\mathcal{C}}_0$. Therefore, goals G2--G4 follow by H7. Goal G1 follows by H6 and because $\pkv(x) = \pkv(x)$.

Case complete.

\paragraph{Case \textsc{Rollback}:}
By definition:
\begin{enumerate}
	\item [H1'.] $\pnstep{\pnlock{\pkv}{\pnstates'},\langle f,\pbusy(x), \psnoc{\pnstates}{\sigma} \rangle}
                {}
                {\cdot,\langle f,\pbusy(x),\pnstates' \rangle}$
    \item [H2'.] $\mathcal{C} = \pds{\pkv'}{\plowned{y}{\pkv}} \pinsty{f}{\psigmaequiv}{\pbusy{(x)}}{y} \widehat{\mathcal{C}}_0$ and
    $\langle f,\pbusy(x),\psnoc{\pnstates}{\sigma} \rangle \pequiv \pinsty{f}{\psigmaequiv}{\pbusy{(x)}}{y} \widehat{\mathcal{C}}_0$ and $(\sigma, \psigmaequiv) \in \prel$.
\end{enumerate}

Since each $y$ is globally unique:
\begin{enumerate}
	\item [H4.] There does not exist $\pinsty{f}{\sigma''}{\pbusy{(x')}}{y} \in \widehat{\mathcal{C}}_0$, for any $\sigma'', x' \neq x$.
\end{enumerate}

By the \textsc{Die} rule:
\begin{enumerate}
	\item [H5.] $\pfstep{\pds{\pkv'}{\plowned{y}{\pkv}} \pinsty{f}{\psigmaequiv}{\pbusy{(x)}}{y} \widehat{\mathcal{C}}_0}	{}	{\pds{\pkv'}{\plowned{y}{\pkv}}\widehat{\mathcal{C}}_0}$
\end{enumerate}

By $H5$ and the \textsc{DropTx} rule:
\begin{enumerate}
	\item [H6.] $\pfstep{\pds{\pkv'}{\plowned{y}{\pkv}}\widehat{\mathcal{C}}_0}	{}	{\pds{\pkv'}{\plfree}\widehat{\mathcal{C}}_0}$
\end{enumerate}

Let $\mathcal{C}' = \pds{\pkv'}{\plfree}\widehat{\mathcal{C}}_0$. Goals G2--G4 are established by H5 and H6. Since $\langle f,\pbusy(x),\psnoc{\pnstates}{\sigma} \rangle \pequiv \pinsty{f}{\psigmaequiv}{\pbusy{(x)}}{y} \widehat{\mathcal{C}}_0$, there exist no other function instances in $\widehat{\mathcal{C}}_0$ with histories after the \textsc{BeginTx} and $\pnstates'$ contains history only up until the \textsc{BeginTx}, we have that $\langle f,\pbusy(x), \pnstates' \rangle \pequiv \widehat{\mathcal{C}}_0$. Thus, we conclude G1, that $\pds{\pkv'}{\plfree}\widehat{\mathcal{C}}_0 \pequiv \pnunlocked, \langle f,\pbusy(x),\pnstates' \rangle$.

Case complete.

\end{proof}

\begin{theorem}[Backward direction of weak bisimulation with key-value store]
\label{ext-bisim-back}
For all $\pnlockid$, $\pnstateid$, $\mathcal{C}$, $\mathcal{C}'$, $\ell$,
$\precvfn$, $\pstepfn$, if:
\begin{enumerate}
  \item [H1.] $\pfstep{\pds{\pkv_1}{L_1}\mathcal{C}}{\ell}{\pds{\pkv_2}{L_2}\mathcal{C}'}$
  \item [H2.] $\pnlockid$, $\pnstateid \pequiv \mathcal{C}$
  \item [H3.] $\langle f, \sigma_0, \Sigma, \precvfn, \pstepfn \rangle$ satisfies \cref{def:idempotence}
\end{enumerate}

Then there exists $\pnlockid'$ and $\pnstateid'$ such that
\begin{enumerate}
  \item [G1.]  $\pnlockid'$, $\pnstateid' \pequiv \mathcal{C}$
  \item [G2.] $\pnstep{\pnlockid, \pnstateid}{\ell}{\pnlockid', \pnstateid'}$
\end{enumerate}

\end{theorem}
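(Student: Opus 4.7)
The plan is to proceed by case analysis on the \pfname rule applied in the step $\pfstep{\pds{\pkv_1}{L_1}\mathcal{C}}{\ell}{\pds{\pkv_2}{L_2}\mathcal{C}'}$. For each rule, I would exhibit a sequence of \pnname transitions (possibly empty) producing $\pnlockid', \pnstateid'$ with $\pnlockid', \pnstateid' \pequiv \pds{\pkv_2}{L_2}\mathcal{C}'$ under the extended bisimulation of \cref{def:extended-bisimulation-rel}. The cases split naturally into rules that do not touch the key-value store (\textsc{Req}, \textsc{Cold}, \textsc{Warm}, \textsc{Hidden}, \textsc{Resp}, \textsc{Die}) and rules that do (\textsc{BeginTx}, \textsc{Read}, \textsc{Write}, \textsc{EndTx}, \textsc{DropTx}), and within each group I would further branch on whether the active \pfname instance is the witness used in clause~2 of the extended relation.

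For the store-agnostic rules, the argument largely mirrors the backward direction of \cref{thm:Weak Bisimulation}. Rules such as \textsc{Req}, \textsc{Cold}, and \textsc{Warm} correspond to zero \pnname steps, because they only add requests or spin up additional instances, which clause~2 tolerates. A \textsc{Hidden} step applied to the lockstep instance is matched by \textsc{\pnNameRule-Step}, with the command and the resulting state lifted across \cref{def:safety-relation}. A \textsc{Resp} step, by the idempotence hypothesis (\cref{def:idempotence}) together with \cref{lemma:forward-classification}, can only occur after the transaction has committed, so it corresponds to \textsc{\pnNameRule-Buffer-Stop} followed by \textsc{\pnNameRule-Emit-Stop} on the \pnname side.

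For the store-touching rules fired on the in-lockstep instance, \textsc{BeginTx}, \textsc{Read}, \textsc{Write}, and \textsc{EndTx} are each matched by their mirror rule in \pnname; I would use the safety relation to carry the command and the post-state across to \pnname. The \textsc{EndTx} case additionally invokes the idempotence conditions, which guarantee that the committed store contains $v'$ at key $x$, supplying the data required by the $\pncommit{v'}$ side of the extended bisimulation. When the acting \pfname instance is not the lockstep witness, no \pnname step is taken and the relation is preserved because it is agnostic to such background activity.

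The main obstacle is handling \textsc{Die} of the lockstep instance during an active transaction, together with the subsequent \textsc{DropTx}. After \textsc{Die} kills the last matching instance, the request and the lock persist in \pfname while the witness required by clause~2 of \cref{def:extended-bisimulation-rel} is gone, and \textsc{DropTx} (which would release the lock and re-enable clause~1) has not yet fired. My approach is to take \textsc{\pnNameRule-Rollback} in \pnname at the \textsc{Die} step, then associate zero \pnname steps with the subsequent \textsc{DropTx}, restoring the bisimulation only once \textsc{DropTx} releases the lock. Making this transition precise is the delicate point of the proof; it is the symmetric counterpart of the forward-direction Rollback case, in which a single \textsc{\pnNameRule-Rollback} in \pnname was shown to correspond to a \textsc{Die}-then-\textsc{DropTx} pair in \pfname.
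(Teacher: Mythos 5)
Your overall strategy---case analysis on the \pfname rule, mirroring the store-touching rules with their \pnname counterparts, and using idempotence plus \cref{lemma:forward-classification} to argue that \textsc{Resp} can only fire after the commit and is matched by \textsc{\pnNameRule-Buffer-Stop} then \textsc{\pnNameRule-Emit-Stop}---is the same as the paper's. But two of your case assignments do not work. First, you place \textsc{Req} in the ``zero \pnname steps'' group. A \textsc{Req} step carries the observable label $\kw{start}(f,x,v)$, and goal G2 requires a \pnname derivation bearing that same label; a weak bisimulation may only elide silent steps. The paper's \textsc{NewReq} case accordingly extracts from H2 that the \pnname state is idle and matches the step with one application of \textsc{\pnNameRule-Start}. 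Only the silent rules (\textsc{Cold}, \textsc{Warm}, and background \textsc{Hidden}/\textsc{Die} activity on non-witness instances) may be matched by the empty sequence.

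Second, your \textsc{Die}/\textsc{DropTx} pairing is inverted relative to the paper's, and in a way that breaks the invariant rather than merely being delicate. You fire \textsc{\pnNameRule-Rollback} at the \textsc{Die} step and take zero \pnname steps at the later \textsc{DropTx}, conceding that the bisimulation is ``restored only once \textsc{DropTx} releases the lock.'' That concession is fatal: the theorem is stated per single \pfname step, so G1 must be re-established immediately after \textsc{Die}. In the configuration between \textsc{Die} and \textsc{DropTx} the \pfname store is still $\plowned{y}{\pkv'}$ while your \pnname lock state is already $\pnunlocked$, and no clause of \cref{def:extended-bisimulation-rel} relates an unlocked \pnname state to a configuration whose store is still owned. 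The paper does the opposite: at \textsc{Die} it takes zero \pnname steps, leaving the \pnname state inside the transaction (clause~2 with the surviving history), and it matches the subsequent \textsc{DropTx} with \textsc{\pnNameRule-Rollback}, which is the step at which both sides release the lock simultaneously and clause~1 is re-enabled. Your remaining cases (mirror rules for \textsc{Read}, \textsc{Write}, \textsc{BeginTx}, \textsc{EndTx} carried across by the safety relation) agree with the paper.
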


\begin{proof} By case analysis for the relation $\pfstep{\pds{\pkv_1}{L_1}\mathcal{C}}{\ell}{\pds{\pkv_2}{L_2}\mathcal{C}'}$
\paragraph{Case \textsc{NewReq}:} By definition:
\begin{enumerate}
  \item [H4.] $\pkv_1 = \pkv_2 = M$
  \item [H5.] $L_1 = L_2 = .$
  \item [H6.] $\mathcal{C} = \mathcal{C}_0$
  \item [H7.] $\mathcal{C}'= \mathcal{C}_0 \preq{f}{x}{v} \pds{\pkv}{\plfree}$
  \item [H8.] $\kw{start}(x) = v$
  \item [H9.] $x$ is fresh
\end{enumerate}
By $\mathit{H9}$, $\preq{f}{x}{v} \not\in C$. By definition of $\pequiv$ and
$\mathit{H2}$,
\begin{enumerate}
  \item [H10.] $\exists \pnstates \pnstateid = \langle f, \pidle, \pnstates \rangle$
\end{enumerate}
Let $\kw{recv} (v, \sigma_0) = \sigma'$. By \textsc{START}, $\mathit{H9}$ and $\mathit{H4}$:
\begin{enumerate}
  \item [H11.]  $\pnstep{\pnunlocked, \pnstate{f}{\pidle}{\pnstates}{\pnstopbuff}}{\kw{start}(x, v)}{\pnunlocked, \pnstate{f}{\pbusy(x)}{\left[\sigma_0, \sigma' \right]}{\pnstopbuff}}$
\end{enumerate}
By \cref{def:extended-bisimulation-rel},
\begin{enumerate}
  \item [H12.] $\pnstate{f}{\pbusy(x)}{\left[\sigma_0, \sigma' \right]}{\pnstopbuff} \pequiv \mathcal{C'}$
\end{enumerate}
Let $\pnstateid' = \pnstate{f}{\pbusy(x)}{\left[\sigma_0, \sigma' \right]}{\pnstopbuff}$
and $\pnlockid' = \pnunlocked$, thus $\mathit{H12}$ and $\mathit{H11}$ establish goal
$\mathit{G1}$ and $\mathit{G2}$.

\paragraph{Case \textsc{Cold}:} By definition:
\begin{enumerate}
  \item [H4.] $\pkv_1 = \pkv_2 = M$
  \item [H5.] $L_1 = L_2 = .$
  \item [H6.] $\mathcal{C} = \mathcal{C}_0 \preq{f}{x}{v}$
  \item [H7.] $\mathcal{C}'= \mathcal{C}_0 \preq{f}{x}{v}\pinst{f}{\pbusy(x)}{\sigma'}$
  \item [H8.] $\kw{recv}(\sigma_0, v) = \sigma'$
\end{enumerate}
Since, $\pnstate{f}{m}{\pnstates}{\pnstopbuff} \pequiv \mathcal{C}_0\preq{f}{x}{v}$,
\begin{enumerate}
  \item [H9.] $m$ = $\pbusy(x)$
  \item [H10.] $\forall \pinst{f}{\pbusy(x)}{\sigma'}.\sigma' \in_{\prel} \pnstates$
\end{enumerate}
Let $\pnstateid' = \pnstate{f}{\pbusy(x)}{\pnstates}{\pnstopbuff}$, hence,
\begin{enumerate}
    \item [H11.] $\pnstep{\pnstate{f}{\pbusy(x)}{\pnstates}{\pnstopbuff}}{0}{\pnstate{f}{\pbusy(x)}{\pnstates}{\pnstopbuff}}$
\end{enumerate}
Since, $\sigma_0 \in_\prel \pnstates$, there exists $\sigma_0' \in \pnstates$, such that
\begin{enumerate}
  \item [H12.] $(\sigma_0', \sigma_0) \in \prel$.
\end{enumerate}
By $\mathit{H12}$,
\begin{enumerate}
  \item [H13.] $\kw{recv}(\sigma_0', v) = \sigma$, s.t. $\sigma \in \pnstates$ and $(\sigma, \sigma') \in \prel$
\end{enumerate}
By $\mathit{H13}$ and \cref{def:extended-bisimulation-rel}:
\begin{enumerate}
    \item [H14.] $\pnstate{f}{\pbusy(x)}{\pnstates}{\pnstopbuff} \pequiv {\mathcal{C}_0\preq{f}{x}{v}\pinst{f}{\pbusy(x)}{\sigma}}$
\end{enumerate}
Finally, $\mathit{H14}$ and $\mathit{H11}$ establish goal $\mathit{G1}$ and $\mathit{G2}$.

\paragraph{Case \textsc{Warm}:} By definition
\begin{enumerate}
  \item [H4.] $\pkv_1 = \pkv_2 = M$
  \item [H5.] $L_1 = L_2 = .$
  \item [H6.] $\mathcal{C} = \mathcal{C}_0 \preq{f}{x}{v} \pinst{f}{\pidle}{\sigma}$
  \item [H7.] $\mathcal{C}'= \mathcal{C}_0 \preq{f}{x}{v}\pinst{f}{\pbusy(x)}{\sigma'}$
  \item [H8.] $\kw{recv}(\sigma, v) = \sigma'$
\end{enumerate}
Since, $\pnstate{f}{m}{\pnstates}{\pnstopbuff} \pequiv \mathcal{C}_0\preq{f}{x}{v}\pinst{f}{\pidle}{\sigma}$,
\begin{enumerate}
  \item [H8.] $m$ = $\pbusy(x)$
  \item [H9.] $\forall \pinst{f}{\pbusy(x)}{\sigma''}.\sigma'' \in_{\prel} \pnstates$
\end{enumerate}
Let $\pnstateid' = \pnstate{f}{\pbusy(x)}{\pnstates}{\pnstopbuff}$, hence,
\begin{enumerate}
    \item [H10.] $\pnstep{\pnstate{f}{\pbusy(x)}{\pnstates}{\pnstopbuff}}{0}{\pnstate{f}{\pbusy(x)}{\pnstates}{\pnstopbuff}}$
\end{enumerate}
$\mathbb{F}$ can transition from $\pbusy$ to $\pidle$ only after
a $\textsc{RESPONSE}$ rule, and a $\textsc{WARM}$ rule can be applied only when
$\mathbb{F}$ is in $\pidle$ state. Hence, $(\sigma, \sigma_0) \in \prel$.\\
Since $\sigma \in_\prel \pnstates$, it holds that $\sigma_0 \in \pnstates$ and $(\sigma_0, \sigma) \in \prel$. Then:
\begin{enumerate}
  \item [H13.] $\kw{recv}(\sigma_0, v) = \sigma''$, s.t. $\sigma'' \in \pnstates$ and $(\sigma'', \sigma') \in \prel$
\end{enumerate}
By $\mathit{H13}$ and \cref{def:extended-bisimulation-rel}:
\begin{enumerate}
    \item [H14.] $\pnstate{f}{\pbusy(x)}{\pnstates}{\pnstopbuff} \pequiv {\mathcal{C}_0\preq{f}{x}{v}\pinst{f}{\pbusy(x)}{\sigma'}}$
\end{enumerate}
Finally, $\mathit{H10}$ and $\mathit{H14}$ establish goal $\mathit{G1}$ and $\mathit{G2}$.
\paragraph{Case \textsc{Hidden}:} By definition
\begin{enumerate}
  \item [H4.] $\pkv_1 = \pkv_2 = M$
  \item [H5.] $L_1 = L_2 = .$
  \item [H6.] $\mathcal{C} = \mathcal{C}_0 \preq{f}{x}{v} \pinst{f}{\pidle}{\sigma}$
  \item [H7.] $\mathcal{C}'= \mathcal{C}_0 \preq{f}{x}{v}\pinst{f}{\pbusy(x)}{\sigma'}$
\end{enumerate}
\begin{enumerate}
    \item [H8.] $\pnstate{f}{\pbusy(x)}{
    \pnstates}{\pnstopbuff} \pequiv \mathcal{C}_0\pinst{f}{\pbusy(x)}{\sigma}$ such that $\forall \pinst{f}{\pbusy(x)}{\sigma} \in C$. $\sigma \in_\prel \pnstates$
\end{enumerate}
By \cref{lemma:three}
\begin{enumerate}
    \item [H9.] $\pfstep{\mathcal{C}_0\pinst{f}{\pbusy(x)}{\sigma}}{}{\mathcal{C}_0\pinst{f}{\pbusy(x)}{\sigma''}}$ such that $(\sigma'',last(\pnstates)) \in \prel$
\end{enumerate}
By $\mathit{H1}$ and \textsc{Hidden}
\begin{enumerate}
    \item [H10.] $\pfstep{\mathcal{C}\pinst{f}{\pbusy(x)}{\sigma''}}{} {\mathcal{C}\pinst{f}{\pbusy(x)}{\sigma'}}$
\end{enumerate}
By \textsc{\pnNameRule-Step}
\begin{enumerate}
    \item [H11.] $\pnstep{\pnlockid, \pnstate{f}{\pbusy(x)}{\pnstates}{\pnstopbuff}}{}{\pnlockid, \pnstate{f}{\pbusy(x)}{\pnstates + [\sigma']}{\pnstopbuff}}$
\end{enumerate}
Let $\pnlockid' = \pnlockid$ and $\pnstateid' = \pnstate{f}{\pbusy(x)}{\pnstates + [\sigma']}{\pnstopbuff}$.
Finally, $\mathit{0}$ and $\mathit{H11}$ establish goal $\mathit{G1}$ and $\mathit{G2}$.
\paragraph{Case \textsc{Respond}:} By definition
\begin{enumerate}
  \item [H4.] $\pkv_1 = \pkv_2 = M$
  \item [H5.] $L_2 = \plfree$
  \item [H6.] $\mathcal{C} = \mathcal{C}_0 \preq{f}{x}{v} \pinst{f}{\pbusy(x)}{\sigma_C}$
  \item [H7.] $\mathcal{C}'= \mathcal{C}_0 \pinst{f}{\pbusy(x)}{\sigma_C'}\presp{x}{v}$
  \item [H8.] $\kw{step(\sigma_C)} = (\sigma_C', \ptreturn{v})$
\end{enumerate}

By \cref{def:extended-bisimulation-rel}:
\begin{enumerate}
	\item [H9.] $\pnlockid = \pncommit{M(x)}$
	\item [H10.] $\pnstateid = \pnstate{f}{\pbusy(x)}{\pnstates}{\pnstopbuff}$
\end{enumerate}

Let $\sigma_N = \plast{\pnstates}$. By H3 satisfying \cref{def:idempotence}, $\pstepfn_f(\sigma_C)$ can produce $\ptreturn{v}$
only when $L_1 = \plfree$ and $(\sigma_N, \sigma_C) \in \prel$.

%

By \textsc{\pnNameRule-Buffer-Stop}:
\begin{enumerate}
  \item [H13.] $\pnstep{\pncommit{M(x)}, \pnstate{f}{\pbusy(x)}{\psnoc{\pnstates}{\sigma_N}}{\pnstopbuff}}{}{\pnunlocked, \pnstate{f}{\pidle}{[\sigma_N']}{\pnstopbuff \cup \{(x, v))\}}}$ with $\pstepfn_f(\sigma_N) = (\sigma_N', \ptreturn{v})$
\end{enumerate}

By \textsc{\pnNameRule-Emit-Stop}:
\begin{enumerate}
  \item [H14.] $\pnstep{\pncommit{M(x)}, \pnstate{f}{\pidle}{[\sigma_N']}{\pnstopbuff}\cup \{(x, v))\}}{\kw{stop(x, v)}}{\pnunlocked, \pnstate{f}{\pidle}{[\sigma_N']}{\pnstopbuff}}$
\end{enumerate}
Let $\pnstateid' = \pnstate{f}{\pidle}{[\sigma_N']}{\pnstopbuff}$. By \cref{def:extended-bisimulation-rel} and $\mathit{H14}$ we get $\mathit{G1}$ and $\mathit{G2}$.
\paragraph{Case \textsc{Die}:} By definition
\begin{enumerate}
  \item [H4.] $\mathcal{C} = \mathcal{C}_0 \pinst{f}{m}{\sigma}$
  \item [H5.] $\mathcal{C'} = \mathcal{C}_0$
\end{enumerate}
Sub-Case $m = \pidle$\\
  By definition
  \begin{enumerate}
    \item [H6.] $L_1 = L_2 = \plfree$
    \item [H7.] $\pnlockid' = \pnunlocked$
  \end{enumerate}
  Let $\pnstateid' = \pnstate{f}{\pidle}{\sigma}{\pnstopbuff}$.
  By \cref{def:extended-bisimulation-rel}
  and $\mathit{H7}$ we get $\mathit{G1}$ and $\mathit{G2}$.\\
Sub-Case Complete\\
Sub-Case $m = \pbusy(x)$: By \cref{def:extended-bisimulation-rel}
  \begin{enumerate}
    \item [H9.] $L_1 = \plowned{y}{M}$
    \item [H10.] $\pnlockid = \pnlock{M}{\pnstates}$
    \item [H11.] $\pnlockid, \pnstate{f}{\pbusy(x)}{\psnoc{\pnstates}{\sigma}}{\pnstopbuff} \pequiv \pds{M}{\plowned{y}{M}} \mathcal{C}$
  \end{enumerate}
  Let $\mathcal{C}_0 = \mathcal{C}_1 \preq{f}{x}{v}$.
  Let $\pnstateid' = \pnstate{f}{\pbusy(x)}{\psnoc{\pnstates}{\sigma}}{\pnstopbuff}$ and
  $\pnlockid' = \pnlock{M}{\pnstates}$. Hence, by definition of $\pequiv$
  we get $\mathit{G1}$ and $\mathit{G2}$.
Sub-Case Complete
\paragraph{Case \textsc{Read}:} By definition:
\begin{enumerate}
  \item [H4.] $\pkv_1 = \pkv_2 = M$
  \item [H5.] $L_1 = L_2 = \plowned{y}{M'}$
  \item [H6.] $\mathcal{C} = \mathcal{C}_0 \pinst{f}{\sigma}{\pbusy{(x)}} \pds{\pkv}{\plowned{y}{\pkv'}}$
  \item [H7.] $\mathcal{C}'= \mathcal{C}_0 \pinst{f}{\sigma''}{\pbusy{(x)}} \pds{\pkv}{\plowned{y}{\pkv'}}$
  \item [H8.] $\pstepfn(\sigma) = (\sigma', \ptread{k})$
  \item [H9.] $\precvfn(\pkv'(k), \sigma') = \sigma''$
\end{enumerate}
By \textsc{\pnNameRule-Read}:
\begin{enumerate}
  \item [H8.] $\pnlockid = \pnlock{\pkv}{\pnstates'}$
  \item [H9.] $\pnstateid = \langle f,\pbusy(x),\psnoc{\pnstates}{\sigma} ,\pnstopbuff\rangle$
  \item [H10.] $\pnlockid' = \pnlock{\pkv}{\pnstates'}$
  \item [H11.] $\pnstateid' = \langle f,\pbusy(x),\psnoc{\pnstates}{\sigma, \sigma', \sigma''}, \pnstopbuff\rangle$
\end{enumerate}
By \cref{def:extended-bisimulation-rel}, $\mathit{H10}$, and $\mathit{H11}$, we have,
$\mathit{G1}$ and $\mathit{G2}$.

\paragraph{Case \textsc{Write}:} By definition:
\begin{enumerate}
  \item [H4.] $\pkv_1 = M$
  \item [H5.] $\pkv_2 = M''$
  \item [H6.] $L_1 = \plowned{y}{M'}$
  \item [H7.] $L_2 = \plowned{y}{M''}$
  \item [H8.] $\mathcal{C} = \mathcal{C}_0 \pinst{f}{\sigma}{\pbusy{(x)}} \pds{\pkv}{\plowned{y}{\pkv'}}$
  \item [H9.] $\mathcal{C}'= \mathcal{C}_0 \pinst{f}{\sigma'}{\pbusy{(x)}} \pds{\pkv}{\plowned{y}{\pkv''}}$
  \item [H10.] $\pstepfn(\sigma) = (\sigma', \ptwrite{k}{v})$
  \item [H11.] $\pkv'' = \pkv'[k \mapsto v]$
\end{enumerate}
By \textsc{\pnNameRule-Write}:
\begin{enumerate}
  \item [H8.] $\pnlockid = \pnlock{\pkv'}{\pnstates'}$
  \item [H9.] $\pnstateid = \langle f,\pbusy(x),\psnoc{\pnstates}{\sigma},\pnstopbuff \rangle$
  \item [H10.] $\pnlockid' = \pnlock{\pkv''}{\pnstates'}$
  \item [H11.] $\pnstateid' = \langle f,\pbusy(x),\psnoc{\pnstates}{\sigma, \sigma'},\pnstopbuff \rangle$
\end{enumerate}
By \cref{def:extended-bisimulation-rel}, $\mathit{H10}$, and $\mathit{H11}$, we have,
$\mathit{G1}$ and $\mathit{G2}$.

\paragraph{Case \textsc{BeginTx}:} By definition:
\begin{enumerate}
  \item [H4.] $\pkv_1 = M$
  \item [H5.] $\pkv_2 = M$
  \item [H6.] $L_1 = \plfree$
  \item [H7.] $L_2 = \plowned{y}{M}$
  \item [H8.] $\mathcal{C} = \mathcal{C}_0 \pinst{f}{\sigma}{\pbusy{(x)}} \pds{\pkv}{\plfree}$
  \item [H9.] $\mathcal{C}'= \mathcal{C}_0 \pinst{f}{\sigma'}{\pbusy{(x)}} \pds{\pkv}{\plowned{y}{\pkv}}$
  \item [H10.] $\pstepfn(\sigma) = (\sigma', \ptbeginTx)$
\end{enumerate}
By \textsc{\pnNameRule-BeginTx}:
\begin{enumerate}
  \item [H11.] $\pnlockid = \pnunlocked$
  \item [H12.] $\pnstateid = \langle f,\pbusy(x),\psnoc{\pnstates}{\sigma},\pnstopbuff \rangle$
  \item [H13.] $\pnlockid' = \pnlock{\pkv}{\psnoc{\pnstates}{\sigma}}$
  \item [H14.] $\pnstateid' = \langle f,\pbusy(x),\psnoc{\pnstates}{\sigma, \sigma'},\pnstopbuff \rangle$
\end{enumerate}
By \cref{def:extended-bisimulation-rel}, $\mathit{H13}$, and $\mathit{H14}$, we have,
$\mathit{G1}$ and $\mathit{G2}$.

\paragraph{Case \textsc{EndTx}:} By definition:
\begin{enumerate}
  \item [H4.] $\pkv_1 = M$
  \item [H5.] $\pkv_2 = M'$
  \item [H6.] $L_1 = \plowned{y}{M'}$
  \item [H7.] $L_2 = \plfree$
  \item [H8.] $\mathcal{C} = \mathcal{C}_0 \pinst{f}{\sigma}{\pbusy{(x)}} \pds{\pkv}{\plfree}$
  \item [H9.] $\mathcal{C}'= \mathcal{C}_0 \pinst{f}{\sigma'}{\pbusy{(x)}} \pds{\pkv}{\plowned{y}{\pkv}}$
  \item [H10.] $\pstepfn(\sigma) = (\sigma', \ptendTx)$
\end{enumerate}
By \textsc{\pnNameRule-EndTx}:
\begin{enumerate}
  \item [H11.] $\pnlockid = \pnlock{\pkv'}{\pnstates'}$
  \item [H12.] $\pnstateid = \langle f,\pbusy(x),\psnoc{\pnstates}{\sigma},\pnstopbuff \rangle$
  \item [H13.] $\pnlockid' = \pncommit{\pkv (x)}$
  \item [H14.] $\pnstateid' = \langle f,\pbusy(x),\psnoc{\pnstates}{\sigma, \sigma'},\pnstopbuff \rangle$
\end{enumerate}
By \cref{def:extended-bisimulation-rel} $\mathit{H13}$, and $\mathit{H14}$, we have,
$\mathit{G1}$ and $\mathit{G2}$.

\paragraph{Case \textsc{DropTx}:} By definition:
\begin{enumerate}
  \item [H4.] $\pkv_1 = M'$
  \item [H5.] $\pkv_2 = M$
  \item [H6.] $L_1 = \plowned{y}{M'}$
  \item [H7.] $L_2 = \plfree$
  \item [H8.] $\mathcal{C} = \mathcal{C}_0 \pinst{f}{\sigma}{\pbusy{(x)}} \pds{\pkv}{\plfree}$
  \item [H9.] $\mathcal{C}'= \mathcal{C}_0 \pinst{f}{\sigma'}{\pbusy{(x)}} \pds{\pkv}{\plowned{y}{\pkv}}$
\end{enumerate}
By \textsc{\pnNameRule-Rollback}:
\begin{enumerate}
  \item [H10.] $\pnlockid = \pnlock{\pkv'}{\pnstates'}$
  \item [H11.] $\pnstateid = \langle f,\pbusy(x),\pnstates,\pnstopbuff \rangle$
  \item [H12.] $\pnlockid' = \pnunlocked$
  \item [H13.] $\pnstateid' = \langle f,\pbusy(x),\pnstates',\pnstopbuff \rangle$
\end{enumerate}
By \cref{def:extended-bisimulation-rel}, $\mathit{H12}$, and $\mathit{H13}$, we have,
$\mathit{G1}$ and $\mathit{G2}$.
\end{proof}

\fi

\end{document}